\newtheorem{theorem}{Theorem}
\newtheorem{lemma}[theorem]{Lemma}
\newtheorem{definition}[theorem]{Definition}
\newtheorem{proposition}[theorem]{Proposition}
\newtheorem{remark}[theorem]{Remark}
\def\ln{{\rm ln}}
\begin{document}

\title{Distributed Kalman Filtering over Massive Data Sets: Analysis Through Large Deviations of Random Riccati Equations}

\author{Di~Li,~\IEEEmembership{Student Member,~IEEE,}
        Soummya~Kar,~\IEEEmembership{Member,~IEEE,}
        Jos\'e M. F. Moura,~\IEEEmembership{Fellow,~IEEE,} \\
        H. Vincent Poor,~\IEEEmembership{Fellow,~IEEE,}
        Shuguang Cui,~\IEEEmembership{Fellow,~IEEE}
        \thanks{D. Li and S. Cui are with the Department of Electrical and Computer Engineering, Texas A\&M University, College Station, TX 77843 USA (e-mail: dili@tamu.edu; cui@tamu.edu). S. Cui is also a Distinguished Adjunct Professor at King Abdulaziz University in Saudi Arabia and a visiting professor at ShanghaiTech University, Shanghai, China. The work of D. Li and S. Cui was supported in part by DoD with grant HDTRA1-13-1-0029, by NSF with grants CNS-1343155, ECCS-1305979, and CNS-1265227, and by grant NSFC-61328102.}
        \thanks{S. Kar and J. M. F. Moura are with the Department of Electrical and Computer Engineering, Carnegie Mellon University, Pittsburgh, PA 15213 USA (e-mail: soummyak@andrew.cmu.edu; moura@ece.cmu.edu). The work of J. M. F. Moura was partially supported by NSF grants CCF-1101903 and CCF-1108509. The work of S. Kar was supported in part by NSF under grants ECCS-1306128 and ECCS-1408222}
        \thanks{H. V. Poor is with the Department of Electrical Engineering, Princeton University, Princeton, NJ 08544 USA (e-mail: poor@princeton.edu). The work of H. V. Poor was supported in part by NSF under grants DMS-1118605 and ECCS-1343210.}
       \thanks{Part of this work was presented at ICASSP'11. The correspondence author of this paper is S. Cui.} 
       \thanks{ Copyright (c) 2014 IEEE. Personal use of this material is permitted.  However, permission to use this material for any other purposes must be obtained from the IEEE by sending a request to pubs-permissions@ieee.org.}}
\maketitle

\begin{abstract}
This paper studies the convergence of the estimation error process and the characterization of the corresponding invariant measure in distributed Kalman filtering for potentially unstable and large linear dynamic systems. A gossip network protocol termed Modified Gossip Interactive Kalman Filtering (M-GIKF) is proposed, where sensors exchange their filtered states (estimates and error covariances) and propagate their observations via inter-sensor communications of rate $\overline{\gamma}$; $\overline{\gamma}$ is defined as the averaged number of inter-sensor message passages per signal evolution epoch. 
The filtered states are interpreted as stochastic particles swapped through local interaction. The paper shows that the conditional estimation error covariance sequence at each sensor under M-GIKF evolves as a random Riccati equation (RRE) with Markov modulated switching. By formulating the RRE as a random dynamical system, it is shown that the network achieves weak consensus, i.e., the conditional estimation error covariance at a randomly selected sensor converges weakly (in distribution) to a unique invariant measure. Further, it is proved that as $\overline{\gamma} \rightarrow \infty$ this invariant measure satisfies the Large Deviation (LD) upper and lower bounds, implying that this measure converges exponentially fast (in probability) to the Dirac measure $\delta_{P^*}$, where $P^*$ is the stable error covariance of the centralized (Kalman) filtering setup. The LD results answer a fundamental question on how to quantify the rate at which the distributed scheme approaches the centralized performance as the inter-sensor communication rate increases.
\end{abstract}

\textbf{Keywords:} Distributed signal processing, massive data sets, gossip, Kalman filter, consensus, random dynamical systems, random algebraic Riccati equation, large deviations.
%
\section{Introduction}
\label{introduction}

\subsection{Background and Motivation}
\label{backmot}
%

For distributed estimation in a wireless sensor network \cite{Boyd-average,Olfati-topo}, multiple spatially distributed sensors collaborate to estimate the system state of interest, without the support of a central fusion center due to physical constraints such as large system size and limited communications infrastructure. Specifically, each sensor makes local partial observations and communicates with its neighbors to exchange certain information, in order to enable this collaboration. Due to its scalability for large systems and robustness to sensor failures, distributed estimation techniques find promising and wide applications including in battlefield surveillance, environment sensing, or power grid monitoring. Especially in the era of big data and large systems, which usually require overwhelming computation if implemented in centralized fashion, distributed schemes become critical since they can decompose the computational burden into local parallel procedures. A principal challenge in distributed sensing, and in distributed estimation in particular, is to design the distributed algorithm to achieve reliable and mutually agreeable estimation results across all sensors, without the help of a central fusion center. Further prior work addressing the above concerns is found in \cite{Schizas-Adhoc-consensus} and \cite{Kar-Linkfailure}, with detailed surveys in \cite{Olfati-Consensus,Dimakis-Survey} and the literature cited therein.

This paper studies the Modified Gossip Interactive Kalman Filtering (M-GIKF) for distributed estimation over potentially big data sets generated by a large dynamical system, in which each sensor observes only a portion of the large process, such that, if acting alone, no sensors can successfully resolve the entire system. The M-GIKF is fundamentally different from other distributed implementations of the Kalman filter, such as \cite{raowhyte-91}, \cite{olfati-saber:2005}, \cite{Khan-Moura}, and \cite{Schenato-Kalman}, which usually employ some form of averaging on the sensor observations/estimations through linear consensus or distributed optimization techniques. In \cite{raowhyte-91}, decentralization of the Kalman filtering algorithm is realized, where each node implements its own Kalman filter, broadcasts its estimate to every other node, and then assimilates the received information to reach certain agreement. In \cite{olfati-saber:2005}, the author proposes an approximate distributed Kalman filtering algorithm by decomposing the central Kalman filter into $n$ micro Kalman filters with inputs obtained by two consensus filters over the measurements and inverse covariance matrices. In \cite{Khan-Moura}, distributed Kalman filtering is derived for large-scale systems, where low-dimensional local Kalman filtering is achieved by spatially decomposing the large-scale system and adopting bipartite fusion graphs and consensus averaging algorithms. In \cite{Schenato-Kalman}, the authors formulate distributed Kalman filtering for a scalar system as an optimization problem to minimize the trace of the asymptotic error covariance matrix and study the interaction among the consensus matrices, the number of messages exchanged, and the Kalman gains. Single time-scale distributed approaches, i.e., in which only one round of inter-sensor message exchange is permitted per observation sampling epoch, are considered in~\cite{Khan-connectivity},~\cite{Park-augmented}. The distributed Kalman filtering algorithm in~\cite{Khan-connectivity} involves a dynamic consensus mechanism in which at every observation sampling round each sensor updates its local estimate of the system state by combining a neighborhood consensus cooperation term (based on a single round of inter-agent message exchange) with a local innovation term (based on the new observation data sensed). The resulting distributed algorithm can track unstable dynamics with bounded mean-squared error (MSE) as long as the degree of instability of the dynamics is within a so called Network Tracking Capacity (NTC) of the agent network. A generic characterization of agent networks in which the above dynamic consensus based algorithm provides tracking with bounded MSE is provided in~\cite{Khan-Topo}, where the authors employ structural system theoretic tools to obtain conditions on the communication topology and sensing model structure that guarantee tracking with bounded MSE. Another class of dynamic consensus type distributed observers/estimators has been proposed in~\cite{Park-augmented}, in which, in addition to updating their local state estimates, the agents propagate an additional \emph{augmented} state in a distributed fashion. Conditions on local innovation gain selection and coupling between the estimate and augmented state updates were obtained that guarantee stable tracking performance. More recently, an extension of the algorithm in~\cite{Khan-connectivity} is proposed in~\cite{Das-Asilomar}, which performs dynamic consensus on \emph{pseudo-innovations}, a modified version of the innovations, to improve estimation performance. A conceptually different single time-scale distributed Kalman filtering scheme was considered in~\cite{Riccati-weakcons}, in which inter-agent cooperation was obtained by randomized estimate swapping among neighboring agents. Under rather weak assumptions on the detectability of the global sensing model and connectivity of the inter-agent communication network, the algorithm in~\cite{Riccati-weakcons} was shown to yield stochastically bounded estimation error at each agent. Moreover, the conditional error covariance at each agent was shown to converge to a stationary distribution of an associated random Riccati equation.

In contrast, the proposed M-GIKF achieves sensor collaboration by exchanging local estimation states and propagating observations between neighbor sensors. In M-GIKF, each sensor runs a local Kalman filter. At each signal evolution epoch, each sensor first randomly selects a neighbor with which to exchange its state (their local Kalman filter state estimate and conditional error covariance), then propagates its observations to randomly selected neighbors, and lastly updates the estimate based on the received states and accumulated observations. This kind of collaboration through state exchange and observation propagating occurs distributedly and randomly, being controlled by the random network topology provided by an underlying gossip protocol. In M-GIKF, we assume that the communication channels among neighbors are ideal, implying that we precisely convey the sensor states and observations without distortion. The M-GIKF scheme introduced in this paper generalizes the (GIKF) scheme introduced in our prior work \cite{Riccati-weakcons}, in which inter-sensor communication and signal evolution operate at the same time scale such that only sensor states are exchanged at each signal evolution epoch; in contrast, the M-GIKF scheme is a multi-time scale algorithm in which at each signal evolution epoch the agents cooperate through a single round communication of states exchange and the additional communication at a predefined rate $\overline{\gamma}$ (informally, $\overline{\gamma}$ denotes the average number of additional network communications per signal evolution epoch) to disseminate observations according to a randomized gossip protocol.

After establishing the model for the M-GIKF, we study its conditional estimation error covariance properties. 
We show that the sensor network achieves weak consensus for each $\overline{\gamma}>0$, i.e., the conditional estimation error covariance at a randomly selected sensor converges weakly (in distribution) to a unique invariant measure of an associated random Riccati equation. To prove this, we interpret the filtered state at each sensor, including state estimate and error covariance, as a stochastic particle and interpret the travelling process of filtered states among sensors as a Markov process. In particular, the sequence of travelling states or particles evolves according to a switched system of random Riccati operators, where the switching is dictated by a nonstationary Markov chain on the network graph. We formulate the corresponding random Riccati equation (RRE) as a Random Dynamical System (RDS) and establish the asymptotic distributional properties of the RRE sequence based on the properties of RDSs, where we show that the sequence of RREs converges weakly to an invariant measure.

The GIKF proposed in our prior work \cite{Riccati-weakcons} is a simpler version of M-GIKF without observation propagation; \cite{Riccati-weakcons} shows that the error process is stochastically bounded and the network achieves weak consensus. The detailed characterization of this invariant measure was not established. In this paper, we characterize such an invariant measure denoted as $\mu^{\overline{\gamma}}$, which is the counterpart of the unique fixed point $P^*$ of the error covariance sequence in centralized Kalman filtering \cite{kalman}. As $\overline{\gamma} \rightarrow \infty$, we further prove that the measure $\mu^{\overline{\gamma}}$ approaches the Dirac measure $\delta_{P^*}$, and $\mu^{\overline{\gamma}}$ satisfies the Large Deviation (LD) upper and lower bounds. The LD property of $\mu^{\overline{\gamma}}$ implies that the probability of a rare event (the event of staying away from an arbitrary small neighborhood of $P^*$) decays exponentially; in other words, the convergence of $\mu^{\overline{\gamma}}$ to $\delta_{P^*}$ is exponentially fast in probability.

In contrast, our previous work in \cite{Riccati-moddev} only provides the Moderate Deviation property of the RRE, where the RRE arises in Kalman filtering with intermittent observations, a problem discussed in \cite{BSinopoli-intermittent}, where the sensor observation packets, transmitted through an imperfect communication medium, are received at the estimator as a Bernoulli process with arrival probability $\gamma > 0$. In this case, the Moderate Derivation shows that the probability of a rare event decays as a power law of $(1-\gamma)$ for $\gamma \rightarrow 1$. Such setup and result are fundamentally different from those in this paper, because Kalman filtering with intermittent observations discussed in \cite{BSinopoli-intermittent} and \cite{Riccati-moddev} considers only the local algorithm at each sensor without inter-sensor communications.

The rest of this paper is organized as follows. Section~\ref{notprel} presents the notation and preliminaries. Section~\ref{prob_form} sets up the problem, reviews the previous GIKF, introduces the M-GIKF, gives an example of the distributed observation dissemination protocol, and establishes an interactive particle interpretation and the RDS formulation of the switching iterates of the RRE. Section~\ref{main_res} presents our main results regarding the convergence of the M-GIKF and the LD property for the resulting invariant measure. Section~\ref{app_res} provides some intermediate results for the Riccati equation.  Sections~\ref{ld-inv} and~\ref{proof_th:ldp} discuss the steps to prove the LD property. Section~\ref{simulation} presents the simulation results and Section~\ref{conclusion} concludes the paper.


\subsection{Notation and Preliminaries}
\label{notprel}
Denote by: $\mathbb{R}$, the reals; $\mathbb{R}^{M}$, the $M$-dimensional Euclidean space; $\mathbb{T}$, the integers; $\mathbb{T}_{+}$, the non-negative integers; $\mathbb{N}$, the natural numbers; and $\mathcal{X}$, a generic space.
For a subset $B\subset \mathcal{X}$, $\mathbb{I}_{B}:\mathcal{X}\longmapsto\{0,1\}$ is the indicator
function, which is~$1$ when the argument is in~$B$ and zero otherwise; and $id_{\mathcal{X}}$ is the identity function on
$\mathcal{X}$. For a set $\Gamma \subset \mathcal{X}$, we denote by $\Gamma^\circ$ and $\overline{\Gamma}$ its interior and closure, respectively. For $x \in \mathcal{X}$, the open ball of radius $\varepsilon > 0$ centered at $x$ is denoted by $B_{\varepsilon}(x)$.


\textbf{Cones in partially ordered Banach spaces and probability measures on metric spaces:}
For this part of notation and definitions, we refer the reader to Section I in \cite{Riccati-weakcons}.

\textbf{Limit:}
Let $f:\mathbb{R}\longmapsto\mathbb{R}$ be a measurable function. The notation $\lim_{z\rightarrow x}f(z)=y$ implies that, for every sequence $\{z_{n}\}_{n\in\mathbb{N}}$ in $\mathbb{R}$ with $\lim_{n\rightarrow\infty}|z_{n}-x|=0$, we have $\lim_{n\rightarrow\infty}|f(z_{n})-y|=0$.
The notation $\lim_{z\uparrow x}f(z)=y$ implies that for every sequence $\{z_{n}\}_{n\in\mathbb{N}}$ in $\mathbb{R}$ with $z_{n}<x$ and $\lim_{n\rightarrow\infty}|z_{n}-x|=0$, we have $\lim_{n\rightarrow\infty}|f(z_{n})-y|=0$.
\textbf{Large Deviations:}
\label{moddev100}
Let $\left\{\mathbb{\mu}^{\overline{\gamma}}\right\}$ be a family of probability measures on the complete separable metric space $(\mathcal{X},d_{\mathcal{X}})$ indexed by the real-valued parameter $\overline{\gamma}$ taking values in $\mathbb{R}_{+}$.
Let $\overline{I}:\mathcal{X}\longmapsto\overline{\mathbb{R}}_{+}$ be an extended-valued lower semicontinuous function.
The family $\left\{\mathbb{\mu}^{\overline{\gamma}}\right\}$ is said to satisfy a large deviations upper bound with rate function $\overline {I} (\cdot)$ if the following holds:
\begin{equation}
\label{moddev2}
\limsup_{\overline{\gamma}\rightarrow\infty}\frac{1}{\overline{\gamma}}\ln\mathbb{\mu}^{\overline{\gamma}}\!\left(\mathcal{F}\right)\!\leq\! -\!\inf_{X\in\mathcal{F}}\overline{I}(X),~\mbox{for every closed set $\mathcal{F}\!\in\!\mathcal{X}$}.
\end{equation}
Similarly, for an extended-valued lower semicontinuous function $\underline{I}:\mathcal{X}\longmapsto\overline{\mathbb{R}}_{+}$, the family $\left\{\mathbb{\mu}^{\overline{\gamma}}\right\}$ is said to satisfy a large deviations lower bound with rate function $\underline{I}(\cdot)$, if
\begin{equation}
\label{moddev1}
\liminf_{\overline{\gamma}\rightarrow\infty}\frac{1}{\overline{\gamma}}\ln\mathbb{\mu}^{\overline{\gamma}}\left(\mathcal{O}\right)\!\geq\! -\inf_{X\!\in\!\mathcal{O}}\underline{I}(X),~\mbox{for every open set $\mathcal{O}\in\mathcal{X}$}.
\end{equation}
In addition, if the functions $\overline{I}$ and $\underline{I}$ coincide, i.e., $\overline{I}=\underline{I}=I$, the family $\left\{\mathbb{\mu}^{\overline{\gamma}}\right\}$ is said to satisfy a large deviations principle (LDP) with rate function $I(\cdot)$ (see~\cite{DeuschelStroock}).
The lower semicontinuity implies that the level sets of $\overline{I}(\cdot)$ (or $\underline{I}(\cdot)$), i.e., sets of the form $\{X\in\mathcal{X}~|~\overline{I}(X)\leq\alpha\}$ (or $\underline{I}(\cdot)$) for every $\alpha\in\mathbb{R}_{+}$, are closed. If, in addition, the levels sets are compact (for every $\alpha$), $\overline{I}(\cdot)$ (or $\underline{I}(\cdot)$) is said to be a good rate function.

Before interpreting the consequences of the LD upper and lower bounds as defined above, we consider the notion of a rare event, which is the central motivation to all large deviations:
\begin{definition}[Rare Event]
\label{moddev3} A set $\Gamma\subset\mathcal{B}(\mathcal{X})$ is called a rare event with respect to (w.r.t.) the family $\left\{\mathbb{\mu}^{\overline{\gamma}}\right\}$ of probability measures, if $\lim_{\overline{\gamma}\rightarrow\infty}\mathbb{\mu}^{\overline{\gamma}}(\Gamma)=0$.
In other words, the event $\Gamma$ becomes increasingly difficult to observe (i.e., it becomes rare) as $\overline{\gamma}\rightarrow\infty$.
\end{definition}
Once a rare event $\Gamma$ is identified, the next natural question is the rate at which its probability goes to zero under $\mathbb{\mu}^{\overline{\gamma}}$ as $\overline{\gamma}\rightarrow\infty$. This is answered by the LD upper and lower bounds, which also characterize the family $\left\{\mathbb{\mu}^{\overline{\gamma}}\right\}$ as $\overline{\gamma}\rightarrow\infty$. Indeed, it is not hard to see that, if the family $\left\{\mathbb{\mu}^{\overline{\gamma}}\right\}$ satisfies the LD upper and lower bounds, we have for every measurable set $\Gamma\in\mathcal{X}$:
\begin{equation}
\label{moddev5}
\mathbb{\mu}^{\overline{\gamma}}(\Gamma)\leq e^{-\overline{\gamma}(\inf_{X\in\overline{\Gamma}}\overline{I}(X)+o(1))}
\end{equation}
\begin{equation}
\label{moddev500}
\mathbb{\mu}^{\overline{\gamma}}(\Gamma)\geq e^{-\overline{\gamma}(\inf_{X\in\Gamma^{\circ}}\underline{I}(X)+o(1))},
\end{equation}
where $o(1)$ is the little-$o$ notation. Now assume $\inf_{X\in\overline{\Gamma}}\overline{I}(X)>0$. Then, from~\eqref{moddev5} it is clear that $\Gamma$ is a rare event and, in fact, we conclude that the probability of $\Gamma$ decays exponentially with a LD exponent greater than or equal to $\inf_{X\in\overline{\Gamma}}\overline{I}(X)$. Similarly, $\inf_{X\in\Gamma^{\circ}}\underline{I}(X)>0$ suggests that the LD decay exponent is not arbitrary and cannot be larger than $\inf_{X\in\Gamma^{\circ}}\underline{I}(X)$. In addition, if the rate functions $\overline{I}$ and $\underline{I}$ close to each other, the estimate of the exact decay exponent is tight.

We summarize the key symbols used in this paper in Table~\ref{tab}.
\begin{table}[tbp]\caption{Table of Key Symbols}\label{tab}
\renewcommand{\arraystretch}{1.25}
\centering
\begin{tabular}{c|p{6cm}<{\centering}}
\hline
$\mathcal{F}$ & dynamical system  matrix \\
\hline
$\mathcal{Q}$ & system noise covariance\\
\hline
$\mathcal{C}_{n}$ &observation matrix at sensor $n$ \\
\hline
$\mathcal{R}_{n}$ & observation noise covariance at sensor $n$ \\
\hline
$\overline{A}$ & transition matrix of the Markov chain \\
\hline
$\mathfrak{P}$ & power set of sensor index $[1,\cdots,N]$\\
\hline
$\jmath$& index of elements of $\mathfrak{P}$\\
\hline
${\mathcal{I}^{n}_{k}}$& index of sensors whose observations are available at sensor $n$\\
\hline
$\overline{\gamma}$& averaged number of inter-sensor message passages per signal evolution epoch\\
\hline
$\{\widehat{P}_{k}^{n}\}$ & error covariance sequence at sensor $n$ defined in (12)\\
\hline
$\overline{q}_{n}(\jmath)$  & upper bound defined in (16)\\
\hline
$\underline{q}_{n}(\jmath)$ & lower bound defined in (16)\\
\hline
$\{P_n(k)\}$ & sequence of switched Riccati iterates defined in (29)\\
\hline
$\{\widetilde{P}(k)\}$ & an auxiliary process defined in (34)\\
\hline
$\overline{w}(\mathcal{R})$ & upper weight defined in (42) \\
\hline
$\underline{w}(\mathcal{R})$ & lower weight defined in (43)\\
\hline
$\overline{I}(\cdot)$& upper large deviation rate function \\
\hline
$\underline{I}(\cdot)$ &lower large deviation rate function\\
\hline
\end{tabular}
\end{table}

\section{Distributed Kalman Filtering: Algorithms and Assumptions}
\label{prob_form}
In this section, we present a generic class of distributed algorithms for Kalman filtering. The problem setup is described in Section~\ref{setup}. In Section~\ref{GIKF-prior}, the algorithm GIKF (Gossip Interactive Kalman Filter) given in~\cite{Riccati-weakcons} is briefly reviewed. The GIKF is generalized for cases where additional inter-sensor communication is available in Section~\ref{M-GIKF}. An example of the observation dissemination protocol is given in Section~\ref{ex-dist}.
To facilitate later analysis, an interacting particle representation of the random Riccati equation resulting from the M-GIKF is provided in Section~\ref{int_part_rep}.

\subsection{Problem Setup}
\label{setup}
The system model and communication model are adopted from that in our prior work \cite{Riccati-weakcons}, which is included here for completeness.

\textbf{Signal/Observation Model:}
Let $t\in\mathbb{R}_{+}$ denote continuous time and $\Delta>0$ be a constant sampling interval. The global unknown signal process $\{\mathbf{x}_{k\Delta}\}_{k\in\mathbb{N}}$ evolves as a sampled linear dynamical system:
\begin{equation}
\label{sys_model}
\mathbf{x}_{(k+1)\Delta}=\mathcal{F}\mathbf{x}_{k\Delta}+\mathbf{w}_{k\Delta}
\end{equation}
where $\mathbf{x}_{k\Delta}\in\mathbb{R}^{M}$ is the signal (state)
vector with an initial state $\mathbf{x}_{0}$ distributed as a
zero mean Gaussian vector with covariance $\widehat{P}_{0}$ and
the system noise $\{\mathbf{w}_{k\Delta}\}$ is an uncorrelated zero mean
Gaussian sequence independent of $\mathbf{x}_{0}$ with covariance
$\mathcal{Q}$. The observation at the $n$-th sensor
$\mathbf{y}^{n}_{k\Delta}\in\mathbb{R}^{m_{n}}$ at time $k\Delta$ is of the
form:
\begin{equation}
\label{obs_n}
\mathbf{y}^{n}_{k\Delta}=\mathcal{C}_{n}\mathbf{x}_{k\Delta}+\mathbf{v}^{n}_{k\Delta}
\end{equation}
where $\mathcal{C}_{n}\in\mathbb{R}^{m_{n}\times M}$ is the observation matrix and
$\{\mathbf{v}^{n}_{k\Delta}\}$ is an uncorrelated zero mean Gaussian
observation noise sequence with covariance $\mathcal{R}_{n}\gg
\mathbf{0}$\footnote{The sampling interval $\Delta$ could be a function of various system parameters such as the sampling rate of the sensors and the rate of signal evolution. Thus the factor $1/\Delta$ may be viewed as the signal evolution rate. Since $\Delta$ is fixed throughout the paper, we will drop $\Delta$ from the discrete index of sampled processes for notational convenience. Then, $\mathbf{x}_{k}$ will be used to denote $\mathbf{x}_{k\Delta}$ and the process $\{\mathbf{x}_{k\Delta}\}_{k\in\mathbb{N}}$ will be denoted by $\{\mathbf{x}_{k}\}_{k\in\mathbb{N}}$.}. Also, the noise sequences at different sensors are
independent of each other, of the system noise process, and of the
initial system state. Due to the limited capability of each individual sensor, typically the dimension of $\mathbf{y}^{n}_{k}$ is much
smaller than that of the signal process. Thus the observation
process at each sensor is usually not sufficient to make the pair
$\{\mathcal{C}_{n},\mathcal{F}\}$ observable\footnote{It is possible that some of the sensors have no observation
capabilities, i.e., the corresponding $C_{n}$ is a zero matrix.
Thus this formulation easily carries over to networks of
heterogeneous agents, which consist of `sensors' actually sensing
the field of interest and some pure actuators implementing local control
actions based on the estimated field.}. We envision a totally
distributed application where a reliable estimate of the entire signal
process is required at each sensor. To achieve this, the sensors need collaboration via occasional communications with their neighbors, whereby
they exchange their observations and filtering states. The details of the collaboration scheme will be defined precisely later.

\textbf{Communication Model:}
Communication among sensors is constrained by several factors such as proximity, transmit power, and receiving capabilities. We model the underlying communication structure of the network in terms of an undirected graph $(V,\mathcal{E})$, where $V$ denotes the set of $N$ sensors and $\mathcal{E}$ is the set of edges or allowable communication links between the sensors. The notation $n\sim l$ indicates that sensors $n$ and $l$ can communicate, i.e., $\mathcal{E}$ contains the undirected edge $(n,l)$. The graph $(V,\mathcal{E})$ is represented by the $N\times N$ symmetric maximal adjacency matrix $\mathcal{A}$:
\begin{equation}
\label{def_mathcalA}
\mathcal{A}_{nl}=\left\{ \begin{array}{ll}
                    1, & \mbox{if $(n,l)\in \mathcal{E}$}\\
                    0, & \mbox{otherwise}.
                   \end{array}
          \right.
\end{equation}
We assume that the diagonal elements of $\mathcal{A}$ are
identically~1, indicating that a sensor $n$ can always communicate
with itself. Note that $\mathcal{E}$ is the maximal set
of allowable communication links in the network at any time; however, at a particular
instant, each sensor may choose to communicate only to a fraction
of its neighbors. The exact communication protocol is not so
important for the analysis, as long as some weak
connectivity assumptions are satisfied. For definiteness, we
assume the following generic communication model, which subsumes
the widely used gossip protocol for real time embedded
architectures \cite{Boyd-GossipInfTheory} and the graph matching
based communication protocols for internet architectures
\cite{Mckeown}. 
 Define the set $\mathcal{M}$
of symmetric 0-1 $N\times N$ matrices as follows:
\begin{equation}
\label{def_mathcalM}\mathcal{M}=\left\{A~\left|~\mathbf{1}^{T}A=\mathbf{1}^{T},~~A\mathbf{1}=\mathbf{1},~~A\leq \mathcal{E}\right.\right\}
\end{equation}
where $A\leq \mathcal{E}$ is to be interpreted component-wise. In other words, $\mathcal{M}$ is the set of adjacency matrices,
where every node is incident to exactly one edge (including the self edge) and allowable edges are only those included in
$\mathcal{E}$.\footnote{The set $\mathcal{M}$ is non-empty, since the $N\times N$ identity matrix $I_{N}\in\mathcal{M}$.}
Let $\mathcal{D}$ be a probability distribution on the space
$\mathcal{M}$. The sequence of time-varying adjacency matrices
$\{A(k)\}_{k\in\mathbb{N}}$, governing the inter-sensor
communication, is then an independent and identically distributed (i.i.d.) sequence in $\mathcal{M}$ with
distribution $\mathcal{D}$ and independent of the signal and
observation processes.\footnote{For convenience of presentation,
we assume that $A(0)=I_{N}$, while real communication starts at time slot
$k=1$.} We make the following assumption of connectivity on average.

\textbf{Assumption C.1}: Define the symmetric stochastic matrix
\label{assumptionc.1}
$\overline{A}$ as
\begin{equation}
\label{def_barA}\overline{A}=\mathbb{E}\left[A(k)\right]=\int_{\mathcal{M}}A\,d\,\mathcal{D}(A).
\end{equation}
The matrix $\overline{A}$ is assumed to be irreducible and aperiodic.

\begin{remark}
\label{rem1} The stochasticity of $\overline{A}$ is inherited from
that of the elements in $\mathcal{M}$. Here, we are not concerned
with the properties of the distribution $\mathcal{D}$ as long as
the weak connectivity assumption above is satisfied. The irreducibility of $\overline{A}$ depends both on the set
of allowable edges $\mathcal{E}$ and the distribution
$\mathcal{D}$. We do not detail this question here.
However, to show the applicability of Assumption~\textbf{C.1} and
justify the notion of weak connectivity, we note that such a
distribution $\mathcal{D}$ always exists if the graph
$(V,\mathcal{E})$ is connected. 
We provide a Markov chain interpretation of the mean adjacency matrix $\overline{A}$, which is helpful for the following analysis. The matrix
$\overline{A}$ can be interpreted as the transition matrix of a
time-homogeneous Markov chain on the state space $V$. Since the
state space $V$ is finite, the irreducibility of $\overline{A}$ implies the positive recurrence of the resulting Markov chain.
\end{remark}

We present the following weak assumptions on the signal/observation model.

\textbf{Stabilizability: Assumption S.1} The pair
\label{assumptions.1}
$(\mathcal{F},\mathcal{Q}^{1/2})$ is stabilizable. The
non-degeneracy (positive definiteness) of $\mathcal{Q}$ guarantees this.

\textbf{Weak Detectability: Assumption D.1}  There exists a walk\footnote{A walk is defined w.r.t. the graph induced by the non-zero entries of the matrix $\overline{A}$.} $(n_1,\cdots,n_l)$ of length $l\geq 1$ covering the $N$ nodes, such that the matrix $\sum_{i=1}^{l}(\mathcal{F}^{i-1})^T\mathcal{C}_{n_i}^T \mathcal{C}_{n_i} \mathcal{F}^{i-1}$ is invertible.

\begin{remark}
%
Assumption (D.1) is minimal, since, for an arbitrary choice of the matrix $\mathcal{F}$ governing the signal dynamics, even a centralized setting, where a center accesses all the sensor observations over all time, requires an equivalent detectability condition. This justifies the term weak detectability.
\end{remark}

\subsection{Prior Work: Algorithm GIKF}
\label{GIKF-prior}
 The GIKF (see~\cite{Riccati-weakcons}) assumes that the inter-sensor communication rate is comparable to the signal evolution rate and only one round of sensor communication is allowed for every epoch $[(k-1)\Delta,k\Delta)$.

We now present the algorithm GIKF (gossip based interacting Kalman
filter) for distributed estimation of the signal process
$\mathbf{x}_{k}$ over time. Let the filter at sensor $n$ be initialized with the pair
$\left(\widehat{\mathbf{x}}_{0|-1},\widehat{P}_{0}\right)$, where
$\widehat{\mathbf{x}}_{0|-1}$ denotes the prior estimate of
$\mathbf{x}_{0}$ (with no observation information) and
$\widehat{P}_{0}$ the corresponding error covariance. Also, $(\widehat{\mathbf{x}}_{k|k-1}^{n},\widehat{P}_{k}^{n})$ denotes
the estimate at sensor $n$ of $\mathbf{x}_{k}$ based on
information\footnote{The information at sensor $n$ till (and
including) time $k$ corresponds to the sequence of observations
$\{\mathbf{y}^{n}_{s}\}_{0\leq s\leq k}$ obtained at the sensor
and the information received by data exchange with its
neighboring sensors.} till time $k-1$ and the corresponding
conditional error covariance, respectively. The pair
$\left(\widehat{\mathbf{x}}_{k|k-1}^{n},\widehat{P}_{k}^{n}\right)$ is also
referred to as the state of sensor $n$ at time $k-1$. To define
the estimate update rule for the GIKF, denote by $n_k^{\rightarrow}$ the neighbor of sensor $n$ at time $k$ w.r.t.~the adjacency matrix\footnote{Note that by symmetry we have $(n_k^{\rightarrow})_k^{\rightarrow}=n$. It is possible that
$n_k^{\rightarrow}=n$, in which case $A(k)$ has a self-loop at node $n$.} $A(k)$. We assume that all inter-sensor communication for time $k$ occurs at the beginning of the slot, whereby communicating sensors swap their previous states, i.e., if at time $k$, $n_k^{\rightarrow}=l$, sensor $n$ replaces its previous state $\left(\widehat{\mathbf{x}}_{k|k-1}^{n},\widehat{P}_{k}^{n}\right)$ by $\left(\widehat{\mathbf{x}}_{k|k-1}^{l},\widehat{P}_{k}^{l}\right)$ and sensor $l$ replaces its previous state $\left(\widehat{\mathbf{x}}_{k|k-1}^{l},\widehat{P}_{k}^{l}\right)$ by $\left(\widehat{\mathbf{x}}_{k|k-1}^{n},\widehat{P}_{k}^{n}\right)$. The estimate is updated by sensor $n$ at the end of the slot (after the communication and observation tasks have been completed) as follows:
\begin{eqnarray}
\label{est_up}
\widehat{\mathbf{x}}_{k+1|k}^{n}&=&\mathbb{E}\left[\mathbf{x}_{k+1}~\left|~\widehat{\mathbf{x}}_{k|k-1}^{n_k^{\rightarrow}},
\widehat{P}_{k}^{n_k^{\rightarrow}},\mathbf{y}^{n}_{k}\right.\right]\\
\label{est_up1}
\widehat{P}_{k+1}^{n}&=&\mathbb{E}\left[\left(\mathbf{x}_{k+1}-\widehat{\mathbf{x}}_{k+1|k}^{n}\right)
\left(\mathbf{x}_{k+1}-\widehat{\mathbf{x}}_{k+1|k}^{n}\right)^{T}\right.\nonumber \\ & & \left.~~~~\left|~\widehat{\mathbf{x}}_{k|k-1}^{n_k^{\rightarrow}},
\widehat{P}_{k}^{n_k^{\rightarrow}},\mathbf{y}^{n}_{k}\right.\right].
\end{eqnarray}


\subsection{Faster Communication Time-scale: Algorithm M-GIKF}
\label{M-GIKF} 
We recall from Table I that $\mathfrak{P}$ denotes the power set of $[1,\cdots,N]$ and the elements of $\mathfrak{P}$ are indexed by $\jmath\in [0,\cdots, 2^{N}-1]$, with 0 denoting the null set and $2^{N}-1$ the entire set. Also, for technical convenience, we will interpret the elements (sensors) in a subset index by $\jmath$ to be arranged in ascending order, $i_{1}$ denoting the first and $i_{|\mathfrak{P}_\jmath|}$ denoting the last, i.e., this subset is $\{i_{1},\cdots,i_{|\mathfrak{P}_\jmath|}\}$. 
For a given $\jmath\in[0,\cdots, 2^{N}-1]$, we denote the subset $[(\mathbf{y}_{k}^{i_{1}})^{T}\cdots(\mathbf{y}_{k}^{i_{|\mathfrak{P}_\jmath|}})^{T}]^{T}$ of observations at the $k$-th epoch by $\mathbf{y}_{k}^{\jmath}$, whereas the matrices $\mathcal{C}_{\jmath}$ and $\mathcal{R}_{\jmath}$ are $\mathcal{C}_{\jmath}=[\mathcal{C}_{i_{1}}^{T}\cdots\mathcal{C}_{i_{|\mathfrak{P}_\jmath|}}^{T}]^{T}$ and $\mathcal{R}_{\jmath}=\mbox{diag}[\mathcal{R}_{i_{1}},\cdots,\mathcal{R}_{i_{|\mathfrak{P}_\jmath|}}]$.

Suppose, in the basic GIKF scheme explained above, there is an additional step of communication. Specifically, assume that in every interval $[k\Delta,(k+1)\Delta)$ the network (as a whole) is given an opportunity for additional communication at rate $\overline{\gamma}$, i.e., additional $\overline{\gamma}$ message exchanges occur across the network in each epoch. In particular, we assume that the total number of additional sensor transmissions in $[k\Delta,(k+1)\Delta)$ is dominated by a Poisson random variable of rate $\overline{\gamma}$\footnote{The Poisson assumption is claimed and justified at the end of this subsection.}, and that each transmission conforms to the network topology induced by the maximal adjacency matrix $\mathcal{A}$. Clearly, by exploiting this additional inter-sensor communication, the network should be able to perform a filtering task that is at least as good if not better than the basic GIKF.

A natural way to improve the performance of the GIKF is to use this additional communication to disseminate the observations across the sensors. We denote this new scheme with additional communication for disseminating the observations by Modified GIKF (M-GIKF). For each sensor $n$, the subset-valued process $\{\mathcal{I}^{\overline{\gamma},n}_{k}\}$ taking values in $\mathfrak{P}$ is used to index the subset of observations ${\mathbf{y}^{\mathcal{I}^{\overline{\gamma},n}_{k}}_{k}}$ available at sensor $n$ at the end of the interval $[k\Delta,(k+1)\Delta)$, e.g., if $\{\mathcal{I}^{\overline{\gamma},n}_{k}\}=[m,n]$, then the observations $\mathbf{y}^m_k$ and $\mathbf{y}^n_k$ are available at sensor $n$ by the end of the interval $[k\Delta,(k+1)\Delta)$. Also, the corresponding parameters with ${\mathbf{y}^{\mathcal{I}^{\overline{\gamma},n}_{k}}_{k}}$ in the observation model \eqref{obs_n} are denoted by $\mathcal{C}_{\mathcal{I}^{\overline{\gamma},n}_{k}}$ and $\mathcal{R}_{\mathcal{I}^{\overline{\gamma},n}_{k}}$.

For the GIKF algorithm it is clear that
\[
\mathcal{I}^{0,n}_{k}=\{n\},~\forall n\in [1,\cdots,N],~k\in\mathbb{N},
\]
i.e., each sensor only has access to its own observations in each epoch. Hence, in the GIKF the only cooperation among the sensors is achieved through estimate exchanging and no explicit mixing or aggregation of instantaneous observations occur. This is in fact the key difference between the GIKF and the M-GIKF. In the M-GIKF, the sensors use the additional communication rate $\overline{\gamma}$ to exchange instantaneous observations, in addition to performing the basic estimate swapping of the GIKF.

In this work, our main focus is not on the exact nature of the instantaneous observation dissemination protocol, as long as it is distributed (i.e., any inter-sensor exchange conforms to the network topology) and satisfies some assumptions (in Section~\ref{ex-dist} we will provide an example of such distributed protocols satisfying these assumptions). Recall $\mathcal{I}^{\overline{\gamma},n}_{k}$ to be the instantaneous observation set available at sensor $n$ by the end of the interval $[k\Delta,(k+1)\Delta)$. Note, the statistics of the process $\{\mathcal{I}^{\overline{\gamma},n}_{k}\}$ depend on the dissemination protocol used and the operating rate $\overline{\gamma}$. Before providing details of the dissemination protocol and the assumptions on the processes $\{\mathcal{I}^{\overline{\gamma},n}_{k}\}$, for all $n$, we explain the M-GIKF scheme as follows. For the moment, the reader may assume that $\{\mathcal{I}^{\overline{\gamma},n}_{k}\}$ are generic set-valued processes taking values in $\mathfrak{P}$ and there exists a distributed protocol operating in the time window $[k\Delta,(k+1)\Delta)$ leading to such observation sets at the sensors by the end of the epoch. Clearly, for any protocol and $\overline{\gamma}\geq 0$\footnote{For conciseness, we will drop the superscript $\overline{\gamma}$ over the notations related to the M-GIKF with the additional communication rate $\overline{\gamma}$.},
\[
\{n\}\subset\mathcal{I}^{\overline{\gamma},n}_{k},~\forall n\in [1,\cdots,N],~k\in\mathbb{N}.
\]
Moreover, if the observation dissemination protocol is reasonable, $\mathcal{I}^{n}_{k}$ is strictly greater than $\{n\}$ with positive probability. The basic difference between the GIKF and the M-GIKF is that, in~\eqref{est_up}-\eqref{est_up1}, instead of conditioning on $\mathbf{y}_{k}^{n}$ at sensor $n$, we condition on the possibly larger set $\mathbf{y}^{\mathcal{I}_{k}^{n}}_{k}$ of observations available at sensor $n$.

With this setup, now, we formally describe the M-GIKF, which generalizes the GIKF when additional inter-sensor communication at rate $\overline{\gamma}$ is allowed in every epoch $[k\Delta,(k+1)\Delta)$.

\textbf{Algorithm M-GIKF}:
We assume that $\overline{\gamma}>0$ is given and fixed. Let the filter at sensor $n$ be initialized with the pair $\left(\widehat{\mathbf{x}}_{0|-1},\widehat{P}_{0}\right)$, where $\widehat{\mathbf{x}}_{0|-1}$ denotes the prior estimate of $\mathbf{x}_{0}$ (with no observation information) and $\widehat{P}_{0}$ the corresponding error covariance. Also, by $\left(\widehat{\mathbf{x}}_{k|k-1}^{n},\widehat{P}_{k}^{n}\right)$ we denote the estimate at sensor $n$ of $\mathbf{x}_{k}$ based on
information till time $k-1$ and the corresponding conditional error covariance, respectively. The pair $\left(\widehat{\mathbf{x}}_{k|k-1}^{n},\widehat{P}_{k}^{n}\right)$ is also referred to as the state of sensor $n$ at time $k-1$. Similar to GIKF, the M-GIKF update involves the state exchanging step (w.r.t. the adjacency matrices $\{A(k)\}$), whereby, at the beginning of the epoch $[k\Delta,(k+1)\Delta)$, sensor $n$ exchanges its state with its neighbor $n_k^{\rightarrow}$ w.r.t. $A(k)$. This exchange is performed only once in the interval $[k\Delta,(k+1)\Delta)$. Then each sensor in M-GIKF makes its sensing observation and M-GIKF instantiates the distributed dissemination protocol before the end of the epoch. This leads observation aggregation with $\mathbf{y}^{\mathcal{I}^{n}_{k}}_{k}$ being the observation set available at sensor $n$ at the end of the interval $[k\Delta,(k+1)\Delta)$. The estimate update at sensor $n$ at the end of the slot (after the communication and observation dissemination tasks have been completed) is
\begin{eqnarray}
\widehat{\mathbf{x}}_{k+1|k}^{n}&=&\mathbb{E}\left[\mathbf{x}_{k+1}~\left|~\widehat{\mathbf{x}}_{k|k-1}^{n_k^{\rightarrow}},
\widehat{P}_{k}^{n_k^{\rightarrow}},\mathbf{y}^{\mathcal{I}^{n}_{k}}_{k},\mathcal{I}^{n}_{k}\right.\right]\nonumber\\
\widehat{P}_{k+1}^{n}&=&\mathbb{E}\left[\left(\mathbf{x}_{k+1}-\widehat{\mathbf{x}}_{k+1|k}^{n}\right)
\left(\mathbf{x}_{k+1}-\widehat{\mathbf{x}}_{k+1|k}^{n}\right)^{T}~\right.\nonumber \\ & & \left.~~~~\left|~\widehat{\mathbf{x}}_{k|k-1}^{n_k^{\rightarrow}},
\widehat{P}_{k}^{n_k^{\rightarrow}},\mathbf{y}^{\mathcal{I}^{n}_{k}}_{k},\mathcal{I}^{n}_{k}\right.\right].\nonumber
\end{eqnarray}

\noindent  Due to conditional Gaussianity, the optimal prediction steps can be implemented through the time-varying Kalman filter recursions, and it follows that the sequence $\left\{\widehat{P}_{k}^{n}\right\}$ of the conditional predicted error covariance matrices at sensor $n$ satisfies the Riccati recursion:
\begin{align}
\label{est_up2}
&\widehat{P}_{k+1}^{n}=\mathcal{F}\widehat{P}_{k}^{n_k^{\rightarrow}}\mathcal{F}^{T}+
\mathcal{Q}-\mathcal{F}\widehat{P}_{k}^{n_k^{\rightarrow}}\mathcal{C}_{\mathcal{I}^{n}_{k}}^{T}\nonumber\\
&\times\left(\mathcal{C}_{\mathcal{I}^{n}_{k}}\widehat{P}_{k}^{n_k^{\rightarrow}}\mathcal{C}_{\mathcal{I}^{n}_{k}}^{T}+
\mathcal{R}_{\mathcal{I}^{n}_{k}}\right)^{-1}\mathcal{C}_{\mathcal{I}^{n}_{k}}\widehat{P}_{k}^{n_k^{\rightarrow}}\mathcal{F}^{T}.
\end{align}
\begin{remark}
\label{rem:233}
Note that the sequence $\left\{\widehat{P}_{k}^{n}\right\}$ is random, due to the random neighborhood selection function $n_k^{\rightarrow}$. The goal of the paper is to study the asymptotic properties of the sequence of random conditional error covariance matrices $\left\{\widehat{P}_{k}^{n}\right\}$ at each sensor $n$ and to show in what sense they reach consensus, such that, in the limit of large time, every sensor provides an equally good (stable in the sense of estimation error) estimate of the signal process.
\end{remark}
\textbf{Assumptions on the observation dissemination protocol:}
We introduce the following assumptions on the communication medium and the distributed information dissemination protocol generating the subsets
$\{\mathcal{I}^{n}_{k}\}$ for all $n,k$.
\begin{itemize}
\item[(i)]~\textbf{(E.1)}: The total number of inter-sensor observation dissemination messages $\mathcal{M}(k)$ in the interval $[k\Delta,(k+1)\Delta)$, for all $k\in\mathbb{T}_{+}$ follows a Poisson distribution with mean $\overline{\gamma}$.
\item[(ii)]~\textbf{(E.2)}: For each $n$, the process $\{\mathcal{I}^{n}_{k}\}$ is (conditionally) i.i.d. For each $k$, the protocol initiates at the beginning of the interval $[k\Delta,(k+1)\Delta)$ and operates on the most recent observations $\{\mathbf{y}^{n}_{k}\}_{1\leq n\leq N}$. The protocol terminates at the end of the epoch. For observation dissemination in the next epoch $[(k+1)\Delta,(k+2)\Delta)$, the protocol is re-initiated and acts on the new observation data $\{\mathbf{y}^{n}_{k+1}\}_{1\leq n\leq N}$, independent of its status in the previous epoch. Necessarily, the sequence is (conditionally) i.i.d.. We define
    \begin{equation}
    \label{ass-obs1}
    \lim_{k\rightarrow\infty}\frac{1}{k}\sum_{i=0}^{k-1}\mathcal{M}(i)=\overline{\gamma},~\mbox{a.s.},
    \end{equation}
    i.e., the average number of dissemination messages per epoch is $\overline{\gamma}$.
\item[(iii)]~\textbf{(E.3)}: Recall the notations $\jmath$ and $\{i_1,\cdots,i_{|\mathfrak{P}_\jmath|}\}$ at the beginning of this section. For each $\jmath\in[0,\cdots, 2^{N}-1]$, define
\begin{equation}
\label{ass-obs4}
\mathbb{P}\left(\mathcal{I}^{n}_{k}=\{i_1,\cdots,i_{|\mathfrak{P}_\jmath|}\}\right)=q_{n}(\jmath),~\forall n,k.
\end{equation}
We assume that for all $\overline{\gamma}>0$
\begin{equation}
\label{ass-obs2}
\mathbb{P}\left(\mathcal{I}^{n}_{k}=\{1,2,\cdots,N\}\right)=q_{n}(2^{N}-1)>0,~\forall n,k.
\end{equation}
\item[(iv)]~\textbf{(E.4)}: For each $\jmath\neq 2^{N}-1$, define
\begin{align}\label{E4_protocol}
-\underline{q}_{n}(\jmath)\leq\liminf_{\overline{\gamma}\rightarrow\infty}&\frac{1}{\overline{\gamma}}\ln\left(q_{n}(\jmath)\right)\nonumber\\
&\leq\limsup_{\overline{\gamma}\rightarrow\infty}\frac{1}{\overline{\gamma}}\ln\left(q_{n}(\jmath)\right)\leq-\overline{q}_{n}(\jmath).
\end{align}

We assume that, for $\jmath\neq 2^{N}-1$,
$
\overline{q}_{n}(\jmath)>0,~\forall n.
$
Since $\{n\}\subset\mathcal{I}^{n}_{k}$ for all $n$, necessarily for all $\jmath$, such that $n\notin \{i_{1},\cdots,i_{|\mathfrak{P}_\jmath|}\}$,
$\overline{q}_{n}(\jmath)=\infty$.
\end{itemize}
\begin{remark}
\label{rem-ass-obs} We now comment on the assumptions and justify their applicability under reasonable conditions (examples of distributed observation dissemination protocols with rate constraints are provided in Section~\ref{ex-dist}):
\begin{itemize}
\item[(i)] Assumption~\textbf{(E.1)} essentially means that the waiting times between successive transmissions are i.i.d. exponential random variables with mean $1/\overline{\gamma}$. This is justified in Carrier Sense Multiple Access (CSMA) type protocols, where the back-off time is often chosen to be exponentially distributed. To be more realistic, one needs to account for packet delays and transmission/reception processing times. We ignore these in the current setting. On a more practical note, the rate $\overline{\gamma}$ may be viewed as a function of the network communication bandwidth; the larger the bandwidth, the higher the rate of channel usages and hence $\overline{\gamma}$. In distributed network communication settings, a typical example of exponential waiting between successive transmissions is the asynchronous gossip model (see~\cite{Boyd-GossipInfTheory}).
\item[(ii)] Assumption~\textbf{(E.2)} is justified for memoryless and time-invariant communication schemes. It says that the scope of an instantiation of the distributed observation dissemination protocol is confined to the interval $[k\Delta,(k+1)\Delta)$, at the end of which the protocol restarts with a new set of observations independent of its past status. Equation \eqref{ass-obs1} is then a direct consequence of the Strong Law of Large Numbers (SLLN). This essentially means that the observation dissemination rate is $\overline{\gamma}$ times the observation acquisition or sampling rate scale.
\item[(iii)] Assumption~\textbf{(E.3)} is satisfied by any reasonable distributed protocol if the network is connected. Intuitively, this is due to the fact that, if $\overline{\gamma}>0$, the probability of having a sufficiently large number of communications in an interval of length $\Delta>0$ is strictly greater than zero (which can be very small though, depending on the value of $\overline{\gamma}$). On the other hand, if the network is connected, it is possible by using a sufficiently large (but finite) number of communications to disseminate the observation of a sensor to every other sensor. Examples of protocols satisfying~\textbf{(E.3)} are provided in Section~\ref{ex-dist}.
\item[(iv)] Assumption~\textbf{(E.4)} is justifiable by showing that ${q}_n(\jmath)$ decays exponentially as $\overline{\gamma} \rightarrow \infty$. Examples of protocols satisfying~\textbf{(E.4)} are provided in the next Section~\ref{ex-dist}.

\end{itemize}
\end{remark}
\begin{remark}
We claim that if random link failures are further considered in the protocol, the M-GIKF algorithm and the corresponding convergence result could still hold with minimum modification, since link failures basically lead to no information swapping or propagation between some particular node pairs. This results in the same effect as the case where the sensors choose to communicate with themselves in our current protocol. Apparently, with random link failures, to achieve the same error performance, it would require more signal evolution epochs compared with the case without link failures.
\end{remark}
\subsection{An Example of Distributed Observation Dissemination Protocol}
\label{ex-dist}
In this section, we give an example of a gossip based distributed observation dissemination protocol. During the epoch $[k\Delta,(k+1)\Delta)$, the protocol initiates the observation dissemination at sensor $n$. Sensor $n$ starts with its own current observation $\mathbf{y}^{n}_{k}$ and keeps exchanging its observation with its neighbors till the end of this epoch. The number of exchanges and the type of each exchange are determined by an asynchronous pairwise gossip protocol \cite{Boyd-GossipInfTheory}, where the inter-sensor communication occurs at successive ticks of a Poisson process with rate ${\overline{\gamma}}/{\Delta_o}$, and at each tick only one of the network links is active with uniform probability ${1}/{\overline{M}}$, where $\Delta_o$ is the time duration allocated for observation dissemination with each epoch, and $\overline{M}$ is the cardinality of the allowable communication link set $\mathcal{E}$. Equivalently, we could consider each network link activated independently of the others according to the ticks of a local Poisson clock with rate ${\overline{\gamma}}/{\Delta \overline{M}}$, where no two links will become active simultaneously due to the independence of events in the local Poisson processes. As a formal statement, the number of inter-sensor communications for observation dissemination $\mathcal{M}(k)$ in the interval $[k\Delta,(k+1)\Delta)$ follows a Poisson distribution with mean value $\overline{\gamma}$, which proves that this protocol satisfies assumption~\textbf{(E.1)}. In addition, the corresponding sequence of time-varying adjacency matrices $\{A^o_k(i)\}_{i=1,\cdots, \mathcal{M}(k)}$ is an i.i.d. sequence uniformly distributed on the set $\{E^{nl}\}$, where $E^{nl}$ is defined as a permutation matrix, such that, for each $(n,l)\in \mathcal{E}$ and $n\neq l$, $E^{nl}_{n,l}=E^{nl}_{l,n}=1$ and $E^{nl}_{m,m}=1$ for $m\neq n,l$, with all other entries being 0.

Now, we establish the observation dissemination process. Let $\mathbf{s}_k^i=[{s}_k^i(1),\cdots,{s}_k^i(N)]$ with its entry ${s}_k^i(n) \in [1,\cdots,N]$ indexing the observation $\mathbf{y}^{{s}_k^i(n)}_k$ at sensor $n$ just after the $i$-th exchange in the epoch $[k\Delta,(k+1)\Delta)$. Starting with ${s}_k^0(n)=n$ for each $n$ means that, at the beginning of the epoch $[k\Delta,(k+1)\Delta)$ before any exchanges, each sensor $n$ only has its own observation $\mathbf{y}^n_k$. When exchanges happen, the observations $\{\mathbf{y}^n_k\}_{1 \leq n \leq N}$ travel across the network according to
\begin{equation}
\mathbf{s}_k^i=A_k^o(i)\mathbf{s}_k^{i-1},~~i\in[1,\cdots,\mathcal{M}(k)].
\end{equation}

During this exchange process until the end of the epoch $[k\Delta,(k+1)\Delta)$, the sensors store the observations passing through them. Therefore, at the end of the epoch $[k\Delta,(k+1)\Delta)$, the set of observations available at sensor $n$ is
\begin{equation}
\mathcal{I}^{n}_k= \bigcup_{i=0}^{\mathcal{M}(k)}\{ {s}_k^{i} (n) \}.
\end{equation}

Finally, the observation dissemination for the epoch $[k\Delta,(k+1)\Delta)$ terminates at the end of this epoch, right before the sensor starts the next epoch $[(k+1)\Delta,(k+2)\Delta)$. Then similarly the observation dissemination repeats during the epoch $[(k+1)\Delta,(k+2)\Delta)$ independent of its prior state. Therefore, the sequence $\{\mathcal{I}^{n}_k\}$ as the set of observation indices available at sensor $n$ at the end of each epoch is a temporally i.i.d. process, which satisfies assumption~\textbf{(E.2)}. Moreover, this observation dissemination process is assumed to be independent of the estimate exchange process.

\begin{remark}
\label{rem:obs}
It is readily seen that the above observation dissemination protocol conforms to the preassigned gossip network structure. In fact, to execute the above protocol, each sensor needs to know its local communication neighborhood only, no global topology information is required. Secondly, note that, at each communication, a sensor forwards a single observation $\mathbf{y}_{k}^{s_{k}^{i}(n)}$ to a neighboring sensor. Since, the sensor observations are typically low-dimensional, the data overhead of each communication is modest. Finally, since the above protocol is fully randomized (neighbors are chosen independently uniformly), it is likely that a sensor will receive multiple copies of the same observation (possibly through different neighbors), i.e., some communications might end up being redundant.
\end{remark}

To prove that this protocol satisfies assumptions~\textbf{(E.3)} and~\textbf{(E.4)}, we have the following analysis employing the hitting time concept of Markov chains. For each $\jmath \neq 2^N-1$, without loss of generality, we assume that $\jmath$ corresponds to the sensor
subset $\{n_1,n_2,...,n_m\}$, with $\{n'_1,n'_2,...,n'_{N-m}\}$ denoting the complementary subset. As explained by the interacting particle representation in the next section, the link formation process following the sequence $\{A^o_k(i)\}$ for the observation dissemination can be represented as $N$ particles moving on the graph as identical Markov chains.
We use $T_{i}$ to denote the hitting time starting from sensor $i$ to another sensor $n$ in the Markov chain, with the
transition probability matrix as the mean adjacency matrix $\overline{A^o}$, which is irreducible and defined in a similar way as \eqref{def_barA}.
Then, we have
\begin{align}\label{upperbound_Oprotocol}
q_{n}(\jmath)&=P\left(T_{n'_1} >\mathcal{M}(k),\cdots,T_{n'_{N-m}} >\mathcal{M}(k),\right. \nonumber\\
&\left.~~~~~~~~~T_{n_1} \leq \mathcal{M}(k),\cdots,T_{n_m}\leq \mathcal{M}(k)\right)\nonumber\\
&\leq P\left(T_{n'_1} >\mathcal{M}(k),\cdots,T_{n'_{N-m}} >\mathcal{M}(k)\right) \nonumber\\
&\leq \min_{1 \leq i \leq {N-m}}P\left(T_{n'_i}>\mathcal{M}(k)\right).
\end{align}

From Theorem 7.26 in \cite{Driver-Markov}, since the transition matrix $\overline{A^o}$ is irreducible, there exist constants $ 0< \alpha <1$ and $ 0 <L < \infty$ such that $P(T_{i}>L) \leq \alpha,\forall i$, and more generally,
\begin{equation} \label{Driver1}
  P(T_{i}>kL) \leq \alpha^k, ~k=0,1,2,\cdots.
\end{equation}
Also, there exists constant $0<\beta <1$ such that $P(T_{i}>L) \geq \beta,\forall i$, and more generally,
\begin{equation}\label{Driver2}
  P(T_{i}>kL) \geq \beta^k, ~k=0,1,2,\cdots.
\end{equation}

Then, following \eqref{upperbound_Oprotocol}, we have
\begin{align}\label{upperbound_Oprotocol1}
\limsup_{\overline{\gamma} \rightarrow \infty} &\frac{1}{\overline{\gamma}} \ln\left(q_n(\jmath)\right) \nonumber \\
&\leq \limsup_{\overline{\gamma} \rightarrow \infty} \frac{1}{\overline{\gamma}} \ln \left(  \min_{1 \leq i \leq {N-m}}
P(T_{n'_i}>\mathcal{M}(k)) \right) \nonumber \\
&\leq  \limsup_{\overline{\gamma} \rightarrow \infty} \frac{1}{\overline{\gamma}} \ln \left( \alpha^{ \lfloor \frac{\mathcal{M}(k)}{L} \rfloor} \right)= \frac{\ln \, \alpha}{L}
\end{align}
where the last equation is obtained since $\lim_{\overline{\gamma} \rightarrow \infty}
\frac{\mathcal{M}(k)}{\overline{\gamma}}=1$.

We also have
\begin{align} \label{lowerbound_Oprotocol}
q_{n}(\jmath)&=P\left(T_{n'_1} > \mathcal{M}(k) ,\cdots,T_{n'_{N-m}} >\mathcal{M}(k) ,\right.\nonumber\\
&~~~~~~~~\left.T_{n_1} \leq \mathcal{M}(k),\cdots,T_{n_m} \leq \mathcal{M}(k)\right) \nonumber\\
&\geq P\left(T_{n'_1}>\mathcal{M}(k)\right)\cdots P\left(T_{n'_{N-m}} >\mathcal{M}(k)\right) \nonumber\\
&~~~P\left(T_{n_1} \leq \mathcal{M}(k) \right)\cdots P\left(T_{n_m} \leq \mathcal{M}(k) \right).
\end{align}

Then, from \eqref{Driver1} and \eqref{Driver2}, we have
\begin{align}\label{lowerbound_Oprotocol1}
\liminf_{\overline{\gamma} \rightarrow \infty}& \frac{1}{\overline{\gamma}} \ln\left(q_n(\jmath)\right) \nonumber\\
& \geq \liminf_{\overline{\gamma} \rightarrow \infty} \frac{1}{\overline{\gamma}} \ln\left[  \left(\beta^{ \lceil \frac{ \mathcal{M}(k)}{L} \rceil }\right)^{N-m}   \left(1- \alpha^{ \lfloor \frac{ \mathcal{M}(k)}{L} \rfloor }  \right)^m \right] \nonumber\\
& = (N-m) \frac{\ln \beta}{L}
\end{align}
where the last equation is obtained since $\lim_{\overline{\gamma} \rightarrow \infty}
\frac{\mathcal{M}(k)}{\overline{\gamma}}=1$ and $0 < \alpha <1$.

Therefore, from (\ref{upperbound_Oprotocol1}) and (\ref{lowerbound_Oprotocol1}), we have $\underline{q}_n(\jmath)$ and $\overline{q}_n(\jmath)$ in (\ref{E4_protocol}) well defined as
\begin{equation}\label{alpbeta}
   \underline{q}_n(\jmath)=(m-N) \frac{\ln \beta}{L} ,~~\overline{q}_n(\jmath)=- \frac{\ln\,\alpha}{L}.
\end{equation}

Since $\overline{q}_n(\jmath)= - \frac{\ln \alpha}{L}$ and $\alpha < 1$, clearly we see that, for $\jmath\neq 2^{N}-1$,
$\overline{q}_{n}(\jmath)>0$. 
Therefore, we have completed the proof that assumption~\textbf{(E.4)} holds.

To establish assumption~\textbf{(E.3)}, we denote $T_m=\max\{T_{1},...,T_{N}\}$, i.e., $T_m$ is the longest time among all hitting times to sensor $n$ from other sensors. Then, $q_{n}(2^{N}-1)=P( T_m \leq \mathcal{M}(k))=1-P(T_m > \mathcal{M}(k))$, which is greater than zero according to \eqref{Driver1}. This access to all the observations at the end of an epoch may be arbitrarily small but strictly greater than zero.

\subsection{An Interacting Particle Representation}
\label{int_part_rep}
As we did in \cite{Riccati-weakcons}, we now introduce an interesting particle representation, which bears subtle and critical differences from that in \cite{Riccati-weakcons} due to the extra observation dissemination procedure .

Recall from \cite{Riccati-weakcons} the notation $\mathbb{S}_{+}^{M}$ for the cone of positive semidefinite matrices and from Section~\ref{M-GIKF} the notations $\mathcal{C}_{\jmath}$ and $\mathcal{R}_{\jmath}$. To simplify the notation in (\ref{est_up2}), we define the functions of $f_{\jmath}:\mathbb{S}_{+}^{M}\longmapsto\mathbb{S}_{+}^{M}$ for $\jmath\in[0,\cdots, 2^{N}-1]$ denoting the respective subset Riccati operators\footnote{For $\jmath=0$,  the corresponding Riccati operator $f_{0}$ in (\ref{def_Riccati}) reduces to the Lyapunov operator, see \cite{Riccati-weakcons}.}:
\begin{equation}
\label{def_Riccati}
f_{\jmath}(X)=\mathcal{F}X\mathcal{F}^{T}+\mathcal{Q}-
\mathcal{F}X\mathcal{C}_{\jmath}^{T}\left(\mathcal{C}_{\jmath}X\mathcal{C}_{\jmath}^{T}+
\mathcal{R}_{\jmath}\right)^{-1}\mathcal{C}_{\jmath}X\mathcal{F}^{T}.
\end{equation}
Recall the sequence $n_k^{\rightarrow}$ of neighbors of sensor $n$. The sequence of conditional error covariance matrices $\left\{P^{n}_{k}\right\}$ at sensor $n$ then evolves according to
\begin{equation}
\label{evolve_P}\widehat{P}_{k+1}^{n}=f_{\jmath\,\scriptscriptstyle {(\mathcal{I}^{n}_{k})}}\left(\widehat{P}_{k}^{n_k^{\rightarrow}}\right)
\end{equation}
where ${\jmath\,{(\mathcal{I}^{n}_{k})}}$ denotes the index of $\mathcal{I}^{n}_{k}$ in the set $\mathfrak{P}$. The above sequence $\left\{\widehat{P}^{n}_{k}\right\}$ is non-Markovian (and is not even semi-Markov given the random adjacency matrix sequence $\{A(k)\}$), as $\widehat{P}_{k}^{n}$ at time $k$ is a random functional of the conditional error covariance of sensor $n_k^{\rightarrow}$ at time $k-1$, which, in general, is different from that of sensor $n$. This makes the evolution of the sequence $\left\{\widehat{P}^{n}_{k}\right\}$ difficult to track. To overcome this, we give the following interacting particle interpretation of the conditional error covariance evolution, from which we can completely characterize the evolution of the desired covariance sequences $\left\{\widehat{P}^{n}_{k}\right\}$ for $n=1,\cdots,N$.

To this end, we note that the link formation process given by the sequence $\{A(k)\}$ can be represented by $N$ particles moving on the graph as identical Markov chains. The state of the $n$-th particle is denoted by $z_{n}(k)$, and the sequence $\left\{z_{n}(k)\right\}_{k\in\mathbb{N}}$ takes values in $[1,\cdots,N]$. The evolution of the $n$-th particle is given as follows:
\begin{equation}
\label{part_n}z_{n}(k)=z_{n}(k-1)_k^{\rightarrow},~z_{n}(0)=n.
\end{equation}
Recall the (random) neighborhood selection $n_k^{\rightarrow}$. Thus, the $n$-th particle can be viewed as originating from node $n$ at time 0 and then traveling on the graph (possibly changing its location at each time) according to the link formation process $\{A(k)\}$. The following proposition establishes important statistical properties of the sequence $\left\{z_{n}(k)\right\}$ :
\begin{proposition}
\label{prop_int}
\item \mbox{(i)} For each $n$, the process $\left\{z_{n}(k)\right\}$ is a Markov chain on $V=[1,\cdots,N]$ with the transition probability matrix~$\overline{A}$.
\item \mbox{(ii)} The Markov chain $\left\{z_{n}(k)\right\}$ is ergodic with the uniform distribution on $V$ being the attracting invariant measure.
\end{proposition}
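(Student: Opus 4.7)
The plan is to handle the two parts sequentially, both of which reduce to elementary facts about finite-state Markov chains once the connection between the particle dynamics and the mean adjacency matrix $\overline{A}$ is made explicit.

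For part (i), I would first observe that $z_{n}(k)$ is measurable with respect to $\sigma(A(1),\dots,A(k))$, since it is obtained by iteratively applying the neighborhood-selection maps associated with $A(1),\dots,A(k)$ starting from the deterministic initial state $z_{n}(0)=n$. Because $\{A(k)\}$ is i.i.d.\ (hence $A(k)$ is independent of $\sigma(A(1),\dots,A(k-1))$ and therefore of $z_{n}(k-1)$), conditioning on the entire past $z_{n}(0),\dots,z_{n}(k-1)$ reduces to conditioning on $z_{n}(k-1)$ alone; this gives the Markov property. To identify the transition kernel, note that every $A(k)\in\mathcal{M}$ is a symmetric $0/1$ matrix with exactly one nonzero entry per row, so on the event $\{z_{n}(k-1)=i\}$ the new state $z_{n}(k)=i_{k}^{\rightarrow}$ satisfies $\mathbb{P}(z_{n}(k)=j\mid z_{n}(k-1)=i)=\mathbb{P}(A(k)_{ij}=1)=\mathbb{E}[A(k)_{ij}]=\overline{A}_{ij}$, using definition~\eqref{def_barA}.

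For part (ii), since the state space $V$ is finite, ergodicity follows immediately from irreducibility and aperiodicity of $\overline{A}$, which are precisely the content of Assumption~\textbf{C.1}. To identify the invariant measure, I would exploit the symmetry of the matrices in $\mathcal{M}$: every $A\in\mathcal{M}$ satisfies both $A\mathbf{1}=\mathbf{1}$ and $\mathbf{1}^{T}A=\mathbf{1}^{T}$, and averaging over $\mathcal{D}$ preserves both identities, so $\overline{A}$ is doubly stochastic. Consequently, the uniform probability vector $\pi=\tfrac{1}{N}\mathbf{1}$ satisfies $\pi^{T}\overline{A}=\pi^{T}$ and is an invariant distribution; uniqueness (and the attracting property) follow from the standard ergodic theorem for irreducible aperiodic finite Markov chains.

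I do not anticipate a serious obstacle: the only subtlety is the careful justification of the Markov property, where one must explicitly invoke independence of $A(k)$ from the filtration generated by $A(1),\dots,A(k-1)$ (and hence from $z_{n}(k-1)$) before computing the one-step transition probabilities. Everything else is a direct consequence of Assumption~\textbf{C.1} and the symmetry built into the set $\mathcal{M}$ through~\eqref{def_mathcalM}.
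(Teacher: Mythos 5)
Your proof is correct and follows the same standard argument that the paper (which defers this proof to the prior work \cite{Riccati-weakcons}) relies on: the i.i.d.\ nature of $\{A(k)\}$ and the one-nonzero-entry-per-row structure of $\mathcal{M}$ give the Markov property with kernel $\overline{A}_{ij}=\mathbb{E}[A(k)_{ij}]$, while Assumption \textbf{C.1} together with the double stochasticity of $\overline{A}$ yields ergodicity with the uniform distribution as the attracting invariant measure. No gaps.
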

%

For each of the Markov chains $\left\{z_{n}(k)\right\}$, we define a sequence of switched Riccati iterates $\left\{P_{n}(k)\right\}$:
\begin{equation}
\label{swP}
P_{n}(k+1)=f_{\jmath \, {(\scriptstyle{\mathcal{I}_{z_{n}(k)}^{k}})}}(P_{n}(k)).
\end{equation}
The sequence $\left\{P_{n}(k)\right\}$ can be viewed as an iterated system of Riccati maps, in which the random switching sequence is governed by the Markov chain $\left\{z_{n}(k)\right\}$. A more intuitive explanation comes from the particle interpretation; precisely the $n$-th sequence may be viewed as a particle originating at node $n$ and hopping around the network as a Markov chain with transition probability $\overline{A}$ whose instantaneous state $P_{n}(k)$ evolves via the Riccati operator at its current location. In particular, in contrast to the sequence $\left\{\widehat{P}_k^{n}\right\}$ of the original conditional error covariances at sensor $n$, the sequence $\left\{P_{n}(k)\right\}$ does not correspond to the evolution of the error covariance at a particular sensor. The following proposition establishes the relation between $\left\{P_{n}(k)\right\}$ and the sequence $\left\{\widehat{P}_k^{n}\right\}$ of interest.
\begin{proposition}
\label{prop_semi}
Consider the sequence of random permutations $\left\{\pi_{k}\right\}$ on $V$, given by
\begin{equation}\label{swP100}
\left(\pi_{k+1}(1),\!\cdots\!,\pi_{k+1}(N)\right)\!=\!\left({\pi_{k}(1)}_k^{\rightarrow},\!\cdots\!,{\pi_{k}(N)}_k^{\rightarrow}\right)
\end{equation}
with initial condition
\begin{equation}
\label{swP101}
\left(\pi_{0}(1),\cdots,\pi_{0}(N)\right)=\left(1,\cdots,N\right).
\end{equation}
Note that $\pi_{k}(n)=z_{n}(k)$ for every $n$, where $z_{n}(k)$ is defined in~\eqref{part_n}. Then, for $k\in\mathbb{N}$,
\begin{equation}
\label{swP2}
\left(P_{1}(k),\cdots,P_{N}(k)\right)=\left(\widehat{P}_{k}^{\pi_{k}(1)},\cdots,\widehat{P}_{k}^{\pi_{k}(N)}\right).
\end{equation}
\end{proposition}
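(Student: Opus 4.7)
The plan is to prove~\eqref{swP2} by a direct induction on $k\in\mathbb{N}$, leveraging two structural facts: (i) each random adjacency matrix $A(k)\in\mathcal{M}$ encodes a matching (every node is incident to exactly one edge, possibly a self-loop), so the neighbor map $n\mapsto n_k^{\rightarrow}$ is an involution, i.e., $(n_k^{\rightarrow})_k^{\rightarrow}=n$; and (ii) the definition of the permutation sequence $\{\pi_k\}$ in~\eqref{swP100}--\eqref{swP101} is constructed so that $\pi_k(n)$ tracks exactly where the state that originated at sensor $n$ currently sits, which is precisely what the particle $z_n(k)$ of~\eqref{part_n} records.

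For the base case $k=0$, the equality is immediate: $\pi_0=\operatorname{id}$ by~\eqref{swP101} and each sensor is initialized with the common prior $\widehat{P}_0$, so $P_n(0)=\widehat{P}_0=\widehat{P}_0^{\pi_0(n)}$ for every $n$. For the inductive step, assume $P_n(k)=\widehat{P}_k^{\pi_k(n)}$ for all $n$. By the definition of the switched Riccati iterates in~\eqref{swP}, and using the identification $z_n(k)=\pi_{k+1}(n)$ coming from comparing the recursion in~\eqref{part_n} with~\eqref{swP100} (together with the convention $A(0)=I_N$), we obtain
\begin{equation*}
P_n(k+1)=f_{\jmath(\mathcal{I}_k^{\pi_{k+1}(n)})}\!\left(\widehat{P}_k^{\pi_k(n)}\right).
\end{equation*}

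On the estimation side, applying~\eqref{est_up2} with $m=\pi_{k+1}(n)$ yields
\begin{equation*}
\widehat{P}_{k+1}^{\pi_{k+1}(n)}=f_{\jmath(\mathcal{I}_k^{\pi_{k+1}(n)})}\!\left(\widehat{P}_k^{(\pi_{k+1}(n))_k^{\rightarrow}}\right).
\end{equation*}
Here the key step is to simplify the inner superscript: by~\eqref{swP100}, $\pi_{k+1}(n)=\pi_k(n)_k^{\rightarrow}$, so the involution property of $A(k)$ gives $(\pi_{k+1}(n))_k^{\rightarrow}=(\pi_k(n)_k^{\rightarrow})_k^{\rightarrow}=\pi_k(n)$. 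Plugging this in and comparing with the expression for $P_n(k+1)$ closes the induction. Iterating over $n\in\{1,\dots,N\}$ then yields the vector identity~\eqref{swP2}.

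The only delicate point I anticipate is the bookkeeping around the indices: one has to match precisely the epoch in which the swap $A(k)$ is applied, the epoch in which the observation set $\mathcal{I}_k^{\cdot}$ is collected, and the recursions for $z_n$ and $\pi_k$ (including the convention $A(0)=I_N$ that synchronizes the two). Once this alignment is pinned down, the induction is mechanical: the involution of the matching matrices $\{A(k)\}$ is exactly what makes the swapped superscript $(\pi_{k+1}(n))_k^{\rightarrow}$ collapse back to $\pi_k(n)$, which is the combinatorial heart of the argument.
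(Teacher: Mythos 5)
Your proof is correct and is essentially the argument the paper intends (the paper states Proposition~\ref{prop_semi} without proof, deferring to the prior GIKF analysis): a direct induction on $k$ in which the matching structure of $A(k)\in\mathcal{M}$ gives the involution $(n_k^{\rightarrow})_k^{\rightarrow}=n$, so that applying \eqref{est_up2} at sensor $\pi_{k+1}(n)$ collapses the inner superscript to $\pi_k(n)$ and matches the switched iterate \eqref{swP}. You also handle the one genuinely delicate point correctly: under the literal recursions \eqref{part_n} and \eqref{swP100} with the convention $A(0)=I_N$, the consistent identification is $z_n(k)=\pi_{k+1}(n)$ (the proposition's parenthetical ``$\pi_k(n)=z_n(k)$'' is off by one with these conventions), and this is precisely what makes the observation-set indices $\mathcal{I}^{k}_{z_n(k)}$ and $\mathcal{I}^{\pi_{k+1}(n)}_{k}$ coincide so the induction closes.
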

The above proposition suggests that the asymptotics of the desired sequence $\left\{\widehat{P}_k^{n}\right\}$ for every $n$ can be
obtained by studying the asymptotics for the sequences
$\left\{P_{n}(k)\right\}$. 
Hence, in the subsequent sections, we will focus on $\left\{P_{n}(k)\right\}$, rather than working
directly with the sequences $\left\{\widehat{P}_k^{n}\right\}$ of interest, which
involve a much more complicated statistical dependence.
\subsection{The Auxiliary Sequence $\left\{ \widetilde{P}(k)\right\}$}
\label{aux-RDS}
Since the switching Markov chains $\{z_n(k)\}$ are non-stationary, in order to analyze the processes $\left\{P_{n}(k)\right\}$ for $n=1,...,N$ under the scope of iterated random systems \cite{DiaconisFreedman} or RDSs \cite{Arnold}, we propose an auxiliary process $\left\{ \widetilde{P}(k)\right\}$ evolving with similar random Riccati iterates, but for which the corresponding switching Markov chain $\{\widetilde{z}(k)\}$ is stationary, i.e., $\{\widetilde{z}(k)\}$ is initialized by the uniform invariant measure on $V$. Then, we can analyze the asymptotic properties of the auxiliary sequence $\left\{ \widetilde{P}(k)\right\}$ by formulating it as an RDS on the space $\mathbb{S}_+^N$ and derive the asymptotics of the sequence $\left\{P_{n}(k)\right\}$ for $n=1,...,N$. The auxiliary sequence $\left\{ \widetilde{P}(k)\right\}$ is formally defined as follows, which follows the concept proposed in \cite{Riccati-weakcons}, but with necessary and non-trivial modifications to take into account observation dissemination.

Consider a Markov chain $\{\widetilde{z}(k)\}_{k\in \mathbb{T}_+}$ on the graph $V$, with transition matrix $\overline{A}$ and uniform initial distribution as follows:
\begin{equation}
  \mathbb{P} [ \widetilde{z}(0)=n]=\frac{1}{N},~~n=1,...,N.
\end{equation}
By proposition \ref{prop_int}, the Markov chain $\{\widetilde{z}(k)\}$ is stationary.

Now we can define the auxiliary process $\left\{ \widetilde{P}(k)\right\}$ with similar random Riccati iterates as
\begin{equation}
  \widetilde{P}(k+1)= f_{\jmath \, (\scriptstyle \mathcal{I}_{\widetilde{z}(k)}^{k})} \left(\widetilde{P}(k)\right)
\end{equation}
with (possibly random) initial condition $\widetilde{P}(0)$\footnote{Note that the sequences $\left\{P_{n}(k)\right\}$ of interest have deterministic initial conditions, but it is required for technical reasons to allow random initial states $\widetilde{P}(0)$ to study the auxiliary sequence $\left\{ \widetilde{P}(k)\right\}$.}.

In order to proceed with the asymptotic analysis of the auxiliary sequence $\left\{\widetilde{P}(k)\right\}$, we construct an RDS $(\theta,\varphi)$ on $\mathbb{S}_+^N$, equivalent to the auxiliary sequence $\left\{\widetilde{P}(k)\right\}$ in the sense of distribution. The construction process is similar to that in our previous paper \cite{Riccati-weakcons}; thus the details are omitted here. Briefly, denote $(\Omega, \mathcal{F}, \mathbb{P},\{\theta_k,k\in \mathbb{T}\})$ as a metric dynamical system, where $(\Omega, \mathcal{F}, \mathbb{P})$ is a probability space and the family of transformations $\{\theta_k\}_{k\in \mathbb{T}}$ on $\Omega$ is the family of left-shifts, i.e., $\theta_k w(\cdot)=w(k+\cdot),~\forall k\in \mathbb{T},~w\in \Omega $ ; the cocycle $\varphi:~\mathbb{T}_+\times\Omega \times \mathbb{S}_+^N \mapsto  \mathbb{S}_+^N$ is defined by
\begin{align}
\varphi(0,w,X)&=X,~\forall w,X, \nonumber\\
\varphi(1,w,X)&=f_{\jmath \,(\mathcal{I}_{w(0)}^{0})}(X),~\forall w,X, \nonumber\\
\varphi(k,w,X)&=f_{\jmath \,(\mathcal{I}_{\theta_{k-1}w(0)}^{k-1})}(\varphi(k-1,w,X))\nonumber\\
&=f_{\jmath \,(\mathcal{I}_{w(k-1)}^{k-1})}(\varphi(k-1,w,X)),~\forall k>1, w, X. \nonumber\label{cocycle3}
\end{align}

From the construction of $\{\theta,\varphi\}$, the sequence $\left\{ \varphi (k,w,{P}_{n}(0))\right\}_{k\in \mathbb{T}_+} $ is distributionally equivalent to the sequence $\left\{\widetilde{P}(k)\right\}_{k \in \mathbb{T}_+}$, i.e.,
\begin{equation}
\varphi(k,w,{P}_{n}(0))\overset{d}{=} \widetilde{P}(k),~\forall k\in\mathbb{T}_+.
\end{equation}
Therefore, analyzing the asymptotic distribution properties of the sequence $\left\{ \widetilde{P}(k)\right\}$ equals to studying the sequence $\left\{ \varphi (k,w,{P}_{n}(0))\right\}$, which we will analyze in the sequel.

We first establish some properties of the RDS $(\theta,\varphi)$ that represents the sequence $\left\{ \widetilde{P}(k)\right\}$.
\begin{lemma}\label{lem_RDSprop}
\item \mbox{(i)} The RDS $(\theta,\varphi)$ is conditionally compact.
\item \mbox{(ii)} The RDS $(\theta,\varphi)$ is order preserving.
\item \mbox{(iii)} If in addition $\mathcal{Q}$ is positive definite, i.e., $\mathcal{Q}\gg 0$, the RDS $(\theta,\varphi)$ is strongly sublinear.
\end{lemma}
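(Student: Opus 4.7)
The plan is to verify the three properties directly from the structure of the subset Riccati operators $f_{\jmath}$ defined in~\eqref{def_Riccati}, using well-known monotonicity and concavity properties of the discrete algebraic Riccati operator, and then lifting these pointwise properties to the cocycle $\varphi$ by the composition formula. Since each $f_{\jmath}$ is deterministic and continuous, and since $\varphi(k,w,\cdot)$ is a finite composition $f_{\jmath(\mathcal{I}^{k-1}_{w(k-1)})}\circ\cdots\circ f_{\jmath(\mathcal{I}^{0}_{w(0)})}$, the per-step properties immediately yield the corresponding cocycle properties.

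For part (i), I would establish a Lyapunov-type absorbing bound. Observe that for any $\jmath$ and any $X\in\mathbb{S}_+^M$,
\begin{equation*}
f_{\jmath}(X)\preceq \mathcal{F}X\mathcal{F}^T+\mathcal{Q},
\end{equation*}
so iterating gives $\varphi(k,w,X)\preceq \mathcal{F}^k X (\mathcal{F}^k)^T+\sum_{i=0}^{k-1}\mathcal{F}^i\mathcal{Q}(\mathcal{F}^i)^T$ deterministically. When $\mathcal{F}$ is stable this already yields a bounded absorbing set; when $\mathcal{F}$ is unstable we use the weak detectability Assumption~\textbf{D.1} together with Assumption~\textbf{(E.3)} (which ensures $\mathbb{P}(\mathcal{I}^n_k=\{1,\ldots,N\})>0$) to argue that, $\mathbb{P}$-a.s., the trajectory encounters within finitely many steps a composition of Riccati iterates corresponding to a detectable subsystem, giving a uniform-in-$X$ bound on the trailing iterates. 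Combined with the fact that any norm-bounded subset of $\mathbb{S}_+^M$ is relatively compact in the standard Euclidean topology, this yields conditional compactness of $(\theta,\varphi)$.

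For part (ii), I would recall that each Riccati operator $f_{\jmath}$ is monotone on the cone, i.e., $X\preceq Y$ implies $f_{\jmath}(X)\preceq f_{\jmath}(Y)$; this is standard and can be derived, for instance, from the variational (minimum-over-gains) characterization
\begin{equation*}
f_{\jmath}(X)=\min_{K}\bigl[(\mathcal{F}-K\mathcal{C}_{\jmath})X(\mathcal{F}-K\mathcal{C}_{\jmath})^T+K\mathcal{R}_{\jmath}K^T+\mathcal{Q}\bigr].
\end{equation*}
Order preservation of $\varphi(k,w,\cdot)$ then follows by induction on $k$, since compositions of order-preserving maps remain order preserving.

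For part (iii), the main computation is a one-step sublinearity bound for $f_{\jmath}$. Using the matrix identity $(\mathcal{C}_{\jmath}(\alpha X)\mathcal{C}_{\jmath}^T+\mathcal{R}_{\jmath})^{-1}=\alpha^{-1}(\mathcal{C}_{\jmath}X\mathcal{C}_{\jmath}^T+\alpha^{-1}\mathcal{R}_{\jmath})^{-1}$, a direct expansion gives
\begin{equation*}
f_{\jmath}(\alpha X)-\alpha f_{\jmath}(X)=(1-\alpha)\mathcal{Q}+\alpha\,\mathcal{F}X\mathcal{C}_{\jmath}^T\bigl[(\mathcal{C}_{\jmath}X\mathcal{C}_{\jmath}^T+\mathcal{R}_{\jmath})^{-1}-(\mathcal{C}_{\jmath}X\mathcal{C}_{\jmath}^T+\alpha^{-1}\mathcal{R}_{\jmath})^{-1}\bigr]\mathcal{C}_{\jmath}X\mathcal{F}^T,
\end{equation*}
and since $\alpha^{-1}>1$ the bracketed term is positive semidefinite, so $f_{\jmath}(\alpha X)-\alpha f_{\jmath}(X)\succeq(1-\alpha)\mathcal{Q}\gg 0$ whenever $\mathcal{Q}\gg 0$ and $\alpha\in(0,1)$. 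To extend this strict one-step inequality to the cocycle $\varphi(k,w,\cdot)$ for $k\geq 2$, I combine it with part (ii): applying $f_{\jmath}$ to $f_{\jmath'}(\alpha X)\succ\alpha f_{\jmath'}(X)+\delta I$ (for a $\delta>0$ that depends on $\mathcal{Q}$ and $\alpha$), monotonicity of $f_{\jmath}$ and another application of the Lyapunov lower bound keep the strict gap across compositions, yielding $\varphi(k,w,\alpha X)\gg\alpha\varphi(k,w,X)$ for all $k\geq 1$ and $\alpha\in(0,1)$.

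The main obstacle is part (i): verifying conditional compactness in the unstable case, where the bare Lyapunov bound blows up. The delicate point is to argue that, along $\mathbb{P}$-almost every sample path $w$, the random subset sequence $\{\mathcal{I}^{k}_{w(k)}\}$ produced by the dissemination protocol generates, sufficiently often, a composition of Riccati operators that stabilizes the dynamics, so that the cocycle trajectory is uniformly absorbed into a compact subset of $\mathbb{S}_+^M$. This mirrors the analogous step in the GIKF analysis of~\cite{Riccati-weakcons}, adapted to the enlarged observation sets $\mathcal{I}^n_k$ permitted by M-GIKF, where Assumption~\textbf{(E.3)} plays the crucial role of guaranteeing that the fully-observed subset occurs infinitely often.
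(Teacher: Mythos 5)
The paper itself gives no argument here: it simply points to the GIKF analysis in \cite{Riccati-weakcons} for both the definitions (conditional compactness, order preservation, sublinearity) and the proofs. Your explicit treatments of (ii) and (iii) are correct and coincide with the standard route taken there: monotonicity via the minimum-over-gains characterization, and the one-step identity $f_{\jmath}(\alpha X)-\alpha f_{\jmath}(X)=(1-\alpha)\mathcal{Q}+\alpha\,\mathcal{F}X\mathcal{C}_{\jmath}^{T}\bigl[(\mathcal{C}_{\jmath}X\mathcal{C}_{\jmath}^{T}+\mathcal{R}_{\jmath})^{-1}-(\mathcal{C}_{\jmath}X\mathcal{C}_{\jmath}^{T}+\alpha^{-1}\mathcal{R}_{\jmath})^{-1}\bigr]\mathcal{C}_{\jmath}X\mathcal{F}^{T}\succeq(1-\alpha)\mathcal{Q}$, propagated through the cocycle by monotonicity; these checks are fine.

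For (i), however, what you write is not yet a proof, and the ingredient you lean on is the wrong one. Saying that a detectable composition is encountered ``within finitely many steps'' (or infinitely often) does not by itself give conditional compactness: between two such windows the Lyapunov-type upper bound can regrow, and the gaps between good windows are a.s.\ unbounded, so one must produce the absorption statement in the precise (pullback, random compact set) sense used in \cite{Riccati-weakcons} -- the combination of the uniform bound after an observability window with the stationarity/ergodicity of the driving chain -- which you explicitly defer. Moreover, invoking Assumption \textbf{(E.3)} is both unnecessary and slightly off-target: the lemma is stated for each fixed $\overline{\gamma}$, and must in particular cover the GIKF regime where no observation dissemination occurs, so the argument cannot hinge on the full observation set appearing. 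The clean way to adapt \cite{Riccati-weakcons} is domination: since $\{n\}\subset\mathcal{I}^{n}_{k}$ always, the same gain-restriction argument you use for monotonicity gives $f_{\jmath}(X)\preceq f_{\jmath'}(X)$ whenever $\jmath\supseteq\jmath'$, hence pathwise (same swap sequence, same initial condition) each M-GIKF iterate is dominated above by the corresponding single-observation GIKF iterate. By order preservation the M-GIKF orbit lies in the order interval below the GIKF orbit, and order intervals in $\mathbb{S}_{+}^{M}$ are compact, so conditional compactness transfers directly from the GIKF result proved in \cite{Riccati-weakcons} under \textbf{C.1}, \textbf{S.1}, \textbf{D.1}, with no appeal to \textbf{(E.3)}. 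With that replacement your part (i) closes; as written, it leaves the essential absorption step unproved.
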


The proof of Lemma \ref{lem_RDSprop} and the concepts including conditionally compact, order preserving, and sublinearity, are discussed in our prior work \cite{Riccati-weakcons}.
%
%

\section{Main Results and Discussions}
\label{main_res}
In this section, we present the main results of the paper, which include two parts. The first part concerns the asymptotic properties of the conditional error covariance processes at the sensors for a fixed $\overline{\gamma}>0$. These results generalize those obtained for the GIKF in~\cite{Riccati-weakcons}. The second part is the LD results concerning the family $\{\mathbb{\mu}\}$ of the $\{\mathbb{\mu}^{\overline{\gamma}}\}$ of the invariant filtering measures as $\overline{\gamma}\rightarrow\infty$, which is our key focus in this paper. As such, we just present the main results, the proofs of which are similar to those in \cite{Riccati-weakcons}.

\subsection{Asymptotic Results for Finite $\overline{\gamma}$}
\label{asy-aux}
We fix a $\overline{\gamma}>0$. First, we present the asymptotic properties of the auxiliary sequences $\left\{\widetilde{P}(k)\right\}$.
\begin{theorem} \label{asym_thm}
Under the assumptions \textbf{C.1}, \textbf{S.1}, and \textbf{D.1}, there exists a unique invariant probability measure $\mu^{\overline{\gamma}}$ on the space of positive semidefinite matrices $\mathbb{S}_+^N$, such that the sequence $\left\{ \widetilde{P}(k) \right\}$ converges weakly (in distribution) to $\mu^{\overline{\gamma}}$ from every initial condition $P_{n}(0)$ for each $n\in [1,\cdots,N]$, i.e.,
\begin{equation}
\left\{ \widetilde{P}(k) \right\} \Rightarrow \mu^{\overline{\gamma}}.
\end{equation}
\end{theorem}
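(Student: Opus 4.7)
The plan is to recast the theorem as a statement about the long-time behaviour of the RDS $(\theta,\varphi)$ constructed in Section \ref{aux-RDS}, and then apply the general theory of order-preserving, strongly sublinear RDSs on the cone $\mathbb{S}_{+}^{N}$, building directly on the three structural properties of $(\theta,\varphi)$ recorded in Lemma \ref{lem_RDSprop}. By the distributional identity $\varphi(k,w,P_{n}(0))\stackrel{d}{=}\widetilde{P}(k)$, weak convergence for $\{\widetilde{P}(k)\}$ will follow from pullback/forward convergence for $\varphi$.

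The first step is to establish stochastic boundedness (and hence tightness) of $\{\widetilde{P}(k)\}$. Since $\{\widetilde{z}(k)\}$ has transition matrix $\overline{A}$, which by Assumption \textbf{C.1} is irreducible and aperiodic, the chain is positive recurrent with the uniform distribution as its unique stationary law (cf. Proposition \ref{prop_int}). Hence almost surely the chain traverses the finite walk $(n_{1},\ldots,n_{l})$ of Assumption \textbf{D.1} infinitely often, at a sequence of renewal times $\{\tau_{j}\}$. At each such renewal the composed Riccati operator $f_{\jmath(\mathcal{I}_{\tau_{j}+l-1})}\circ\cdots\circ f_{\jmath(\mathcal{I}_{\tau_{j}})}$ is, with positive conditional probability bounded below by $\prod_{i=1}^{l} q_{n_{i}}(2^{N}-1)>0$ (Assumption \textbf{E.3}), a full-observation composition, whose Kalman-information accumulation is governed by $\sum_{i=1}^{l}(\mathcal{F}^{i-1})^{T}\mathcal{C}_{n_{i}}^{T}\mathcal{C}_{n_{i}}\mathcal{F}^{i-1}\succ 0$. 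Combined with the bounded-input stability furnished by Assumption \textbf{S.1}, a standard Lyapunov-style argument on the Riccati iterates (tracking $\operatorname{tr}\widetilde{P}$ between renewals) yields $\sup_{k}\mathbb{E}[\operatorname{tr}\widetilde{P}(k)]<\infty$, hence tightness on $\mathbb{S}_{+}^{N}$.

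Existence of an invariant probability measure then follows by a Krylov--Bogoliubov argument: the Cesaro averages $\tfrac{1}{K}\sum_{k=0}^{K-1}\mathcal{L}(\widetilde{P}(k))$ form a tight family, and any Prokhorov subsequential limit $\mu^{\overline{\gamma}}$ is invariant under the Markov kernel of $\{\widetilde{P}(k)\}$ by Feller continuity of the Riccati operators $f_{\jmath}$. For uniqueness and convergence from arbitrary initial conditions, I would invoke Lemma \ref{lem_RDSprop}(ii)--(iii): for any two initial covariances with $X_{1}\preceq X_{2}$, order preservation gives $\varphi(k,w,X_{1})\preceq\varphi(k,w,X_{2})$ for all $k$, while strong sublinearity forces these two sample trajectories to contract toward each other as $k\to\infty$ (this is the coupling mechanism that characterises a unique attracting random fixed point in the Arnold RDS framework). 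For unordered pairs, sandwich between $0$ and a dominating deterministic matrix; conditional compactness from Lemma \ref{lem_RDSprop}(i) ensures the pullback limit exists and defines a unique tempered random fixed point $U^{\overline{\gamma}}(\omega)$, whose law is $\mu^{\overline{\gamma}}$.

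The main obstacle is the stochastic boundedness step. The Riccati recursion is nonlinear and non-commutative across different observation subsets $\mathcal{I}$, and the switching Markov chain $\{\widetilde{z}(k)\}$ is temporally correlated, so one cannot simply invoke i.i.d. arguments. The key technical device is to bound $\widetilde{P}(k)$ on the renewal blocks defined by successive visits of $\{\widetilde{z}(k)\}$ to the detectable walk of \textbf{D.1} jointly with the full-observation event from \textbf{E.3}, and to control the inter-renewal growth via the stabilizability of $(\mathcal{F},\mathcal{Q}^{1/2})$ from \textbf{S.1}. Once this bound is in hand, every remaining ingredient — existence, uniqueness, and weak convergence — is a routine consequence of Lemma \ref{lem_RDSprop} together with the standard order-preserving strongly sublinear RDS theory invoked in~\cite{Riccati-weakcons}.
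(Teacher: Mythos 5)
Your proposal takes essentially the same route as the paper, which proves Theorem~\ref{asym_thm} by formulating the auxiliary sequence as the RDS $(\theta,\varphi)$, invoking the conditional compactness, order preservation, and strong sublinearity of Lemma~\ref{lem_RDSprop}, and then appealing to the attracting-random-equilibrium theory for such RDSs exactly as in \cite{Riccati-weakcons} (the stochastic boundedness being supplied by \textbf{D.1} and \textbf{S.1} along renewal blocks of the stationary chain $\{\widetilde{z}(k)\}$). One caveat: $\{\widetilde{P}(k)\}$ alone is not Markov (its past reveals $\widetilde{z}(k-1)$), so your Krylov--Bogoliubov step should be run on the joint Feller chain $(\widetilde{P}(k),\widetilde{z}(k))$ or simply omitted, since the unique attracting random fixed point already furnishes both existence and uniqueness of $\mu^{\overline{\gamma}}$.
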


Theorem~\ref{asym_thm} implies that the sequence $\left\{ \widetilde{P}(k) \right\}$ reaches consensus in the weak sense to the same invariant measure $\mu^{\overline{\gamma}}$ irrespective of the initial states, since $\mu^{\overline{\gamma}}$ does not depend on the index $n$ and on the initial state $\widetilde{P}(0)$ of the sequence $\left\{ \widetilde{P}(k) \right\}$.

Based on Theorem~\ref{asym_thm}, we can deduce Theorem \ref{main1}, which does not directly touch the sequences $\left\{\widehat{P}^{n}_k\right\}$ for $n=1,\cdots,N$, but sets the stage for showing the key result regarding the convergence of these sequences.
\begin{theorem}
\label{main1}
As defined in Section~\ref{aux-RDS}, $\left\{\widetilde{z}(k)\right\}$ is a stationary Markov chain on $V$ with transition probability matrix $\overline{A}$, i.e., $\widetilde{z}(0)$ is distributed uniformly on $V$. Let $\mathbb{\nu}$ be a probability measure on $\mathbb{S}_{+}^{M}$; and the process $\left\{\widetilde{P}(k)\right\}$ is given by
    \begin{equation}
    \label{main1:2}\widetilde{P}(k+1)=f_{\jmath \,(\mathcal{I}_{\widetilde{z}(k)}^{k})}\left(\widetilde{P}(k)\right),~k\in\mathbb{T}_{+}
    \end{equation}
    where $\widetilde{P}(0)$ is distributed as $\mathbb{\nu}$, independent of the Markov chain $\left\{\widetilde{z}(k)\right\}$ and the processes $\{\mathcal{I}_{n}^{k}\}$ for all $n$. Then,
    there exists a unique probability measure $\mathbb{\mu}^{\overline{\gamma}}$ such that, for every $\mathbb{\nu}$, the process $\left\{\widetilde{P}(k)\right\}$ constructed above converges weakly to $\mathbb{\mu}^{\overline{\gamma}}$ as $k\rightarrow\infty$, i.e.,
    \begin{equation}
    \label{main1:1} f_{\jmath \,(\mathcal{I}_{\widetilde{z}(k)}^{k})}\circ f_{\jmath \,(\mathcal{I}_{\widetilde{z}(k-1)}^{k-1})}\cdots\circ f_{\jmath \,(\mathcal{I}_{\widetilde{z}(0)}^{0})}\left(\widetilde{P}(0)\right)\Longrightarrow\mu^{\overline{\gamma}}.
\end{equation}
\end{theorem}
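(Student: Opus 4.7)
The plan is to deduce Theorem \ref{main1} by recognizing that the sequence $\{\widetilde{P}(k)\}$ defined in the statement is exactly the auxiliary sequence from Section \ref{aux-RDS}, and therefore its asymptotic distributional behavior can be read off from the random dynamical system $(\theta,\varphi)$ constructed there. In particular, the distributional identity $\varphi(k,w,P_n(0))\overset{d}{=}\widetilde{P}(k)$ reduces the convergence question to a statement about orbits of an RDS driven by a stationary, ergodic background chain $\{\widetilde{z}(k)\}$ (Proposition \ref{prop_int}) and the i.i.d.\ observation-set sequences $\{\mathcal{I}^{n}_{k}\}$ (Assumption \textbf{(E.2)}).

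First, I would import the structural properties of the RDS collected in Lemma \ref{lem_RDSprop}: conditional compactness, order preservation on the cone $\mathbb{S}_{+}^{M}$, and, under Assumption \textbf{S.1}, strong sublinearity. These three properties are precisely the ingredients that, following the argument template of \cite{Riccati-weakcons}, allow one to construct a (forward/pullback) attracting random fixed point for the RDS and hence a unique invariant measure $\mu^{\overline{\gamma}}$ on $\mathbb{S}_{+}^{M}$ whose law coincides with the limit in distribution of $\varphi(k,w,X_{0})$ for every deterministic $X_{0}$. This gives the conclusion~\eqref{main1:1} when $\widetilde{P}(0)$ is a point mass; in particular, Theorem~\ref{asym_thm} is the special case recovered here.

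Second, I would upgrade from deterministic initial states to an arbitrary probability measure $\nu$ on $\mathbb{S}_{+}^{M}$. Because $\widetilde{P}(0)\sim\nu$ is independent of both the Markov chain $\{\widetilde{z}(k)\}$ and the observation-set sequences $\{\mathcal{I}_{n}^{k}\}$, one may condition on $\widetilde{P}(0)=X$ and apply Fubini: for any bounded continuous $g$ on $\mathbb{S}_{+}^{M}$,
\begin{equation}
\mathbb{E}\bigl[g(\widetilde{P}(k))\bigr]=\int \mathbb{E}\bigl[g(\widetilde{P}(k))\,\big|\,\widetilde{P}(0)=X\bigr]\,d\nu(X).
\end{equation}
The inner expectation converges to $\int g\,d\mu^{\overline{\gamma}}$ for each $X$ by the deterministic-initial-condition case, and boundedness of $g$ together with dominated convergence transfers this to the integral. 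Uniqueness of $\mu^{\overline{\gamma}}$ is inherited from the RDS construction, since any invariant law of $\{\widetilde{P}(k)\}$ is pushed forward by the Markov semigroup associated to $(\theta,\varphi)$ and the attracting-fixed-point argument forces all such laws to coincide.

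The main obstacle in this program is the uniqueness/attraction step: strong sublinearity alone does not force contraction unless the iterates see enough observation information along a typical sample path. This is where Assumption \textbf{D.1} (existence of an $N$-covering walk on the mean-adjacency graph whose accumulated observation Gramian is invertible) enters decisively, together with Assumption \textbf{(E.3)} which guarantees that with strictly positive probability the full observation set $\{1,\dots,N\}$ is disseminated within a single epoch. Combined with the ergodicity of $\{\widetilde{z}(k)\}$, these ensure that in finitely many steps the composed Riccati operator $f_{\jmath(\mathcal{I}_{\widetilde{z}(k)}^{k})}\circ\cdots\circ f_{\jmath(\mathcal{I}_{\widetilde{z}(0)}^{0})}$ almost surely becomes strongly contracting in the Hilbert (or Thompson) part metric on $\mathbb{S}_{+}^{M}$, which is exactly the mechanism that eliminates dependence on $\widetilde{P}(0)$ and collapses all initial laws to the same $\mu^{\overline{\gamma}}$. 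Once this contractive coupling is in place, the rest of the argument parallels \cite{Riccati-weakcons} line by line.
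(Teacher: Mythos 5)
Your proposal follows essentially the same route as the paper, which itself only sketches this result by deferring to the RDS machinery of \cite{Riccati-weakcons}: identify $\{\widetilde{P}(k)\}$ with the auxiliary RDS $(\theta,\varphi)$, invoke the properties in Lemma~\ref{lem_RDSprop} to obtain the unique invariant measure and weak convergence from deterministic initial states (Theorem~\ref{asym_thm}), and then pass to an arbitrary initial law $\nu$ by conditioning on $\widetilde{P}(0)$ and dominated convergence. The only small slip is attributing strong sublinearity to Assumption \textbf{S.1} alone, whereas Lemma~\ref{lem_RDSprop}(iii) requires $\mathcal{Q}\gg 0$; otherwise the argument matches the paper's intended proof.
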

We now state the theorem characterizing the convergence properties of the sequences $\left\{\widehat{P}_k^{n}\right\}$.
\begin{theorem}\label{main2}
Let $q$ be a uniformly distributed random
variable on $V$, independent of the sequence of adjacency
matrices $\{A(k)\}$ and the processes $\{\mathcal{I}_{n}^{k}\}$. Then, the sequence
$\left\{\widehat{P}_k^{q}\right\}$ converges weakly to
$\mathbb{\mu}^{\overline{\gamma}}$ defined in
Theorem~\ref{main1}, i.e.,
\begin{equation}\label{main2:1}
    \widehat{P}_k^{q}\Longrightarrow\mu^{\overline{\gamma}}.
\end{equation}
In other words, the conditional error covariance $\left\{\widehat{P}_k^{q}\right\}$ of a randomly selected sensor converges in distribution to $\mathbb{\mu}^{\overline{\gamma}}$.

\end{theorem}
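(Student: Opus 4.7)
The plan is to reduce Theorem \ref{main2} to the already established Theorem \ref{main1} by combining the interacting particle representation of Proposition \ref{prop_semi} with a symmetrization step based on the fact that the uniform distribution on $V$ is the invariant measure of the chain governed by $\overline{A}$ (Proposition \ref{prop_int}(ii)). First I would invoke Proposition \ref{prop_semi} to relabel
\[
\widehat{P}_{k}^{\pi_{k}(n)} = P_{n}(k), \qquad n \in V,~ k \in \mathbb{T}_{+},
\]
so that the desired object $\widehat{P}_{k}^{q}$ becomes a statement about the switched Riccati iterate $P_{q}(k)$ seeded at the (random) node $q$.

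Next, I would run a symmetrization argument to show that uniform sampling of a sensor label is interchangeable with uniform sampling of a particle label, namely
\[
\widehat{P}_{k}^{q} \stackrel{d}{=} \widehat{P}_{k}^{\pi_{k}(q)} = P_{q}(k).
\]
Indeed, since $q$ is uniform on $V$ and independent of $\{A(k)\}$ (which alone determines $\pi_{k}$) and of $\{\mathcal{I}_{n}^{k}\}$, conditioning on the driving noises exposes $\pi_{k}$ as a deterministic bijection, and for every bounded measurable $g$,
\[
\mathbb{E}\bigl[g(\widehat{P}_{k}^{\pi_{k}(q)})\bigr] = \mathbb{E}\!\left[\tfrac{1}{N}\sum_{n\in V} g(\widehat{P}_{k}^{\pi_{k}(n)})\right] = \mathbb{E}\!\left[\tfrac{1}{N}\sum_{m\in V} g(\widehat{P}_{k}^{m})\right] = \mathbb{E}\bigl[g(\widehat{P}_{k}^{q})\bigr],
\]
where the middle equality uses the bijectivity of $\pi_{k}$ and the outer equalities use the uniform law of $q$ and its independence from the rest.

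The third step is to identify the law of $\{P_{q}(k)\}$ with that of the auxiliary sequence $\{\widetilde{P}(k)\}$ of Section \ref{aux-RDS}. Because $q$ is uniform on $V$ and Proposition \ref{prop_int}(ii) shows the uniform distribution to be invariant for the chain with transition matrix $\overline{A}$, the chain $\{z_{q}(k)\}$ starting from the uniform state is stationary and coincides in distribution with $\{\widetilde{z}(k)\}$. Together with Assumptions (E.1)-(E.2), which make the observation sets $\{\mathcal{I}_{n}^{k}\}$ i.i.d.\ in $k$ and independent of the chain, this identifies in distribution the full driving data of the recursion defining $\{P_{q}(k)\}$ with that of $\{\widetilde{P}(k)\}$; with the common deterministic initial state $\widehat{P}_{0}$ shared across sensors, one obtains $P_{q}(k)\stackrel{d}{=}\widetilde{P}(k)$ for every $k$. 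Applying Theorem \ref{main1} with $\nu=\delta_{\widehat{P}_{0}}$ then yields $P_{q}(k)\Rightarrow\mu^{\overline{\gamma}}$, and chaining with the symmetrization identity gives $\widehat{P}_{k}^{q}\Rightarrow\mu^{\overline{\gamma}}$, which is \eqref{main2:1}.

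The delicate point, and what I expect to require the most careful bookkeeping, is the symmetrization step: one has to track the conditional independence among $q$, the adjacency matrices $\{A(k)\}$ (which drive both the covariance recursion through the neighborhood selections and the permutations $\pi_{k}$), and the observation index processes $\{\mathcal{I}_{n}^{k}\}$. Once this is in place, the remainder amounts to swapping the seeded chain $\{z_{q}(k)\}$ for its stationary counterpart via Proposition \ref{prop_int}(ii) and invoking Theorem \ref{main1} for the weak-limit claim.
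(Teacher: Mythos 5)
Your proposal is correct and follows essentially the same route the paper intends: it uses Proposition~\ref{prop_semi} to pass from $\widehat{P}_k^{q}$ to the particle sequence, the permutation/uniformity symmetrization to identify laws, stationarity of the uniformly-initialized chain (Proposition~\ref{prop_int}(ii)) to match the auxiliary sequence $\{\widetilde{P}(k)\}$, and then Theorem~\ref{main1} with $\nu=\delta_{\widehat{P}_0}$. The paper omits the details (deferring to the argument in \cite{Riccati-weakcons}), but the machinery it sets up is exactly the chain of steps you carry out, so no gap remains.
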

\begin{remark}
\label{rem-main2}
Theorem \ref{main2} reinforces the weak consensus achieved by the M-GIKF, i.e., the conditional error covariance at a randomly selected sensor converges in distribution to an invariant measure $\mu^{\overline{\gamma}}$. In other words, it provides an estimate $\{\widehat{\mathbf{x}}_q(k)\}$ for the entire signal $\mathbf{x}$, where $\{\widehat{\mathbf{x}}_q (k)\}$ could be obtained by uniformly selecting a sensor $q$ and using its estimate $\{\widehat{\mathbf{x}}_q (k)\}$ for all time $k$. Also, note that the results here pertain to the limiting distribution of the conditional error covariance and hence the pathwise filtering error. This is a much stronger result than just providing the moment estimates of the conditional error covariance, which does not provide much insight into the pathwise instantiation of the filter. In the following subsection, we provide an analytical characterizations of the invariant measure $\mu^{\overline{\gamma}}$ by showing that it satisfies the LD lower and upper bounds as $\mu^{\overline{\gamma}} \rightarrow \infty$ .
%
\end{remark}

\subsection{Large Deviation Probabilities for the $\{\mathbb{\mu}^{\overline{\gamma}}\}$ Family}
\label{ldp-mu}
In this section, we characterize the invariant measure $\mathbb{\mu}^{\overline{\gamma}}$ governing the asymptotics of the conditional sensor error covariance process $\{\widehat{P}_{k}^{n}\}$,~$n=1,\cdots,N$. The following result is a first step to understanding the behavior of the invariant distribution $\mathbb{\mu}^{\overline{\gamma}}$ family.

\begin{theorem}\label{thm_convDirac}
The family of invariant distributions $\mathbb{\mu}^{\overline{\gamma}}$ converges weakly, as $\overline{\gamma} \rightarrow \infty$, to the Dirac measure $\delta_{P^{\ast}}$ corresponding to the performance of the centralized estimator (recall, $P^{\ast}$ is the unique fixed point of the centralized Riccati operator $f_{2^{N}-1}$).
\end{theorem}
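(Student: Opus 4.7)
The strategy is to show that as $\overline{\gamma}\to\infty$ the random Riccati iterate defining $\mu^{\overline{\gamma}}$ behaves more and more like the deterministic centralized iteration $f_{2^{N}-1}$, whose unique globally attracting fixed point on $\mathbb{S}_{+}^{M}$ is $P^{\ast}$. Two ingredients drive the argument: (a) assumptions \textbf{(E.3)}--\textbf{(E.4)} force $q_{n}(2^{N}-1)\to 1$ exponentially fast, so that with overwhelming probability only the full-observation Riccati map is applied; (b) under \textbf{S.1} and \textbf{D.1}, the pair $(\mathcal{F},\mathcal{C}_{2^{N}-1})$ is stabilizable and detectable, so iterating $f_{2^{N}-1}$ drives every initial state to $P^{\ast}$ by the classical Kalman convergence theorem.

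First I would establish tightness of the family $\{\mu^{\overline{\gamma}}\}_{\overline{\gamma}\geq 0}$ on $\mathbb{S}_{+}^{M}$. For the Poisson-based protocol of Section~\ref{ex-dist}, one can realize the dissemination processes at two rates $0\leq\overline{\gamma}_{1}\leq\overline{\gamma}_{2}$ on a common probability space by Poisson superposition, so that $\mathcal{I}_{k}^{\overline{\gamma}_{1},n}\subseteq\mathcal{I}_{k}^{\overline{\gamma}_{2},n}$ a.s. Since the Riccati operator satisfies $f_{\jmath'}(X)\preceq f_{\jmath}(X)$ whenever $\jmath\subseteq\jmath'$ (more observations can only decrease the iterate in the psd order), this monotonicity propagates along the switched iteration to give $\widetilde{P}^{\overline{\gamma}_{2}}(k)\preceq\widetilde{P}^{\overline{\gamma}_{1}}(k)$ a.s. Passing to the limit in $k$, $\mu^{\overline{\gamma}}$ is stochastically dominated in the psd order by $\mu^{0}$, which is a bona fide probability measure on $\mathbb{S}_{+}^{M}$ by Theorem~\ref{asym_thm}; the family is therefore tight.

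By Prokhorov's theorem, every subsequence of $\{\mu^{\overline{\gamma}}\}$ has a further weakly convergent sub-subsequence $\mu^{\overline{\gamma}_{m}}\Rightarrow\mu$. Fix $K\in\mathbb{N}$, let $\widetilde{P}^{\overline{\gamma}_{m}}(0)\sim\mu^{\overline{\gamma}_{m}}$, and let $\mathcal{E}_{K}^{(m)}$ denote the event that $\mathcal{I}_{j}^{\widetilde{z}(j)}=\{1,\dots,N\}$ for every $j=0,\dots,K-1$. Using conditional independence of $\{\mathcal{I}_j^n\}$ across $j$ and \textbf{(E.3)}, $\mathbb{P}(\mathcal{E}_{K}^{(m)})\geq(\min_{n}q_{n}(2^{N}-1))^{K}\to 1$ as $m\to\infty$. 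On $\mathcal{E}_{K}^{(m)}$, $\widetilde{P}^{\overline{\gamma}_{m}}(K)=f_{2^{N}-1}^{K}(\widetilde{P}^{\overline{\gamma}_{m}}(0))$. Continuity of the Riccati operator and the continuous mapping theorem give $f_{2^{N}-1}^{K}(\widetilde{P}^{\overline{\gamma}_{m}}(0))\Rightarrow(f_{2^{N}-1}^{K})_{\ast}\mu$, and since $\mathbb{P}((\mathcal{E}_{K}^{(m)})^{c})\to 0$ this also forces $\widetilde{P}^{\overline{\gamma}_{m}}(K)\Rightarrow(f_{2^{N}-1}^{K})_{\ast}\mu$. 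On the other hand, invariance yields $\widetilde{P}^{\overline{\gamma}_{m}}(K)\sim\mu^{\overline{\gamma}_{m}}\Rightarrow\mu$, so uniqueness of weak limits gives $\mu=(f_{2^{N}-1}^{K})_{\ast}\mu$ for every $K$. Letting $K\to\infty$ and using $f_{2^{N}-1}^{K}(X)\to P^{\ast}$ for every $X\in\mathbb{S}_{+}^{M}$, one more application of the continuous mapping theorem yields $(f_{2^{N}-1}^{K})_{\ast}\mu\Rightarrow\delta_{P^{\ast}}$, whence $\mu=\delta_{P^{\ast}}$. Since every subsequential limit agrees, the full family converges: $\mu^{\overline{\gamma}}\Rightarrow\delta_{P^{\ast}}$.

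The main obstacle is Step~1, the uniform tightness. The monotone Poisson coupling above is transparent for the protocol of Section~\ref{ex-dist}, but for a general protocol satisfying \textbf{(E.1)}--\textbf{(E.4)} without obvious nestedness in $\overline{\gamma}$ one must replace the coupling by a direct stochastic-boundedness bound: the detectability walk supplied by \textbf{D.1} together with the strict positivity of $q_{n}(2^{N}-1)$ and the strongly sublinear / order-preserving properties from Lemma~\ref{lem_RDSprop} can be used to derive moment bounds on $\mu^{\overline{\gamma}}$ uniform in $\overline{\gamma}\geq 0$, paralleling the stochastic boundedness analysis of the GIKF in \cite{Riccati-weakcons}. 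Once tightness is in hand, Steps~2 and~3 are essentially soft, relying only on continuous mapping and the classical convergence of the deterministic centralized Riccati iteration.
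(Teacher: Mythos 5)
Your route is genuinely different from the paper's, and its skeleton (tightness, Prokhorov, identification of every subsequential limit $\mu$ through $\mu=(f_{2^{N}-1}^{K})_{\ast}\mu$ for all $K$ on the high-probability event that the last $K$ dissemination sets are full, then $K\to\infty$) would work once its two supporting steps are secured. The paper instead proves Theorem~\ref{thm_convDirac} directly: it controls the Prohorov distance $d_P(\mu^{\overline{\gamma}},\delta_{P^{\ast}})$ by lower-bounding the mass $\mu^{\overline{\gamma}}(B_{\varepsilon_0}(P^{\ast}))\geq\prod_{k=1}^{r^{\ast}}q_{n_k}(2^{N}-1)$, reusing the estimate \eqref{lb_4} from the proof of the LD lower bound (Lemma~\ref{lem-lowerb}), and then uses $q_{n}(2^{N}-1)\to 1$ under \textbf{(E.4)}. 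What makes the paper's argument short is Proposition~\ref{unif_conv}(ii): the centralized iterates $f_{2^{N}-1}^{r}$ converge to $P^{\ast}$ \emph{uniformly} in the initial state, so conditioning on the last $r_{\varepsilon}$ operators being centralized pins the iterate near $P^{\ast}$ no matter how large the earlier state was; no tightness of $\{\mu^{\overline{\gamma}}\}$ is ever needed. Your argument uses only pointwise convergence $f_{2^{N}-1}^{K}(X)\to P^{\ast}$, which is precisely why you must pay for tightness up front.

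Two steps in your write-up are genuine soft spots. First, tightness is only actually established for the example protocol of Section~\ref{ex-dist}, via the Poisson-superposition coupling that nests $\mathcal{I}^{\overline{\gamma}_1,n}_{k}\subseteq\mathcal{I}^{\overline{\gamma}_2,n}_{k}$; Theorem~\ref{thm_convDirac} is stated for any dissemination protocol satisfying \textbf{(E.1)}--\textbf{(E.4)}, for which no monotone coupling in $\overline{\gamma}$ is available, and your fallback (uniform-in-$\overline{\gamma}$ moment bounds ``paralleling'' \cite{Riccati-weakcons}) is a sketch rather than a proof. The cleanest repair is to drop tightness entirely and invoke Proposition~\ref{unif_conv}(ii), at which point your Step 3 collapses to the paper's one-line bound. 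Second, the invariance claim that $\widetilde{P}^{\overline{\gamma}_m}(0)\sim\mu^{\overline{\gamma}_m}$ (taken independent of the chain) implies $\widetilde{P}^{\overline{\gamma}_m}(K)\sim\mu^{\overline{\gamma}_m}$ is not justified as written: $\{\widetilde{P}(k)\}$ alone is not Markov (the switching chain $\widetilde{z}(k)$ and the $n$-dependent laws $q_{n}(\cdot)$ enter), so $\mu^{\overline{\gamma}}$ is the limiting marginal of the skew-product/RDS dynamics and a product initialization need not be exactly preserved. This is fixable, e.g., by initializing from the joint invariant law of $(\widetilde{P}(k),\widetilde{z}(k))$, whose $P$-marginal is $\mu^{\overline{\gamma}}$, or by replacing exact invariance with a double-limit argument in $k$; but as stated the step has a gap.
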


\begin{remark}
Theorem~\ref{thm_convDirac} states that the family $\{\mu^ {\overline{\gamma}} \}$ converges weakly to the Dirac measure $\delta_{P^{\ast}}$ concentrated at $P^*$, as ${\overline{\gamma}} \rightarrow \infty$, which is intuitive, since with ${\overline{\gamma}} \rightarrow \infty$, the distributed M-GIKF filtering process reduces to classical Kalman filtering with all the observations available at a fusion center, i.e., centralized filtering, where  $P^{\ast}$ is the unique fixed point of this centralized filtering. Therefore, with ${\overline{\gamma}} \rightarrow \infty$, we expect the M-GIKF to perform more and more similarly to the centralized case, which leads to the weak convergence of the measure $\mu^ {\overline{\gamma}}$ to $\delta_{P^{\ast}}$ as ${\overline{\gamma}} \rightarrow \infty$. An immediate consequence of Theorem~\ref{thm_convDirac} is
\begin{equation}
\lim_{\overline{\gamma} \rightarrow \infty} {\mu}^{\overline{\gamma}} (\Gamma) =0, ~~\forall \overline{\Gamma} \cap P^*= \emptyset
\end{equation}
which means, w.r.t. $\{\mu^{\overline{\gamma}}\}$, every event $\Gamma$ with $P^* \notin \overline{\Gamma}$ is a rare event. This is intuitively correct, since as $\overline{\gamma} \rightarrow \infty$, the measures $\{{\mu}^{\overline{\gamma}}\}$ become more and more concentrated on an arbitrarily small neighborhood of $P^*$, resulting in the event $\Gamma$ becoming very difficult to observe.
\end{remark}

The proof of this theorem is presented in Appendix~\ref{proof_convDiarac}. In the sequel, we establish the LD upper and lower bounds for the family $\{{\mu}^{\overline{\gamma}}\}$ as $\overline{\gamma} \rightarrow \infty$, which completely characterizes the behavior of $\{{\mu}^{\overline{\gamma}}\}$.

Recall the set of strings $\overline{\mathcal{S}}$ in Definition~\ref{string-def}. For an integer $r\geq 1$, let $\mathcal{P}_{r}$ denote the set of all paths of length $r$ in the sensor graph w.r.t. the adjacency matrix $\mathcal{A}$, i.e.,
\begin{eqnarray}
\label{ldp-mu1}
\mathcal{P}_{r} &=&\left\{(n_{r},\cdots,n_{1})~|~n_{i}\in [1,\cdots,N],~\forall 1\leq i\leq r~\mbox{and}~\right.\nonumber \\ & & \left.\mathcal{A}_{n_{i},n_{i+1}}>0,~\forall 1\leq i<r\right\}.
\end{eqnarray}
To each string $\mathcal{R}=(f_{\jmath_{r}},f_{\jmath_{r-1}},\cdots,f_{\jmath_{1}},P^{\ast})\in\mathcal{S}_{r}^{P^{\ast}}$ of length $r$, defined in Section~\ref{app_res}, we assign its upper and lower weights respectively as,
\begin{equation}
\label{up-weight}
\overline{w}(\mathcal{R})=\min_{(n_{r},\cdots,n_{1})\in\mathcal{P}_{r}}\sum_{i=1}^{r}\mathbb{I}_{\jmath_{i}\neq 2^{N}-1}\overline{q}_{n_{i}}(\jmath_{i})
\end{equation}
\begin{equation}
\label{up-weight1}
\underline{w}(\mathcal{R})=\min_{(n_{r},\cdots,n_{1})\in\mathcal{P}_{r}}\sum_{i=1}^{r}\mathbb{I}_{\jmath_{i}\neq 2^{N}-1}\underline{q}_{n_{i}}(\jmath_{i}).
\end{equation}
We set $\overline{w}(\mathcal{R})=\underline{w}(\mathcal{R})=0$, if $r=0$ in the above.

Note that $|\mathcal{P}_{r}|<\infty$ for each $r\in\mathbb{N}$; hence $\overline{w}(\cdot)$ and $\underline{w}(\cdot)$ are well-defined extended valued functions mapping from $\mathcal{S}^{P^{\ast}}$ to ${\mathbb{R}}_{+}$ (we adopt the convention that the minimum of an empty set is $\infty$).

Finally, define the upper and lower rate functions, $\overline{I},\underline{I}:\mathbb{S}_{+}^{M}\longmapsto{\mathbb{R}}_{+}$ by
\begin{equation}
\label{up-rate}
\underline{I}(X)=\inf_{\mathcal{R}\in\mathcal{S}^{P^{\ast}}(X)}\underline{w}(\mathcal{R}),~\overline{I}(X)=\inf_{\mathcal{R}\in\mathcal{S}^{P^{\ast}}(X)}\overline{w}(\mathcal{R}).
\end{equation}
We then have the following large deviation results for the family $\left\{\mathbb{\mu}^{\overline{\gamma}}\right\}$ as $\overline{\gamma}\rightarrow\infty$.
\begin{theorem}\label{th:ldp}
 Assume that~\textbf{(C.1)},~\textbf{(S.1)},~\textbf{(D.1)},~and \textbf{(E.4)} hold. Then, as $\overline{\gamma}\rightarrow\infty$, the family $\mathbb{\mu}^{\overline{\gamma}}$ satisfies the LD upper and lower bounds with rate functions $\overline{I}$ and $\underline{I}$, i.e.,
\begin{equation}
\label{th:ldp1}
\limsup_{\overline{\gamma}\rightarrow\infty}\frac{1}{\overline{\gamma}}\ln\mathbb{\mu}^{\overline{\gamma}}\!\left(\mathcal{F}\right)\!\leq\! -\inf_{X\in\mathcal{F}}\overline{I}(X),~\mbox{for every closed set $\mathcal{F}\in\mathcal{X}$}
\end{equation}
\begin{equation}
\label{th:ldp2}
\liminf_{\overline{\gamma}\rightarrow\infty}\frac{1}{\overline{\gamma}}\ln\mathbb{\mu}^{\overline{\gamma}}\!\left(\mathcal{O}\right)\!\geq\! -\inf_{X\in\mathcal{O}}\underline{I}(X),~\mbox{for every open set $\mathcal{O}\in\mathcal{X}$}.
\end{equation}
\end{theorem}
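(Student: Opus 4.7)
The plan is to exploit the invariance of $\mu^{\overline{\gamma}}$ under the random Riccati recursion together with the rarefaction of non-centralized switching events implied by assumption \textbf{(E.4)}. Since Theorem~\ref{thm_convDirac} already tells us $\mu^{\overline{\gamma}} \Rightarrow \delta_{P^{\ast}}$, all of the $\mu^{\overline{\gamma}}$-mass eventually concentrates near $P^{\ast}$, so the only way a random Riccati iterate starting near $P^{\ast}$ can sit at a matrix $X$ far from $P^{\ast}$ is by realizing a string of ``bad'' operators $f_{\jmath}$ with $\jmath \neq 2^{N}-1$, each of which carries a probabilistic cost $q_n(\jmath)$ that by \textbf{(E.4)} decays like $e^{-\overline{\gamma}\,\overline{q}_n(\jmath)}$ from above and $e^{-\overline{\gamma}\,\underline{q}_n(\jmath)}$ from below. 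First I would unroll the invariance: for any integer $r\geq 0$, a sample from $\mu^{\overline{\gamma}}$ is distributionally equal to $f_{\jmath(\mathcal{I}^{r-1}_{\widetilde{z}(r-1)})}\circ\cdots\circ f_{\jmath(\mathcal{I}^{0}_{\widetilde{z}(0)})}(\widetilde{P}(0))$ with $\widetilde{P}(0)\sim \mu^{\overline{\gamma}}$, and the joint randomness is over the Markov path $(\widetilde{z}(0),\ldots,\widetilde{z}(r-1))$, whose induced edge sequence lies in $\mathcal{P}_{r}$, and the conditionally i.i.d.\ subset sequence $\{\mathcal{I}^{k}_{n}\}$.

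For the upper bound, I fix a closed set $\mathcal{F}$ and an $\varepsilon>0$ and invoke Theorem~\ref{thm_convDirac} to push the mass of $\widetilde{P}(0)$ arbitrarily close to $B_{\varepsilon}(P^{\ast})$, at a cost that is negligible on the exponential scale. Conditional on $\widetilde{P}(0)\in B_{\varepsilon}(P^{\ast})$, hitting $\mathcal{F}$ after $r$ steps requires the realized switching string $\mathcal{R}=(f_{\jmath_{r}},\ldots,f_{\jmath_{1}},P^{\ast})$ to lie in an $\varepsilon$-thickening of $\mathcal{S}^{P^{\ast}}(\mathcal{F})$; using independence of the $\{\mathcal{I}^{k}_{n}\}$ sequence across epochs, the probability of a given string along a graph-path $(n_{r},\ldots,n_{1})\in\mathcal{P}_{r}$ factorizes as $\prod_{i=1}^{r} q_{n_{i}}(\jmath_{i})$; applying \textbf{(E.4)} and minimizing over the graph-path gives a bound of the form $e^{-\overline{\gamma}(\overline{w}(\mathcal{R})-o(1))}$. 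Summing over the finitely many strings of length $r$ and then letting $r\to\infty$, the polynomially many terms are absorbed into $o(1)$ in the $\tfrac{1}{\overline{\gamma}}\ln$ scaling and \eqref{th:ldp1} follows once $\varepsilon\downarrow 0$, using lower semicontinuity of $\overline{I}$ to pass from the $\varepsilon$-thickening to $\mathcal{F}$ itself.

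For the lower bound, I fix an open set $\mathcal{O}$, pick any $X\in\mathcal{O}$, and pick a string $\mathcal{R}=(f_{\jmath_{r}},\ldots,f_{\jmath_{1}},P^{\ast})\in\mathcal{S}^{P^{\ast}}(X)$ whose lower weight $\underline{w}(\mathcal{R})$ is within $\eta$ of $\underline{I}(X)$. By continuity of each $f_{\jmath_{i}}$ and openness of $\mathcal{O}$, there is a $\delta>0$ such that $f_{\jmath_{r}}\circ\cdots\circ f_{\jmath_{1}}(B_{\delta}(P^{\ast}))\subset\mathcal{O}$. Pick a minimizing graph-path $(n_{r},\ldots,n_{1})\in\mathcal{P}_{r}$ in \eqref{up-weight1}; the probability that the stationary chain $\{\widetilde{z}(k)\}_{k=0}^{r-1}$ traces this path is a strictly positive constant depending only on $\overline{A}$ and $r$, and conditionally on the path the probability that $\mathcal{I}^{i}_{n_{i}}$ realizes $\jmath_{i}$ for each $i$ is $\prod_{i}q_{n_{i}}(\jmath_{i})$, which by \textbf{(E.4)} is at least $e^{-\overline{\gamma}(\underline{w}(\mathcal{R})+o(1))}$. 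Combining with $\mu^{\overline{\gamma}}(B_{\delta}(P^{\ast}))\to 1$ and taking $\inf$ over $X\in\mathcal{O}$ and $\mathcal{R}\in\mathcal{S}^{P^{\ast}}(X)$ establishes \eqref{th:ldp2}.

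I expect the main obstacle to be the upper bound, specifically the interplay between the finite-$r$ truncation and the passage to the closure: one has to show that, up to $o(1)$ terms on the $\tfrac{1}{\overline{\gamma}}\ln$ scale, the $\varepsilon$-thickened strings exhaust $\mathcal{F}$ and that $\overline{w}$ of the thickened strings converges to $\inf_{X\in\mathcal{F}}\overline{I}(X)$ as $\varepsilon\downarrow 0$ and $r\to\infty$. This uses the conditional compactness and order-preserving properties of the RDS (Lemma~\ref{lem_RDSprop}) and contraction of $f_{2^{N}-1}$ at $P^{\ast}$ to control how far an iterate can wander after only polynomially many ``bad'' steps, and these ingredients have to be glued together so that the graph-path infimum in \eqref{up-weight} emerges as precisely the dominant LD exponent rather than an artefact of a suboptimal union bound.
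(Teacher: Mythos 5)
Your coarse intuition (each non-centralized operator $f_{\jmath}$, $\jmath\neq 2^{N}-1$, costs $e^{-\overline{\gamma}\,\overline{q}_{n}(\jmath)}$ to $e^{-\overline{\gamma}\,\underline{q}_{n}(\jmath)}$, and the exponent is the path-minimized weight of a string anchored at $P^{\ast}$) matches the paper, but your proof plan has gaps at exactly the points where the paper does real work. First, the opening move, ``unroll the invariance'' with $\widetilde{P}(0)\sim\mu^{\overline{\gamma}}$, is not available: $\{\widetilde{P}(k)\}$ alone is not a Markov chain (the switching is modulated by $\widetilde{z}(k)$, with which $\widetilde{P}(k)$ is correlated in steady state), and Theorems~\ref{asym_thm}--\ref{main1} only assert weak convergence from an arbitrary initial law independent of the chain, not that the one-step recursion with an independently drawn $\mu^{\overline{\gamma}}$-initial condition reproduces $\mu^{\overline{\gamma}}$. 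The paper never needs such invariance: it relates $\mu^{\overline{\gamma}}$ to finite-time probabilities of the switched iterates from a \emph{fixed} initial condition via the Portmanteau inequalities, and in the lower bound (Lemma~\ref{lem-lowerb}) it kills the dependence on the unknown past by inserting a block $f_{2^{N}-1}^{r_{\varepsilon_1}}$ of centralized operators (uniform convergence, Proposition~\ref{unif_conv}(ii)) ahead of the target string, summing over arbitrary earlier switches. Your lower bound could likely be repaired along those lines, but as written it also glosses over aligning the prescribed graph path with the current state of the stationary chain. Second, your upper bound is structurally circular: conditioning on $\widetilde{P}(0)\in B_{\varepsilon}(P^{\ast})$ ``at negligible exponential cost'' discards an event whose probability, $\mu^{\overline{\gamma}}(B_{\varepsilon}(P^{\ast})^{C})$, is precisely the type of quantity Theorem~\ref{th:ldp} is trying to bound; Theorem~\ref{thm_convDirac} gives no rate, only convergence. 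Moreover ``sum over the finitely many strings of length $r$ and let $r\to\infty$'' does not work as stated, because the weak-convergence limit forces $r\to\infty$ \emph{before} $\overline{\gamma}\to\infty$ and the number of strings grows geometrically in $r$. The paper's resolution is the truncation result (Lemma~\ref{ub_lemma1}): for a closed set $F$ there is a fixed prefix length $r_{F}$, independent of $r$ and $\overline{\gamma}$, within which any string landing in $F$ must already accumulate weight $\geq l'(F)$, so the union bound involves only a constant number of bad prefix patterns; one then needs the separate exponential tightness estimate (Lemma~\ref{lem_up_tightness}, built on Proposition~\ref{string_prop}(ii)) to pass from compact sets to general closed sets, a step your sketch omits entirely.

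Third, you invoke ``lower semicontinuity of $\overline{I}$'' to close the upper bound, but this is not free: the paper explicitly notes that $\overline{I},\underline{I}$ are not a priori lower semicontinuous, proves the LD bounds first with the regularizations $\overline{I}_{L},\underline{I}_{L}$, and then devotes the body of Section~\ref{proof_th:ldp} to showing $\overline{I}_{L}=\overline{I}$ and $\underline{I}_{L}=\underline{I}$ via a pigeonhole/subsequence extraction over the finitely many length-$r_{0}$ string patterns (again relying on Lemma~\ref{ub_lemma1} and continuity of finite compositions). Assuming this property is assuming one of the theorem's genuinely hard ingredients. In short: the cost accounting via \textbf{(E.4)} is right, but you are missing the truncation/counting lemma, the tightness step, the non-circular link between $\mu^{\overline{\gamma}}$ and finite-time string probabilities, and the identification of the regularized rate functions with $\overline{I},\underline{I}$.
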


\begin{remark}
Theorem~\ref{th:ldp} characterizes the invariant measure $\{\mathbb{\mu}^{\overline{\gamma}}\}$ as $\overline{\gamma} \rightarrow \infty$. It establishes the important qualitative behavior of $\{\mathbb{\mu}^{\overline{\gamma}}\}$ that rare events decay exponentially when $\overline{\gamma} \rightarrow \infty$. For a rare event $\Gamma$, from (\ref{th:ldp1}) and (\ref{th:ldp2}), we have
\begin{equation}
e^{-\overline{\gamma}(\inf_{X\in\Gamma^{\circ}}\underline{I}(X))} \leq \mathbb{\mu}^{\overline{\gamma}}(\Gamma)\leq e^{-\overline{\gamma}(\inf_{X\in\overline{\Gamma}}\overline{I}(X))}.
\end{equation}

The exact exponent of the exponential decay is bounded within $[{\overline{\gamma}(\inf_{X\in\overline{\Gamma}}\overline{I}(X))},\overline{\gamma}(\inf_{X\in\Gamma^{\circ}}\underline{I}(X))]$. The result suggests how the system designer could trade off estimation accuracy with the communication rate $\overline{\gamma}$. For instance, given a tolerance $\varepsilon>0$, in order to guarantee the probability of estimation errors lying outside the $\varepsilon-$neighborhood of the optimal centralized estimation error $P^*$ is less than some $\delta > 0$, $\overline{\gamma}$ should be selected according to
\begin{equation}
  e^{-\overline{\gamma}\left(\inf_{X\in B_\varepsilon^C(P^*)^{\circ}}\underline{I}(X)\right)}\!\leq \mathbb{\mu}^{\overline{\gamma}}(B_\varepsilon^C(P^*)) \leq \! e^{-\overline{\gamma}\left(\inf_{X\in\overline{B_\varepsilon^C(P^*) }}\overline{I}(X)\right)}\nonumber
\end{equation}
where $B_\varepsilon^C(P^*)$ is the complement of the open ball $B_\varepsilon(P^*)$. By computing $\inf_{X\in B_\varepsilon^C(P^*)^{\circ}}\underline{I}(X)$ and $\inf_{X\in\overline{B_\varepsilon^C(P^*) }}\overline{I}(X)$, the designer obtains an estimate of the communication rate $\overline{\gamma}$ required to maintain the probability of outlying errors less than $\delta$.
\end{remark}
The rest of the paper is devoted to the proof of Theorem~\ref{th:ldp} in Sections~\ref{ld-inv} and \ref{proof_th:ldp}, and some intermediate results are presented in Section~\ref{app_res}.

\section{Some Intermediate Results}
\label{app_res}
In this section we present first results on the random compositions of Riccati operators, to be used for analyzing the switched sequences of form~\eqref{swP}, and then approximation results generalizing those in~\cite{Riccati-moddev}.

\subsection{Preliminary Results}
\label{notation}The RRE sequence is an iterated function system (see, e.g.,~\cite{DiaconisFreedman}) comprising of random compositions of Riccati operators. Understanding the system requires studying the behavior of such random function compositions, where not only the numerical value of the composition is important, but also the composition pattern is relevant. To formalize this study, we start with the following definitions.

\begin{definition}[String]
\label{string-def}
Let $P_{0}\in\mathbb{S}_{+}^{M}$. A string
$\mathcal{R}$ with initial state $P_{0}$ and length $r\in\mathbb{N}$ is a $(r+1)$-tuple of the form:
\begin{equation}
\label{def_string} \mathcal{R}=\left(f_{\jmath_{r}}, f_{\jmath_{r-1}},\cdots
f_{\jmath_{1}},P_{0}\right),~~~\jmath_{1},\cdots,\jmath_{r}\in\mathfrak{P}
\end{equation}
where $f_{\jmath}$ corresponds to the Riccati operator defined in~\eqref{def_Riccati}. The length of a string $\mathcal{R}$ is denoted by $\mbox{len}(\mathcal{R})$.
The set of all possible strings is denoted by
$\overline{\mathcal{S}}$.

Fix $\overline{\gamma}>0$. A string $\mathcal{R}$ of the form
\[
\mathcal{R}=\left(f_{\jmath_{r}}, f_{\jmath_{r-1}},\cdots
f_{\jmath_{1}},P_{0}\right),~~~\jmath_{1},\cdots,\jmath_{r}\in\mathfrak{P}
\]
is called $\overline{\gamma}$-feasible, if there exists a path\footnote{A sequence of nodes $(n_{r},n_{r-1},\cdots,n_{1})$ is called a path w.r.t. $\overline{A}$ if $\overline{A}_{n_{i},n_{i+1}}>0$ for all $1\leq i<r$.} $(n_{r},n_{r-1},\cdots,n_{1})$ of length $r$ w.r.t. $\overline{A}$, such that $q_{n_{i}}(\jmath_{i})>0$ (recall $q_{n_{i}}(\jmath_{i})$ defined in \eqref{ass-obs4}) for all $1\leq i\leq r$. The set of all $\overline{\gamma}$-feasible strings is further denoted by $\overline{\mathcal{S}}_{\overline{\gamma}}$.
\end{definition}

\begin{remark} Note that a string $\mathcal{R}$ can be of length 0; then it is represented as a 1-tuple, consisting of only the initial condition.

Let $r_{1},r_{2},\cdots,r_{l}$ be non-negative integers, such that $\sum_{i=1}^{l}r_{i}=r$ and $\jmath^{k}_{i}\in\mathfrak{P}$ for $1\leq i\leq r_{k}$ and $1\leq k\leq l$, where for all $k$, we have $\jmath^{k}_{i}=\jmath^{k}_{1},~1\leq i\leq r_{k}$. Let $\mathcal{R}$ be a string of length $r$ of the following form:
\begin{equation}
\label{string_def_remark}
\mathcal{R}\!=\!\left(f_{\jmath^{1}_{1}},\!\cdots\!,f_{\jmath^{1}_{r_{1}}},\!\cdots\!,f_{\jmath^{2}_{1}},\!\cdots\!,f_{\jmath^{2}_{r_{2}}},\!\cdots\!,f_{\jmath^{l}_{1}},\!\cdots\!,f_{\jmath^{l}_{r_{l}}},P_{0}\right).
\end{equation}
For brevity, we write $\mathcal{R}$ as
\begin{equation}
\label{string_def_remark1}
\mathcal{R}=\left(f_{\jmath^{1}_{1}}^{r_{1}},f_{\jmath^{2}_{1}}^{r_{2}},\cdots,f_{\jmath^{l}_{1}}^{r_{l}},P_{0}\right).
\end{equation}
For example, the string $\left(f_{1},f_{2},f_{2},f_{2},f_{1},f_{1},P_{0}\right)$ could be written concisely as $\left(f_{1},f_{2}^{3},f_{1}^{2},P_{0}\right)$.
\end{remark}

\begin{definition}[Numerical Value of a String] Every string $\mathcal{R}$ is associated with its numerical
value, denoted by $\mathcal{N}(\mathcal{R})$, which is the
numerical evaluation of the function composition on the initial
state $P_{0}$; i.e., for $\mathcal{R}$ of the form \[\mathcal{R}=\left(f_{\jmath_{r}}, f_{\jmath_{r-1}},\cdots
f_{\jmath_{1}},P_{0}\right),~~\jmath_{1},\cdots,\jmath_{r}\in\mathfrak{P},\] we have
\begin{equation}
\label{num_string100}
\mathcal{N}(\mathcal{R})=f_{\jmath_{r}}\circ f_{\jmath_{r-1}}\circ\cdots\circ f_{\jmath_{1}}(P_{0}).
\end{equation}
Thus\footnote{For function compositions, we adopt a similar notation to that of strings; for example, we denote the composition $f_{1}\circ f_{2}\circ f_{2}\circ f_{2}\circ f_{1}\circ f_{1}(P_{0})$ by $f_{1}\circ f_{2}^{3}\circ f_{1}^{2}(P_{0})$.}, the numerical value can be viewed as a
function $\mathcal{N}(\cdot)$ mapping from the space
$\overline{\mathcal{S}}$ of strings to $\mathbb{S}_{+}^{M}$. We
abuse notation by denoting $\mathcal{N}(\overline{\mathcal{S}})$
as the set of numerical values attainable, i.e.,
\begin{equation}
\label{def_numval1}
\mathcal{N}(\overline{\mathcal{S}})=\left\{\mathcal{N}(\mathcal{R})~|~\mathcal{R}\in\overline{\mathcal{S}}\right\}.
\end{equation}
Similarly, by $\mathcal{N}(\overline{\mathcal{S}}_{\overline{\gamma}})$ we denote the subset of numerical values associated to the $\overline{\gamma}$-feasible strings $\overline{\mathcal{S}}_{\overline{\gamma}}$.
\end{definition}

\begin{remark}Note the difference between a string and its
numerical value. Two strings are equal if and only if they comprise the same order of function compositions applied to the same initial state.
In particular, two strings can be different, even if they are evaluated with the same numerical value.
\end{remark}

For fixed $P_{0}\in\mathbb{S}_{+}^{M}$ and $r\in\mathbb{N}$, the subset of strings of length $r$ and initial condition $P_{0}$ is denoted by $\mathcal{S}_{r}^{P_{0}}$. The corresponding set of numerical values is denoted by $\mathcal{N}(\mathcal{S}_{r}^{P_{0}})$. Finally, for $X\in\mathbb{S}_{+}^{M}$, the set $\mathcal{S}_{r}^{P_{0}}(X)\subset\mathcal{S}_{r}^{P_{0}}$ consists of all strings with numerical value $X$, i.e.,
\begin{equation}
\label{concat_string3}
\mathcal{S}_{r}^{P_{0}}(X)=\left\{\mathcal{R}\in\mathcal{S}_{r}^{P_{0}}~|~\mathcal{N}\left(\mathcal{R}\right)=X\right\}.
\end{equation}

In the following, we present some important
properties of strings to be used later. Recall from \cite{Riccati-weakcons} that $\mathbb{S}_{++}^{M}$ is the cone of positive definite matrices.
\begin{proposition}
\label{string_prop}
\item \mbox{(i)} For $r_{1}\leq r_{2}\in\mathbb{N}$, we have $\mathcal{N}\left(\mathcal{S}_{r_{1}}^{P^{\ast}}\right)\subset\mathcal{N}\left(\mathcal{S}_{r_{2}}^{P^{\ast}}\right)$, where $P^{\ast}\in\mathbb{S}_{++}^{M}$ denotes the unique fixed point of the Riccati operator $f_{2^{N}-1}$.
In particular, if for some $X\in\mathbb{S}_{+}^{M}$, $r_{0}\in\mathbb{N}$, and $\jmath_{r_{0}},\cdots, \jmath_{1}\in\mathfrak{P}$, the string $\mathcal{R}=\left(f_{\jmath_{r_{0}}},\cdots,f_{\jmath_{1}},P^{\ast}\right)$ belongs to $\mathcal{S}_{r_{0}}^{P^{\ast}}(X)$, we have
\begin{equation}
\label{def_StP200}
\left(f_{\jmath_{r_{0}}},\cdots,f_{\jmath_{1}}, f_{2^{N}-1}^{r-r_{0}},P^{\ast}\right)\in\mathcal{S}_{r}^{P^{\ast}}(X)\subset\mathcal{S}^{P^{\ast}}(X),~\forall r\geq r_{0}.
\end{equation}

\item \mbox{(ii)} Let $r\in\mathbb{N}$ and
$\mathcal{R}\in\mathcal{S}_{r}^{P_{0}}=\left(f_{\jmath_{r}},\cdots,
f_{\jmath_{1}},P_{0}\right)$ be a string. Define the function $\pi(\cdot)$ by
\begin{equation}
\label{concatS5} \pi\left(\mathcal{R}\right)=\left\{
\begin{array}{ll}
                    \sum_{i=1}^{r}\left(1-\mathbb{I}_{\{2^{N}-1\}}(\jmath_{i})\right), & \mbox{if $r\geq 1$} \\
                    0, & \mbox{otherwise}.
                   \end{array}
          \right.
\end{equation}
i.e., $\pi(\mathcal{R})$ counts the number of occurrences of the non-centralized Riccati operator $f_{2^{N}-1}$ in $\mathcal{R}$.

Also denote $\widehat{\mathcal{R}}=(f_{\hat{\jmath}_{\pi(\mathcal{R})}},f_{\hat{\jmath}_{\pi(\mathcal{R})-1}},\cdots,f_{\hat{\jmath}_{1}},P_{0})$, which represents the string of length $\pi(\mathcal{R})$ obtained by removing the occurrences of $f_{2^{N}-1}$ from $\mathcal{R}$\footnote{For example, if $\mathcal{R}=(f_{1},f_{2^{N}-1},f_{3},f_{2^{N}-1},f_{2},P_{0})$, $\widehat{\mathcal{R}}=(f_{1},f_{3},f_{2},P_{0})$.}.

Then, there exists
$\alpha_{P_{0}}\in\mathbb{R}_{+}$, depending on $P_{0}$ only, such
that
\begin{equation}
\label{concatS4}
f_{\hat{\jmath}_{\pi(\mathcal{R})}}\circ f_{\hat{\jmath}_{\pi(\mathcal{R})-1}}\cdots\circ f_{\hat{\jmath}_{1}}\left(\alpha_{P_{0}}I\right)\succeq\mathcal{N}\left(\mathcal{R}\right).
\end{equation}
\end{proposition}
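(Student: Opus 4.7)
For part (i), I would use the fixed-point property $f_{2^N-1}(P^*) = P^*$ as a padding device. Given any length-$r_0$ string $\mathcal{R} = (f_{\jmath_{r_0}},\ldots,f_{\jmath_1},P^*) \in \mathcal{S}_{r_0}^{P^*}(X)$, the extended length-$r$ string $\mathcal{R}' = (f_{\jmath_{r_0}},\ldots,f_{\jmath_1},f_{2^N-1}^{r-r_0}, P^*)$ has the same numerical value $X$, since the prepended block $f_{2^N-1}^{r-r_0}$ fixes $P^*$ and the original operators then act identically on $P^*$ as in $\mathcal{R}$. Applied to $r_1 \leq r_2$ and any $X \in \mathcal{N}(\mathcal{S}_{r_1}^{P^*})$, this yields the inclusion $\mathcal{N}(\mathcal{S}_{r_1}^{P^*}) \subset \mathcal{N}(\mathcal{S}_{r_2}^{P^*})$ together with the explicit extension claim.

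For part (ii), I would choose $\alpha_{P_0}$ large enough that $\alpha_{P_0} I \succeq P_0$ and $\alpha_{P_0} I \succeq P^*$, then argue by induction on the length of $\mathcal{R}$. The plan is to run two trajectories in parallel: the $\mathcal{R}$-trajectory $\{X_i\}_{i=0}^{r}$ with $X_0 = P_0$, and the $\widehat{\mathcal{R}}$-trajectory $\{Y_j\}_{j=0}^{\pi(\mathcal{R})}$ with $Y_0 = \alpha_{P_0} I$. Writing $\hat{\iota}(i)$ for the number of non-centralized operators among $\jmath_1,\ldots,\jmath_i$, the inductive invariant to propagate is $X_i \preceq Y_{\hat{\iota}(i)}$ together with $Y_{\hat{\iota}(i)} \succeq P^*$; setting $i = r$ then yields the proposition.

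At positions where $\jmath_i \neq 2^N-1$, both trajectories apply $f_{\jmath_i}$, and the invariant propagates by the order-preserving property (Lemma~\ref{lem_RDSprop}(ii)) together with the dominance $f_\jmath(P^*) \succeq f_{2^N-1}(P^*) = P^*$. The dominance follows from the matrix-inversion form of the Riccati operator and the information-matrix inequality $\mathcal{C}_{2^N-1}^T \mathcal{R}_{2^N-1}^{-1} \mathcal{C}_{2^N-1} \succeq \mathcal{C}_\jmath^T \mathcal{R}_\jmath^{-1} \mathcal{C}_\jmath$. At positions where $\jmath_i = 2^N-1$, only the $\mathcal{R}$-trajectory advances, so the step requires $f_{2^N-1}(X_{i-1}) \preceq Y_{\hat{\iota}(i-1)}$; one first uses monotonicity to reduce this to $f_{2^N-1}(Y_{\hat{\iota}(i-1)}) \preceq Y_{\hat{\iota}(i-1)}$, and then invokes the classical Riccati monotone-decrease property $Y \succeq P^* \Rightarrow f_{2^N-1}(Y) \preceq Y$, available under Assumptions~(S.1)/(D.1).

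The main obstacle is the verification of this monotone-decrease step, which in the multidimensional non-normal regime is not a direct one-step computation. The argument mirrors the treatment in \cite{Riccati-weakcons} and relies on the conditional compactness, order-preservation, and strong sublinearity of the RDS $(\theta,\varphi)$ from Lemma~\ref{lem_RDSprop}; if further control is required, $\alpha_{P_0}$ is enlarged to also dominate the forward $f_{2^N-1}$-orbit $\{f_{2^N-1}^k(P_0)\}_{k\ge 0}$ from $P_0$, which is finite by convergence to $P^*$, so that the decrease step can be applied uniformly along the induction and the invariant is preserved for strings of arbitrary length and composition.
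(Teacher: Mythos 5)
Your argument for part (i) is fine and is the intended one: padding with $f_{2^{N}-1}$ next to the initial condition $P^{\ast}$ uses only the fixed-point property and monotone bookkeeping of string values, and this is exactly how the (omitted, cited from Proposition 3.6 of \cite{Riccati-moddev}) proof goes.

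For part (ii) there is a genuine gap at the pivotal step. Your induction needs, whenever the original string applies $f_{2^{N}-1}$ while the hat trajectory stands still, the implication $Y\succeq P^{\ast}\Rightarrow f_{2^{N}-1}(Y)\preceq Y$ at the current hat-trajectory point $Y$. This ``monotone-decrease'' property is false in general for $M>1$: writing $K^{\ast}$ for the optimal gain at $P^{\ast}$ and $F_{\ast}=\mathcal{F}-K^{\ast}\mathcal{C}_{2^{N}-1}$, a first-order expansion at $P^{\ast}$ shows that the property would force $F_{\ast}\Delta F_{\ast}^{T}\preceq \Delta$ for every $\Delta\succeq 0$, which fails for generic (non-normal) stable closed-loop matrices; stability of $F_{\ast}$ gives convergence of the iterates but not a one-step Loewner decrease above $P^{\ast}$. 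You acknowledge the obstacle, but neither of your proposed repairs closes it: enlarging $\alpha_{P_{0}}$ to dominate the forward orbit $\{f_{2^{N}-1}^{k}(P_{0})\}$ only handles a centralized block occurring \emph{before} the first non-centralized operator, whereas after a non-centralized step the comparison point is a hat-trajectory matrix $Y_{j}=f_{\hat{\jmath}_{j}}\circ\cdots\circ f_{\hat{\jmath}_{1}}(\alpha_{P_{0}}I)$, and inflating the initial condition does not yield $f_{2^{N}-1}^{b}(Y_{j})\preceq Y_{j}$; and the qualitative RDS properties of Lemma~\ref{lem_RDSprop} (conditional compactness, order preservation, sublinearity) are statements about the random dynamics, not a pointwise Loewner inequality of the kind you need. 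A workable treatment of the centralized blocks has to use the tools the paper actually provides: the uniform convergence estimate of Proposition~\ref{unif_conv}~(ii), which maps \emph{long} blocks $f_{2^{N}-1}^{b}$, $b\geq r_{\varepsilon}$, of any argument into a fixed neighborhood of $P^{\ast}$ independently of the argument, combined with the zero-gain (Lyapunov) domination $f_{\jmath}(X)\preceq \mathcal{F}X\mathcal{F}^{T}+\mathcal{Q}$ to control the finitely many short blocks, with $\alpha_{P_{0}}$ chosen to absorb the resulting uniform constants together with $P_{0}$, $P^{\ast}$, and the orbit $\{f_{2^{N}-1}^{k}(P_{0})\}$; this is the structure of the argument the paper defers to in \cite{Riccati-moddev}, and it is not recovered by the one-step decrease claim on which your induction currently rests.
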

\begin{proof} The proof is a straightforward generalization of Proposition 3.6 in~\cite{Riccati-moddev} and is omitted.
\end{proof}

\subsection{Riccati Equation}
\label{key_approx} In this subsection, we present several
approximation results needed in the sequel.
We discuss generic properties, like uniform convergence of the
Riccati operator, which will be used in the sequel to obtain various
tightness estimates required for establishing the LD results.
\begin{proposition} \label{unif_conv}
\item \mbox{(i)} For every $X\in\mathbb{S}_{+}^{M}$ and $\jmath\in[0,\cdots, 2^{N}-1]$, we have
\begin{equation}
\label{unif_conv2000}
f_{\jmath}(X)\succeq f_{2^{N}-1}(X).
\end{equation}
\item \mbox{(ii)} For every $\varepsilon>0$, there exists
$r_{\varepsilon}\geq M$, such that, for every
$X\in\mathbb{S}_{+}^{M}$, with $X\succeq P^{\ast}$ ($P^{\ast}$ is the unique fixed point of the centralized Riccati operator $f_{2^{N}-1}$), we have
\begin{equation}
\label{unif_conv1}
\left\|f_{2^{N}-1}^{r}\left(X\right)-P^{\ast}\right\|\leq\varepsilon,~~~r\geq r_{\varepsilon}.
\end{equation}
Note, in particular, that $r_{\varepsilon}$ can be chosen
independent of the initial state $X$.
\item \mbox{(iii)} For a fixed $r\in\mathbb{N}$ and
$\jmath_{r},\cdots,\jmath_{1}\in\mathfrak{P}$, define the function
$g:\mathbb{S}_{+}^{M}\longmapsto\mathbb{S}_{+}^{M}$ by
\begin{equation}
\label{Lip1} g(X)=f_{\jmath_{r}}\circ\cdots\circ
f_{\jmath_{1}}(X),~~~X\in\mathbb{S}_{+}^{M}.
\end{equation}
Then $g(\cdot)$ is Lipschitz continuous with some constant
$K_{g}>0$. Also, for every $\varepsilon_{2}>0$, there exists
$r_{\varepsilon_{2}}$, such that the function
$f_{2^{N}-1}^{r_{\varepsilon_{2}}}(\cdot)$ is Lipschitz continuous with
constant $K_{f_{2^{N}-1}^{r_{\varepsilon_{2}}}}<\varepsilon_{2}$.
\end{proposition}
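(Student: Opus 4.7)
My plan is to handle the three claims in turn, using the information (inverse) form of the Riccati operator together with standard stabilizability/observability arguments for DAREs.

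For (i), I would pass to the information form via the matrix inversion lemma: whenever $X\succ 0$,
\[
f_{\jmath}(X) = \mathcal{F}\bigl(X^{-1} + \mathcal{C}_{\jmath}^{T}\mathcal{R}_{\jmath}^{-1}\mathcal{C}_{\jmath}\bigr)^{-1}\mathcal{F}^{T} + \mathcal{Q}.
\]
Because $\mathcal{R}_{\jmath}$ is block diagonal, one has $\mathcal{C}_{\jmath}^{T}\mathcal{R}_{\jmath}^{-1}\mathcal{C}_{\jmath} = \sum_{i\in\jmath}\mathcal{C}_{i}^{T}\mathcal{R}_{i}^{-1}\mathcal{C}_{i} \preceq \sum_{i=1}^{N}\mathcal{C}_{i}^{T}\mathcal{R}_{i}^{-1}\mathcal{C}_{i} = \mathcal{C}_{2^{N}-1}^{T}\mathcal{R}_{2^{N}-1}^{-1}\mathcal{C}_{2^{N}-1}$; inverting reverses the order, and conjugation by $\mathcal{F}$ followed by adding $\mathcal{Q}$ preserves it, yielding $f_{\jmath}(X)\succeq f_{2^{N}-1}(X)$ on $\mathbb{S}_{++}^{M}$. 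I would extend to singular $X$ by approximating with $X+\delta I$ and letting $\delta\downarrow 0$, using continuity of $f_{\jmath}$.

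For (ii), I would run a two-phase argument. First, Assumption \textbf{D.1} makes the centralized observability Gramian $\sum_{i=0}^{l-1}(\mathcal{F}^{i})^{T}\mathcal{C}_{2^{N}-1}^{T}\mathcal{C}_{2^{N}-1}\mathcal{F}^{i}$ invertible (it dominates the Gramian in \textbf{D.1}), so the pair $(\mathcal{F},\mathcal{C}_{2^{N}-1})$ is observable. Iterating the information form shows that the accumulated observation information overwhelms $X^{-1}$ after finitely many centralized steps, producing an integer $M_{1}\geq M$ and a system-dependent constant $M_{0}>0$ with $f_{2^{N}-1}^{M_{1}}(X)\preceq M_{0}I$ for every $X\succeq 0$. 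Second, on the compact order interval $\{X:P^{\ast}\preceq X\preceq M_{0}I\}$, the centralized Riccati operator is a strict contraction---this is the standard Bellman/closed-loop contraction argument already carried out in \cite{Riccati-moddev}---so its iterates converge uniformly to $P^{\ast}$. Splicing the two phases delivers the claimed uniform $r_{\varepsilon}$, and clearly $r_{\varepsilon}$ may be chosen $\geq M$.

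For (iii), I would differentiate $G_{\jmath}(X) := X - X\mathcal{C}_{\jmath}^{T}(\mathcal{C}_{\jmath}X\mathcal{C}_{\jmath}^{T}+\mathcal{R}_{\jmath})^{-1}\mathcal{C}_{\jmath}X$ to obtain the closed-loop identity
\[
dG_{\jmath}(X)[\Xi] = (I + X H_{\jmath})^{-1}\,\Xi\,(I + H_{\jmath}X)^{-1},\qquad H_{\jmath}:=\mathcal{C}_{\jmath}^{T}\mathcal{R}_{\jmath}^{-1}\mathcal{C}_{\jmath}.
\]
On any bounded subset of $\mathbb{S}_{+}^{M}$ the factor $(I+XH_{\jmath})^{-1}$ is uniformly bounded in norm, so $f_{\jmath} = \mathcal{F}G_{\jmath}\mathcal{F}^{T} + \mathcal{Q}$ is Lipschitz; composing over the finitely many $\jmath_{i}$ furnishes $K_{g}$ on the invariant compact piece of the cone to which the M-GIKF iterates are confined by (ii). For the contractive second claim, I would linearize $f_{2^{N}-1}$ at its fixed point: the differential is the Lyapunov map $Y\mapsto\widetilde{F}Y\widetilde{F}^{T}$ with $\widetilde{F}=\mathcal{F}-K^{\ast}\mathcal{C}_{2^{N}-1}$, where stabilizability \textbf{S.1} combined with the observability from \textbf{D.1} yields $\rho(\widetilde{F})<1$, so the $r$-th iterate of this differential has operator norm at most $C\rho(\widetilde{F})^{2r}$. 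A squeeze argument using (i), monotonicity at $P^{\ast}$, and the uniform boundedness from (ii) propagates this geometric contraction to all $X\succeq P^{\ast}$, so the Lipschitz constant of $f_{2^{N}-1}^{r_{\varepsilon_{2}}}$ can be driven below $\varepsilon_{2}$ for $r_{\varepsilon_{2}}$ large enough.

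The main obstacle I anticipate is making the ``global'' Lipschitz claim in (iii) rigorous: the closed-loop factor $(I+XH_{\jmath})^{-1}$ is \emph{not} bounded in operator norm over all of $\mathbb{S}_{+}^{M}$ when $H_{\jmath}$ is rank-deficient (which is the norm in the non-centralized case), so one really must confine the argument to an invariant compact order interval containing $P^{\ast}$---which is available only via (i) and (ii), or alternatively by working in the Thompson part metric where the Riccati operator is intrinsically non-expansive. Once this compact confinement is set up, the remaining steps are routine bookkeeping that mirrors the treatment of the analogous estimates in \cite{Riccati-moddev}.
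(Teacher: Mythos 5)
Your part (i) is correct but takes a genuinely different route from the paper: the paper proves \eqref{unif_conv2000} by rewriting $f_{\jmath}$ as the Lyapunov term minus a \emph{sum of per-sensor correction terms}, justified by asserting that $\mathcal{C}_{\jmath}X\mathcal{C}_{\jmath}^{T}+\mathcal{R}_{\jmath}$ is block diagonal, and then noting that $f_{2^{N}-1}$ subtracts more positive semidefinite terms; your information-form argument (monotonicity of $\mathcal{C}_{\jmath}^{T}\mathcal{R}_{\jmath}^{-1}\mathcal{C}_{\jmath}=\sum_{i\in\jmath}\mathcal{C}_{i}^{T}\mathcal{R}_{i}^{-1}\mathcal{C}_{i}$ in the index set, order reversal under inversion, continuity extension to singular $X$) reaches the same conclusion without that step, and is in fact the more robust derivation, since only $\mathcal{R}_{\jmath}$ -- not the innovation covariance -- is block diagonal. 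For parts (ii) and (iii) the paper offers no argument at all, simply citing Lemmas 3.1 and 3.2 of \cite{Riccati-moddev}; your two-phase proof of (ii) (uniform boundedness $f_{2^{N}-1}^{M_{1}}(X)\preceq M_{0}I$ after the centralized observability horizon implied by \textbf{D.1}, followed by the closed-loop comparison $f_{2^{N}-1}^{r}(X)-P^{\ast}\preceq \widetilde{F}^{r}(X-P^{\ast})(\widetilde{F}^{r})^{T}$ with $\rho(\widetilde{F})<1$ from \textbf{S.1} and centralized detectability) is a legitimate self-contained reconstruction of what is being cited.

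Part (iii) is where you have a genuine gap, and it is exactly the obstacle you flag without resolving. First, the statement asserts Lipschitz continuity of $g$ (and of $f_{2^{N}-1}^{r}$) as maps on $\mathbb{S}_{+}^{M}$, whereas your derivative bound through $(I+XH_{\jmath})^{-1}$ only yields Lipschitz continuity on bounded subsets; your proposed fix -- confining $X$ to ``the invariant compact piece of the cone to which the M-GIKF iterates are confined by (ii)'' -- does not apply here, because in (iii) the argument of $g$ is an arbitrary element of $\mathbb{S}_{+}^{M}$, not an iterate, and (ii) provides no compact set containing the domain of $g$. So what you prove is a restricted-domain version of the claim, and at minimum that restriction must be stated (it does suffice for the paper's downstream uses, which only invoke continuity of such compositions near $P^{\ast}$). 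Second, for the assertion that the Lipschitz constant of $f_{2^{N}-1}^{r_{\varepsilon_{2}}}$ can be made smaller than $\varepsilon_{2}$, linearizing at $P^{\ast}$ controls only the derivative at the fixed point; a Lipschitz constant valid for all pairs $X,Y$ requires controlling products of the \emph{state-dependent} closed-loop matrices along arbitrary trajectories, e.g.\ via the exact identity $f_{2^{N}-1}(X)-f_{2^{N}-1}(Y)=(\mathcal{F}-K_{X}\mathcal{C}_{2^{N}-1})(X-Y)(\mathcal{F}-K_{Y}\mathcal{C}_{2^{N}-1})^{T}$ iterated $r$ times together with a uniform filter-stability estimate on those products. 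Your ``squeeze argument using (i), monotonicity at $P^{\ast}$, and the uniform boundedness from (ii)'' bounds $\|f_{2^{N}-1}^{r}(X)-P^{\ast}\|$, i.e.\ the diameter of the image, but a small image diameter does not bound the ratio $\|f_{2^{N}-1}^{r}(X)-f_{2^{N}-1}^{r}(Y)\|/\|X-Y\|$ for nearby $X,Y$. These missing estimates are precisely the content the paper imports wholesale from Lemma 3.2 of \cite{Riccati-moddev}; to stand on its own, your proof must either supply that filter-stability argument (or a part-metric contraction argument with an explicit norm comparison) or weaken the statement of (iii) to the domain on which your derivative bound is valid.
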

\begin{proof}
The second and third assertions follow from Lemmas 3.1 and 3.2 in~\cite{Riccati-moddev}.
For the the first assertion, note that by (\ref{def_Riccati}) we have
\begin{equation}
f_{\jmath}(X)=\mathcal{F}X\mathcal{F}^{T}+\mathcal{Q}-
\mathcal{F}X\mathcal{C}_{\jmath}^{T}\left(\mathcal{C}_{\jmath}X\mathcal{C}_{\jmath}^{T}+
\mathcal{R}_{\jmath}\right)^{-1}\mathcal{C}_{\jmath}X\mathcal{F}^{T}
\end{equation}
where $\mathcal{C}_{\jmath}=[\mathcal{C}_{i_{1}}^{T}\cdots\mathcal{C}_{i_{|\jmath|}}^{T}]^{T}$. Hence, we can rewrite the above equation as
\begin{equation}\label{eq_fjmax}
f_{\jmath}(X)\!=\!\mathcal{F}X\mathcal{F}^{T}\!+\mathcal{Q}-\!\sum_{j=1}^{|\jmath|}
\mathcal{F}X\mathcal{C}_{i_j}^{T}\left(\mathcal{C}_{i_j}X\mathcal{C}_{i_j}^{T}\!+\!
\mathcal{R}_{i_j}\right)^{-1}\!\mathcal{C}_{i_j}X\mathcal{F}^{T}
\end{equation}
that is obtained due to the fact that $\mathcal{C}_{\jmath}X\mathcal{C}_{\jmath}^{T}+
\mathcal{R}_{\jmath}$ is block diagonal, from which it follows that
\begin{align}
&\left(\mathcal{C}_{\jmath}X\mathcal{C}_{\jmath}^{T}+
\mathcal{R}_{\jmath}\right)^{-1}\nonumber\\
&=\begin{pmatrix}
     \left(\mathcal{C}_{i_1}X\mathcal{C}_{i_1}^{T}+
\mathcal{R}_{i_1}\right)^{-1} \\
     & & \ddots\\
     & &  & \left(\mathcal{C}_{i_{|\jmath|}}X\mathcal{C}_{i_{|\jmath|}}^{T}+
\mathcal{R}_{i_{|\jmath|}}\right)^{-1}
 \end{pmatrix}.
\end{align}

Since $\jmath=2^N-1$ corresponds to the entire set of nodes $\{1,\cdots,N\}$, we have $|\jmath|\leq N$ and hence
\begin{align}
&f_{2^N-1}(X)=\mathcal{F}X\mathcal{F}^{T}+\mathcal{Q}\nonumber\\
&-\sum_{j=1}^{N}\mathcal{F}X\mathcal{C}_{i_j}^{T}\left(\mathcal{C}_{i_j}X\mathcal{C}_{i_j}^{T}+
\mathcal{R}_{i_j}\right)^{-1}\mathcal{C}_{i_j}X\mathcal{F}^{T}.
\end{align}
Therefore, $f_{\jmath}(X)\succeq f_{2^{N}-1}(X)$.
\end{proof}

\section{LD for Invariant Measures}
\label{ld-inv}
In this section, we first define the lower semicontinuous regularization $\overline{I}_L$ of $\overline{I}$ and the lower semicontinuous regularization $\underline{I}_L$ of $\underline{I}$ and establish their properties in Subsection~\ref{prop-rate}. Then, we are set to establish the LD for the family of invariant measures $\left\{\mu^{\overline{\gamma}}\right\}$, with $\overline{I}_L$ and $\underline{I}_L$ working as the upper and lower rate functions, respectively.

\subsection{The Upper and Lower Rate Functions}
\label{prop-rate}
We define
\begin{equation}
{I}(X)=\inf_{\mathcal{R}\in\mathcal{S}^{P^{\ast}}(X)}\pi(\mathcal{R}),~\forall X \in \mathbb{S}_{+}^{M}.
\end{equation}
Recall that
$\overline{I},\underline{I}:\mathbb{S}_{+}^{M}\longmapsto\overline{\mathbb{R}}_{+}$ are defined as
\begin{align}
\label{up-rate1}
\underline{I}(X)\!=\!\inf_{\mathcal{R}\in\mathcal{S}^{P^{\ast}}(X)}\underline{w}(\mathcal{R}),\!~\overline{I}(X)\!=\!\inf_{\mathcal{R}\in\mathcal{S}^{P^{\ast}}(X)}\overline{w}(\mathcal{R}),\!~\forall X \!\in \mathbb{S}_{+}^{M}.
\end{align}
The functions $\overline{I},\underline{I}$ are not generally lower semicontinuous and hence do not qualify as rate functions. However, candidate rate functions for the family of invariant distributions can be the lower semicontinuous regularizations of $\overline{I},\underline{I}$, which are defined as
\begin{align}
&\overline{I}_L(X)= \lim_{\varepsilon \rightarrow \infty} \inf_{Y\in B_{\varepsilon}(X)} \overline{I}(Y),~\forall X \in \mathbb{S}_{+}^{M}\nonumber\\
&\underline{I}_L(X)= \lim_{\varepsilon \rightarrow \infty} \inf_{Y\in B_{\varepsilon}(X)} \underline{I}(Y), ~\forall X \in \mathbb{S}_{+}^{M}.
\end{align}
The following proposition gives some readily verifiable properties of $\overline{I}_L(X)$, whose proof may be obtained from Proposition 6.1 of \cite{Riccati-moddev}.The semicontinuous regularization $\underline{I}_L(X)$ also has similar properties.
\begin{proposition}\label{prop_rate_function}
\item \mbox{(i)} The function $\overline{I}_L(X)$ is a good rate function on $\mathbb{S}_{+}^{M}$.
\item \mbox{(ii)} For every $X\in \mathbb{S}_{+}^{M} $, $\overline{I}_L(X)= \lim_{\varepsilon \rightarrow 0} \inf_{Y\in \overline{B_{\varepsilon}(X)}} \overline{I}(Y) $.
\item \mbox{(iii)} For every non-empty set $\Gamma \in \mathcal{B} (\mathbb{S}_{+}^{M})$, $ \inf_{X \in \Gamma} \overline{I}_L(X) \leq \inf_{X \in \Gamma} \overline{I}(X) $. In addition, if $\Gamma$ is open, the reverse inequality holds and thus $\inf_{X \in \Gamma} \overline{I}_L(X) = \inf_{X \in \Gamma} \overline{I}(X) $.
\item \mbox{(iv)} Let $K \subset \mathbb{S}_{+}^{M}$ be a non-empty compact set; then we have $\lim_{\varepsilon \rightarrow 0} \inf_{Y \in \overline{K_{\varepsilon}}} \overline{I}_L(Y)= \inf_{Y \in K} \overline{I}_L(Y) $.
\end{proposition}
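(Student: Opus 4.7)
The plan is to mirror closely the proof of Proposition 6.1 in \cite{Riccati-moddev}, the only essential difference being that the counting weight used there is replaced by the upper weight functional $\overline{w}(\cdot)$ built from $\overline{q}_{n}(\jmath)$ along feasible paths. I will treat the four assertions in order, handling $\overline{I}_L$ only; the arguments for $\underline{I}_L$ are identical with $\underline{w}$ in place of $\overline{w}$.

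For part (i), lower semicontinuity of $\overline{I}_L$ is essentially automatic from the construction: the map $\varepsilon \mapsto \inf_{Y \in B_{\varepsilon}(X)} \overline{I}(Y)$ is monotone non-increasing in $\varepsilon$, so the defining limit exists, and the resulting envelope is by definition the largest lower semicontinuous minorant of $\overline{I}$. The nontrivial part is goodness, i.e.\ compactness of the sublevel sets $\{X : \overline{I}_L(X)\leq\alpha\}$. Here I use Assumption (E.4) to obtain a uniform strictly positive constant $c := \min\{\overline{q}_n(\jmath) : n, \jmath\neq 2^{N}-1, \overline{q}_n(\jmath)<\infty\}$. Any string $\mathcal{R}\in\mathcal{S}^{P^{\ast}}(X)$ with $\overline{w}(\mathcal{R})\leq \alpha$ therefore contains at most $\lfloor\alpha/c\rfloor$ non-centralized Riccati symbols, so by Proposition \ref{string_prop}(ii), $\mathcal{N}(\mathcal{R})$ is dominated by the numerical value of a composition of at most $\lfloor\alpha/c\rfloor$ non-centralized Riccati operators acting on the fixed seed $\alpha_{P^{\ast}}I$. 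There are only finitely many such compositions, so their numerical values form a bounded subset of $\mathbb{S}_{+}^{M}$, whose closure is compact, proving that $\{X : \overline{I}(X)\leq\alpha\}$ is relatively compact; lower semicontinuity then gives compactness of the sublevel sets of $\overline{I}_L$.

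For part (ii), the equality between the limit over open and closed $\varepsilon$-balls follows immediately from the monotonicity of $\varepsilon\mapsto\inf_{Y\in B_{\varepsilon}(X)}\overline{I}(Y)$ and the inclusions $B_{\varepsilon}(X)\subset\overline{B_{\varepsilon}(X)}\subset B_{\varepsilon'}(X)$ for $\varepsilon'>\varepsilon$. For part (iii), the inequality $\overline{I}_L\leq\overline{I}$ pointwise is immediate from the definition and gives one direction. For an open $\Gamma$ and any $X\in\Gamma$, there exists $\varepsilon_0>0$ with $B_{\varepsilon_0}(X)\subset\Gamma$, so that $\inf_{Y\in B_{\varepsilon}(X)}\overline{I}(Y)\geq\inf_{Y\in\Gamma}\overline{I}(Y)$ for all $\varepsilon\leq\varepsilon_0$; sending $\varepsilon\downarrow 0$ gives $\overline{I}_L(X)\geq\inf_{Y\in\Gamma}\overline{I}(Y)$, and then infimizing over $X\in\Gamma$ yields the reverse inequality. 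For part (iv), $K\subset\overline{K_{\varepsilon}}$ gives one direction trivially. For the reverse, pick $Y_{\varepsilon}\in\overline{K_{\varepsilon}}$ with $\overline{I}_L(Y_{\varepsilon})\leq\inf_{Y\in\overline{K_{\varepsilon}}}\overline{I}_L(Y)+\varepsilon$; by compactness of $K$ and the fact that $d(Y_{\varepsilon},K)\leq\varepsilon$, a subsequence $Y_{\varepsilon_j}$ converges to some $Y^{\ast}\in K$, and the lower semicontinuity established in (i) gives $\overline{I}_L(Y^{\ast})\leq\liminf_{j}\overline{I}_L(Y_{\varepsilon_j})$, which closes the argument.

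The main obstacle is the goodness claim in (i): everything else reduces to soft topology about lower semicontinuous envelopes, but the compactness of sublevel sets requires genuinely translating bounds on the abstract weight functional $\overline{w}$ into a compactness statement in $\mathbb{S}_{+}^{M}$. This is where Assumption (E.4) (providing the uniform lower bound $c>0$ on non-centralized contributions) and the Riccati-operator domination of Proposition \ref{string_prop}(ii) are both indispensable; without these the argument used in \cite{Riccati-moddev} for counting weights would not carry over.
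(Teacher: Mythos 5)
Your proposal is correct and follows essentially the route the paper intends, namely adapting Proposition 6.1 of \cite{Riccati-moddev} with the counting functional $\pi(\cdot)$ replaced by the weight $\overline{w}(\cdot)$; the key observation that Assumption \textbf{(E.4)} gives a uniform constant $c>0$ so that $\pi(\mathcal{R})\leq\overline{w}(\mathcal{R})/c$, combined with Proposition~\ref{string_prop}(ii), is exactly what makes the goodness argument carry over. The remaining parts (ii)--(iv) are the standard lower-semicontinuous-envelope manipulations, handled correctly.
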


\subsection{The LD lower bound}
\label{ld-lower}
The following lemma establishes the LD lower bound for the sequence $\{\mu^{\overline{\gamma}}\}$ of invariant distributions as $\overline{\gamma} \rightarrow \infty$.

\begin{lemma}\label{lem-lowerb}
Let $\Gamma \in \mathcal{B} (\mathbb{S}_{+}^{M})$; then the following lower bound holds:
\begin{equation} \label{eq_lowerb}
\liminf_{\overline{\gamma}\rightarrow\infty}\frac{1}{\overline{\gamma}}\ln\mathbb{\mu}^{\overline{\gamma}}\left(\Gamma^{\circ}\right)\geq
-\inf_{X\in\Gamma^{\circ}}\underline{I}_L(X).
\end{equation}
\end{lemma}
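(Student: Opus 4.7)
The plan is to prove (\ref{eq_lowerb}) by first reducing it to a local lower bound near each point of $\Gamma^\circ$, then building that bound from three ingredients: the exponential estimate (E.4) on observation-set probabilities, the Lipschitz continuity of Riccati compositions (Proposition \ref{unif_conv}(iii)), and the fact that $\mu^{\overline{\gamma}}$ concentrates near $P^{\ast}$ (Theorem \ref{thm_convDirac}). The invariance of $\mu^{\overline{\gamma}}$ under the one-step transition of $\{\widetilde{P}(k)\}$ stitches these together.

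For the reduction, fix $X \in \Gamma^\circ$ with $\underline{I}(X) < \infty$ (the case $\underline{I}_L \equiv +\infty$ on $\Gamma^\circ$ is vacuous) and $\varepsilon>0$ with $B_\varepsilon(X) \subset \Gamma^\circ$. I would aim to show
$$\liminf_{\overline{\gamma}\to\infty}\frac{1}{\overline{\gamma}}\ln\mu^{\overline{\gamma}}\bigl(B_\varepsilon(X)\bigr) \geq -\underline{I}(X).$$
Taking the supremum over $X \in \Gamma^\circ$ and invoking the $\underline{I}_L$-analogue of Proposition \ref{prop_rate_function}(iii) yields $\inf_{X \in \Gamma^\circ} \underline{I}(X) = \inf_{X \in \Gamma^\circ} \underline{I}_L(X)$ since $\Gamma^\circ$ is open, which delivers (\ref{eq_lowerb}).

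For the local bound, fix $\delta>0$ and use (\ref{up-rate1}) to select a string $\mathcal{R} = (f_{\jmath_r},\ldots,f_{\jmath_1},P^{\ast}) \in \mathcal{S}^{P^{\ast}}(X)$ with $\underline{w}(\mathcal{R}) \leq \underline{I}(X) + \delta$, together with a minimizing path $(n_r,\ldots,n_1)\in\mathcal{P}_r$ attaining the minimum in (\ref{up-weight1}). By Proposition \ref{unif_conv}(iii) the composition $g = f_{\jmath_r}\circ\cdots\circ f_{\jmath_1}$ is Lipschitz, so there exists $\eta>0$ with $g(B_\eta(P^{\ast}))\subset B_\varepsilon(X)$. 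Conditioning on $\widetilde{P}(0)=Y\in B_\eta(P^{\ast})$, the iterate $\widetilde{P}(r)$ then lies in $B_\varepsilon(X)$ whenever the Markov chain $\{\widetilde{z}(k)\}$ traces the path $(n_1,\ldots,n_r)$ \emph{and} the dissemination process gives $\mathcal{I}^{n_i}_{i-1} = \{i_1,\ldots,i_{|\mathfrak{P}_{\jmath_i}|}\}$ for each $i$. Using the uniform invariant initial distribution of $\{\widetilde{z}(k)\}$ and the conditional independence in (E.2), these joint probabilities factor as
$$\frac{1}{N}\,\prod_{i=1}^{r-1}\overline{A}_{n_i,n_{i+1}}\,\prod_{i=1}^{r}q_{n_i}(\jmath_i) \;\geq\; c\,\exp\!\bigl(-\overline{\gamma}(\underline{w}(\mathcal{R})+\delta)\bigr),$$
for all sufficiently large $\overline{\gamma}$, where the Markov portion contributes a strictly positive $\overline{\gamma}$-independent constant $c>0$ (since $(n_r,\ldots,n_1)\in\mathcal{P}_r$ and $\overline{A}$ is irreducible), (E.4) bounds the observation probabilities for $\jmath_i \neq 2^N - 1$ exponentially, and (E.3) keeps those with $\jmath_i = 2^N-1$ uniformly bounded below by a positive constant.

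Invariance of $\mu^{\overline{\gamma}}$ applied to the $r$-step transition, combined with Theorem \ref{thm_convDirac}, then gives, for all large $\overline{\gamma}$,
$$\mu^{\overline{\gamma}}\bigl(B_\varepsilon(X)\bigr) \;\geq\; \mu^{\overline{\gamma}}\bigl(B_\eta(P^{\ast})\bigr)\cdot c\,e^{-\overline{\gamma}(\underline{I}(X)+2\delta)} \;\geq\; \tfrac{c}{2}\,e^{-\overline{\gamma}(\underline{I}(X)+2\delta)}.$$
Taking logarithms, dividing by $\overline{\gamma}$, and letting $\delta\downarrow 0$ closes the local bound. The main obstacle is the careful bookkeeping at the meeting point of three different asymptotic regimes — invariance at scale $\overline{\gamma}$, weak convergence of $\mu^{\overline{\gamma}}$ to $\delta_{P^{\ast}}$, and the exponential rate of observation dissemination — and in particular verifying that the composition, when started from a generic $Y$ near $P^{\ast}$ rather than at $P^{\ast}$ itself, still lands in $B_\varepsilon(X)$; this is exactly what the Lipschitz continuity in Proposition \ref{unif_conv}(iii) buys us.
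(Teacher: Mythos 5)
Your construction of the prefix string, the Lipschitz argument via Proposition~\ref{unif_conv}(iii), and the final reduction $\inf_{\Gamma^\circ}\underline{I}=\inf_{\Gamma^\circ}\underline{I}_L$ on open sets all match the spirit of the paper, but the pivotal step --- ``invariance of $\mu^{\overline{\gamma}}$ applied to the $r$-step transition'' --- has a genuine gap. The sequence $\{\widetilde{P}(k)\}$ is \emph{not} a Markov process on $\mathbb{S}_{+}^{M}$: its one-step map depends on the current location $\widetilde{z}(k)$ of the modulating chain, so there is no $r$-step transition kernel on $\mathbb{S}_{+}^{M}$ for which $\mu^{\overline{\gamma}}$ is a fixed point; $\mu^{\overline{\gamma}}$ is only the limit law of the $\mathbb{S}_{+}^{M}$-marginal (Theorems~\ref{asym_thm}--\ref{main2}). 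Moreover, your factorization $\frac{1}{N}\prod_i\overline{A}_{n_i,n_{i+1}}\prod_i q_{n_i}(\jmath_i)$ tacitly assumes that, under the stationary joint law, $\widetilde{P}(0)$ is independent of $\widetilde{z}(0)$ (product of the uniform law and $\mu^{\overline{\gamma}}$); since $\widetilde{P}(k)$ is built from the past of $\widetilde{z}$, this independence is unjustified, and $\mathbb{P}(\widetilde{z}=n_1,\ \widetilde{P}\in B_\eta(P^{\ast}))$ cannot simply be written as $\frac{1}{N}\mu^{\overline{\gamma}}(B_\eta(P^{\ast}))$. The paper avoids both issues entirely: it never invokes invariance, but lower-bounds the \emph{finite-time} probability $\mathbb{P}(P_{n}(k)\in\overline{B}_\varepsilon(X))$ by counting strings of the form (chosen prefix)$\,\circ f_{2^{N}-1}^{r_{\varepsilon_1}}\circ$(arbitrary tail), where the inserted centralized block washes out the unknown initial condition via the uniform convergence in Proposition~\ref{unif_conv}(ii) and the arbitrary tail probabilities sum to one, and then transfers to $\mu^{\overline{\gamma}}$ through the closed-set Portmanteau inequality. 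Your route can be repaired in that spirit (argue at finite $k$, use the open-set Portmanteau bound for $B_\eta(P^{\ast})$ together with $\mu^{\overline{\gamma}}(B_\eta(P^{\ast}))\rightarrow 1$ to bound the joint event $\{\widetilde{z}(k)=n_1,\ \widetilde{P}(k)\in B_\eta(P^{\ast})\}$ below by, say, $1/(2N)$), but as written the invariance step is not a valid inference.

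Two further cautions. First, your shortcut leans on Theorem~\ref{thm_convDirac}, whose proof in the paper is itself obtained from the inequality \eqref{lb_4} derived inside the paper's proof of this lemma; so the concentration input you invoke does not spare you the string-probability estimate --- it is downstream of it, and if your argument replaced the paper's proof you would have to re-derive that estimate to keep Theorem~\ref{thm_convDirac} standing (no circularity at the level of statements, but the bookkeeping must be made explicit). Second, two smaller imprecisions: positivity of $\prod_i\overline{A}_{n_i,n_{i+1}}$ does not follow from irreducibility alone, since $\mathcal{P}_r$ in \eqref{up-weight1} is defined with respect to the maximal adjacency matrix $\mathcal{A}$ and an $\mathcal{A}$-edge may carry zero $\overline{A}$-probability (a wrinkle the paper itself glosses over); and \textbf{(E.3)} gives $q_n(2^{N}-1)>0$ for each fixed $\overline{\gamma}$, not a uniform lower bound --- the fact actually needed is $q_n(2^{N}-1)\rightarrow 1$ as $\overline{\gamma}\rightarrow\infty$, which follows from \textbf{(E.4)} and normalization of the $q_n(\jmath)$, and is what the paper uses.
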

The proof is provided in Appendix~\ref{proof_lem-lowerb}.

\subsection{The LD upper bound}
\label{ld-upper}
In this subsection, we establish the LD upper bound for the family of invariant distributions as $\overline{\gamma} \rightarrow \infty$. The proof is divided into three steps. First, we establish the upper bound on compact sets. Then, we derive a tightness result for the family of invariant distributions. Finally, we establish the LD upper bound on the required closed sets.

First, we provide some basic results on the topological properties of strings.
\begin{definition}[Truncated String]\label{def_truncated}
Let the string $\mathcal{R}$ be given as $\mathcal{R}=(f_{\jmath_1},\cdots,f_{\jmath_r},P_0)$ where $r\in \mathbb{T}_+, \jmath_1,\cdots,{\jmath_r} \in \mathfrak{P} $. Then for $s\leq r$, the truncated string $\mathcal{R}^s$ of length $s$ is defined as
\begin{equation}
\mathcal{R}^s=(f_{\jmath_1},\cdots,f_{\jmath_s},P_0).
\end{equation}
\end{definition}

\begin{lemma}\label{ub_lemma1}
Define the set of strings $\mathcal{U} \subset \mathcal{S}^{P^*}$ and the quantities $l(F)$,
for a closed set $F \in \mathbb{S}_+^{M}$, as
\begin{align}
     \mathcal{U}(F)&=\left\{\mathcal{R}\in \mathcal{S}^{P^*}|\mathcal{N}(\mathcal{R})\in F
     \right\}\\
     l(F)&=\inf_{\mathcal{R}\in \mathcal{U}(F)} \pi(\mathcal{R})\\
     l^{'}(F)&=\inf_{\mathcal{R}\in \mathcal{U}(F)} \overline{w}(\mathcal{R})
\end{align}
where
\begin{equation}
    \overline{w}(\mathcal{R})=\min_{(n_{r},\cdots,n_{1})\in\mathcal{P}_{r}}\sum_{i=1}^{r}\mathbb{I}_{\jmath_{i}\neq
2^{N}-1}\overline{q}_{n_{i}}(\jmath_{i}).
\end{equation}
Then, if $l(F)< \infty$ and $l'(F)< \infty$, there
exists $r_F \in \mathbb{T}_+$ large enough, such that for all $\mathcal{R} \in \mathcal{U}(F)$
with len$(\mathcal{R}) \geq r_F$, we have $\pi(\mathcal{R}^{r_F}) \geq l(F)$ and
$\overline{w}(\mathcal{R}^{r_F}) \geq l'(F)$.
\end{lemma}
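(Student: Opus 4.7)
The approach is proof by contradiction. Suppose the conclusion fails for the $\pi$-part; the argument for the $\overline{w}$-part is analogous, and the joint statement follows by taking the maximum of the two $r_F$. Assume then that for every $r\in\mathbb{T}_{+}$ there is a string $\mathcal{R}_{r}\in\mathcal{U}(F)$ of length $\mathrm{len}(\mathcal{R}_{r})\geq r$ with $\pi(\mathcal{R}_{r}^{r})<l(F)$. My goal is to extract along a subsequence a limit string lying in $\mathcal{U}(F)$ whose non-centralized count is strictly below $l(F)$, contradicting the very definition of $l(F)$.

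First, I would exploit compactness to stabilize the non-centralized structure of the truncations. Each $\mathcal{R}_{r}^{r}$ contains only $s_{r}<l(F)$ operators different from $f_{2^{N}-1}$, and each such operator is chosen from the finite set $\mathfrak{P}\setminus\{2^{N}-1\}$. Hence the ordered list of non-centralized entries in $\mathcal{R}_{r}^{r}$ ranges over a finite set of tuples of length at most $l(F)-1$. By pigeonhole, after passing to a subsequence, I may assume $s_{r}=s$ is constant and the non-centralized sub-sequence equals a fixed tuple $(f_{\kappa_{1}},\ldots,f_{\kappa_{s}})$.

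Next, I would control the $s+1$ "gaps" (maximal runs of $f_{2^{N}-1}$) located before, between and after these non-centralized operators in $\mathcal{R}_{r}^{r}$. Since $r\to\infty$ with $s$ fixed, at least one gap length diverges; passing to a further subsequence, every gap tends to a limit in $\mathbb{N}\cup\{\infty\}$. Proposition~\ref{unif_conv}(ii) says that a long $f_{2^{N}-1}^{m}$ block collapses the state to $P^{\ast}$ uniformly on matrices lying above $P^{\ast}$, while Proposition~\ref{unif_conv}(iii) ensures that the Lipschitz constant of $f_{2^{N}-1}^{m}$ becomes arbitrarily small as $m\to\infty$. These two facts let me pass to the limit inside the finite composition defining $\mathcal{N}(\mathcal{R}_{r}^{r})$: divergent gaps reset the intermediate state to $P^{\ast}$, while gaps converging to finite integers survive as finite $f_{2^{N}-1}$-blocks. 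I therefore obtain a well-defined limit string $\mathcal{R}^{\ast}\in\mathcal{S}^{P^{\ast}}$ with $\pi(\mathcal{R}^{\ast})\leq s<l(F)$ and $\mathcal{N}(\mathcal{R}_{r}^{r})\longrightarrow\mathcal{N}(\mathcal{R}^{\ast})$.

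Finally, I would close the loop by comparing $\mathcal{N}(\mathcal{R}_{r})$ with $\mathcal{N}(\mathcal{R}_{r}^{r})$. Writing $\mathcal{N}(\mathcal{R}_{r})=\mathcal{R}_{r}^{r}(Y_{r})$ and $\mathcal{N}(\mathcal{R}_{r}^{r})=\mathcal{R}_{r}^{r}(P^{\ast})$, where $Y_{r}$ is obtained by applying the dropped prefix of $\mathcal{R}_{r}$ to $P^{\ast}$, the two differ only in initial condition. The divergent $f_{2^{N}-1}^{m}$-block identified above contributes to $\mathcal{R}_{r}^{r}$ a Lipschitz factor tending to $0$ by Proposition~\ref{unif_conv}(iii), while the prefix iterates $\{Y_{r}\}$ are confined to a bounded order-interval via Proposition~\ref{string_prop}(ii) combined with the order-preserving property in Lemma~\ref{lem_RDSprop}(ii). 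Hence $\lVert\mathcal{N}(\mathcal{R}_{r})-\mathcal{N}(\mathcal{R}_{r}^{r})\rVert\to 0$, so $\mathcal{N}(\mathcal{R}_{r})\to\mathcal{N}(\mathcal{R}^{\ast})$ too. Closedness of $F$ yields $\mathcal{N}(\mathcal{R}^{\ast})\in F$, i.e., $\mathcal{R}^{\ast}\in\mathcal{U}(F)$ with $\pi(\mathcal{R}^{\ast})<l(F)$, the desired contradiction. The $\overline{w}$-variant is handled by the same scheme, invoking the strict positivity of the weights $\overline{q}_{n}(\jmath)$ on non-centralized operators guaranteed by assumption \textbf{(E.4)}. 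The main obstacle is the transfer step: establishing uniform boundedness of the prefix states $Y_{r}$ and showing that a single divergent $f_{2^{N}-1}^{m}$-block indeed contracts enough to absorb the $Y_{r}$-vs-$P^{\ast}$ discrepancy uniformly over the remaining finite blocks; this is exactly where the Riccati-theoretic estimates of Section~\ref{app_res} (and their antecedents in~\cite{Riccati-moddev}) are essential.
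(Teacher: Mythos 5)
Your strategy is essentially correct, but it is a genuinely different route from the paper's. The paper proves the $\overline{w}$-part by an induction on the attained weight level: assuming the threshold exists up to level $k$ but fails at the next level, every violating string must consist of a prefix of fixed length $r_F^{k}$ (of weight exactly $k$) followed by a pure $f_{2^{N}-1}$-block out to the truncation point; a pigeonhole over the finitely many such prefixes fixes a prefix $\mathcal{R}'$ along a subsequence, the long centralized block collapses the \emph{entire} inner remainder (including everything beyond the truncation) to $P^{\ast}$ by Proposition~\ref{unif_conv}(ii), continuity of the fixed prefix composition gives $\mathcal{N}(\mathcal{R}_{r_m})\rightarrow\mathcal{N}(\mathcal{R}')$, and closedness of $F$ yields $\mathcal{R}'\in\mathcal{U}(F)$ with weight $k<l'(F)$, a contradiction; the $\pi$-part is handled identically and $r_F$ is the maximum of the two thresholds. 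You instead run a one-shot contradiction, pigeonholing the whole non-centralized pattern of the truncations (finitely many possibilities since $\pi(\mathcal{R}_r^{r})<l(F)$, resp.\ is bounded by $l'(F)$ divided by the minimal positive weight from \textbf{(E.4)}) and then analyzing the $f_{2^{N}-1}$-gaps, sending each to a limit in $\mathbb{N}\cup\{\infty\}$ and taking as limit string the portion outside the outermost divergent gap. This works: that limit string is eventually a truncation of $\mathcal{R}_r^{r}$, so $\pi(\mathcal{R}^{\ast})\leq\pi(\mathcal{R}_r^{r})<l(F)$ and, because subpaths of admissible paths remain admissible and the weights are nonnegative, $\overline{w}(\mathcal{R}^{\ast})\leq\overline{w}(\mathcal{R}_r^{r})<l'(F)$ (a small monotonicity fact you should state), and closedness of $F$ gives the contradiction. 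Your route trades the paper's induction for heavier combinatorial bookkeeping (pattern plus all gaps); the paper's induction only ever has to deal with a single long centralized block sitting right after a fixed-length prefix, so no gap decomposition is needed.

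The one sub-step you should not justify as proposed is the transfer from $\mathcal{N}(\mathcal{R}_r^{r})$ to $\mathcal{N}(\mathcal{R}_r)$. The dropped inner segment of $\mathcal{R}_r$ is not constrained by the hypothesis on the truncation, so it may contain arbitrarily many non-centralized operators; the bound in Proposition~\ref{string_prop}(ii) then grows with $r$, and for unstable $\mathcal{F}$ the states $Y_r$ need not lie in any fixed bounded order interval, so ``vanishing Lipschitz constant of $f_{2^{N}-1}^{m}$ times a bounded discrepancy'' does not close as stated. Fortunately no boundedness is needed: apply the uniform convergence of Proposition~\ref{unif_conv}(ii), with $r_{\varepsilon}$ independent of the initial state exactly as the paper itself invokes it, to the two states entering the divergent centralized block, obtaining $\|f_{2^{N}-1}^{m}(h_r(Y_r))-P^{\ast}\|\leq\varepsilon$ and $\|f_{2^{N}-1}^{m}(h_r(P^{\ast}))-P^{\ast}\|\leq\varepsilon$, hence $\|\mathcal{N}(\mathcal{R}_r)-\mathcal{N}(\mathcal{R}_r^{r})\|\leq 2K_g\varepsilon$ with $K_g$ the Lipschitz constant of the fixed outer composition; equivalently, work with the full string directly and let the divergent block absorb the dropped tail, which is precisely how the paper avoids this issue altogether.
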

In the statement of Lemma~\ref{ub_lemma1}, we assume that the infimum of an empty set is $\infty$. The proof of Lemma~\ref{ub_lemma1} is provided in Appendix~\ref{proof_ub_lemma1}.

From the definition of $\mathcal{U}(F)$, we see that
\begin{equation}
 \mathcal{U}(F)= \bigcup_{X \in F} \mathcal{S}^{P^*} (X)
\end{equation}
and hence
\begin{equation}
l(F)=\inf_{X\in F}\inf_{\mathcal{R}\in \mathcal{S}^{P^*} (X) } \pi(\mathcal{R})=\inf_{X\in F} I(X)
\end{equation}
\begin{equation}
l^{'}(F)=\inf_{X \in F}\inf_{\mathcal{R}\in \mathcal{S}^{P^*} (X) } \overline{w}(\mathcal{R}) =\inf_{X\in F} \overline{I}(X).
\end{equation}

If $l(F) <\infty$, i.e., the set $\mathcal{U}(F)$ is non-empty, the infimum is attained. That is, there exists $\mathcal{R}^* \in \mathcal{U}(F)$ such that
$l(F)=\pi(\mathcal{R}^*)$.

Now we prove the LD upper bound for the family of $\left\{\mu^{\overline{\gamma}}\right\}$ as $\overline{\gamma}\rightarrow \infty$ over compact sets.
\begin{lemma}\label{lem_up_compact}
Let $K \in \mathcal{B}(\mathbb{S}_+^M)$ be a compact set. Then the following upper bound holds:
\begin{equation}
\limsup_{\overline{\gamma}\rightarrow\infty}\frac{1}{\overline{\gamma}}\ln\mathbb{\mu}^{\overline{\gamma}}\left(K\right)\leq
-\inf_{X\in K}\overline{I}_L(X).
\end{equation}
\end{lemma}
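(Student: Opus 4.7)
The plan is to combine the stationarity of $\mu^{\overline{\gamma}}$ under the auxiliary RDS, the structural result of Lemma~\ref{ub_lemma1}, and the exponential-decay estimates for the dissemination probabilities from Assumption~(E.4). Set $\alpha := \inf_{X\in K}\overline{I}_L(X)$ and assume $\alpha>0$, since the case $\alpha=0$ is trivial. First I would fix $\eta\in(0,\alpha)$ and use the compactness of $K$ together with parts (iii) and (iv) of Proposition~\ref{prop_rate_function} to select a closed enlargement $F=\overline{K_\varepsilon}$ with $l'(F)=\inf_{X\in F}\overline{I}(X)\ge \alpha-\eta$.

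Next, using the invariance $\mu^{\overline{\gamma}}T^k=\mu^{\overline{\gamma}}$, I would write
\[
\mu^{\overline{\gamma}}(K)\,=\,\int \mathbb{P}\!\left(\widetilde{P}(k)\in K\,\big|\,\widetilde{P}(0)=X_0\right)\mu^{\overline{\gamma}}(dX_0)
\]
for any $k$. The next step is to reduce the integrand to a statement about strings in $\mathcal{S}^{P^*}$. A uniform-in-$\overline{\gamma}$ tightness estimate for $\{\mu^{\overline{\gamma}}\}$, which follows from the weak convergence $\mu^{\overline{\gamma}}\!\Rightarrow\!\delta_{P^*}$ in Theorem~\ref{thm_convDirac}, confines $X_0$ to a fixed compact set with probability arbitrarily close to one. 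Then, invoking the order-preservation of the RDS (Lemma~\ref{lem_RDSprop}(ii)) and the comparison bound of Proposition~\ref{string_prop}(ii), I would majorize each sample trajectory by the numerical value of a $\mathcal{S}^{P^*}$-string whose centralized factors have been stripped; the Lipschitz estimates of Proposition~\ref{unif_conv}(iii) will translate the event $\{\widetilde{P}(k)\in K\}$ into the event that this associated reduced string lies in a controlled closed enlargement of $F$.

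With this reduction in hand, I would apply Lemma~\ref{ub_lemma1} to the enlarged closed set to obtain $r_F\in\mathbb{T}_+$ such that every admissible $\mathcal{S}^{P^*}$-string of length $\ge r_F$ landing in the enlargement has its truncation $\mathcal{R}^{r_F}$ satisfying $\overline{w}(\mathcal{R}^{r_F})\ge l'(F)$. By the independence across epochs in Assumption~(E.2) and the exponential-decay statement of Assumption~(E.4), for each admissible pattern $(n_1,\ldots,n_{r_F})\in\mathcal{P}_{r_F}$ with labels $(\jmath_1,\ldots,\jmath_{r_F})$, the probability of realizing this truncation under the M-GIKF dynamics factorizes across the $r_F$ epochs and is bounded by $\exp(-\overline{\gamma}(\overline{w}(\mathcal{R}^{r_F})-o(1)))$ as $\overline{\gamma}\to\infty$. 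A union bound over the finitely many such patterns then yields
\[
\mu^{\overline{\gamma}}(K)\,\le\,C(r_F)\exp\!\bigl(-\overline{\gamma}(l'(F)-o(1))\bigr),
\]
where $C(r_F)$ is independent of $\overline{\gamma}$. Taking $\limsup_{\overline{\gamma}\to\infty}\overline{\gamma}^{-1}\ln$ of both sides and finally sending $\eta\downarrow 0$ delivers the required bound $-\alpha$.

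The most delicate step is the reduction to $\mathcal{S}^{P^*}$: Lemma~\ref{ub_lemma1} presupposes that the initial state of the string is exactly $P^*$, while under $\mu^{\overline{\gamma}}$ the initial state is random. Bridging this gap will require (a) a uniform-in-$\overline{\gamma}$ tightness of $\{\mu^{\overline{\gamma}}\}$ so that $\widetilde{P}(0)$ effectively lives in a bounded set, and (b) the interplay of the order-preservation in Lemma~\ref{lem_RDSprop}(ii), the $\mathcal{S}^{P^*}$-domination in Proposition~\ref{string_prop}(ii), and the Lipschitz control in Proposition~\ref{unif_conv}(iii), arranged so that the comparison to $\mathcal{S}^{P^*}$-strings does not inflate the exponential rate $l'(F)$ by more than $o(1)$ as $\overline{\gamma}\to\infty$.
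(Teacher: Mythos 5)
There is a genuine gap, and it sits exactly where you flagged it: the passage from the random initial condition $X_0\sim\mu^{\overline{\gamma}}$ to strings rooted at $P^{\ast}$. The paper never faces this problem, because it does not start from the invariance identity at all. Since Theorems~\ref{asym_thm}--\ref{main1} give weak convergence of the switched iterates to $\mu^{\overline{\gamma}}$ from \emph{every} initial condition, one may simply run the chain from the deterministic initial state $P^{\ast}$; then every realization of $P_{n}(r)$ is literally the numerical value of a random string in $\mathcal{S}_{r}^{P^{\ast}}$, and the Portmanteau inequality applied to the open enlargement $K_{\varepsilon}$ gives $\mu^{\overline{\gamma}}(K)\leq\mu^{\overline{\gamma}}(K_{\varepsilon})\leq\liminf_{r\to\infty}\mathbb{P}\left(P_{n}(r)\in\overline{K_{\varepsilon}}\right)$. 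After that, Lemma~\ref{ub_lemma1} applied to the closed set $\overline{K_{\varepsilon}}$, a combinatorial union bound over the finitely many truncations of length $r_{\overline{K_{\varepsilon}}}$, Assumption~\textbf{(E.4)}, and finally $\varepsilon\to 0$ via Proposition~\ref{prop_rate_function}(iii)--(iv) finish the proof — these last steps you do have. Your detour through $\mu^{\overline{\gamma}}=\mu^{\overline{\gamma}}T^{k}$ creates the very difficulty the paper's route avoids, and your proposed bridge does not close it.

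Concretely, two steps of the bridge fail. First, the tightness you invoke (ordinary tightness of $\{\mu^{\overline{\gamma}}\}$, deduced from $\mu^{\overline{\gamma}}\Rightarrow\delta_{P^{\ast}}$ in Theorem~\ref{thm_convDirac}) only confines $X_{0}$ to a compact set with probability $1-\epsilon$; in the decomposition $\mu^{\overline{\gamma}}(K)\leq\int_{X_{0}\in C}\mathbb{P}(\cdot\,|\,X_{0})\,d\mu^{\overline{\gamma}}+\mu^{\overline{\gamma}}(C^{c})$ the remainder is only $o(1)$, so after taking $\frac{1}{\overline{\gamma}}\ln(\cdot)$ it contributes exponent $0$ and wipes out any nontrivial bound. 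Making that remainder exponentially negligible would require exponential tightness in $\overline{\gamma}$ — essentially Lemma~\ref{lem_up_tightness}, whose proof itself rests on the $P^{\ast}$-string machinery — or the LD upper bound you are trying to prove, which is circular. Second, the comparison tools you list do not localize the trajectory: order preservation (Lemma~\ref{lem_RDSprop}(ii)) and Proposition~\ref{string_prop}(ii) give only a one-sided domination $\succeq$ (and the latter dominates by a reduced string started at $\alpha_{P_{0}}I$, not at $P^{\ast}$), which is suited to norm/tightness bounds but cannot convert $\{\widetilde{P}(k)\in K\}$ — a non-monotone event — into membership of a $P^{\ast}$-string in a small enlargement of $K$; and the Lipschitz constant in Proposition~\ref{unif_conv}(iii) is only small when the composition contains long centralized runs, which you cannot assume for the strings that matter. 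Replace the invariance-based opening with the weak-convergence/Portmanteau opening from a $P^{\ast}$ initial condition and the rest of your outline goes through as in the paper.
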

The proof is presented in Appendix \ref{proof_lem_up_compact}.

We use the following tightness result to extend the upper bound from compact sets to arbitrary closed sets.
\begin{lemma} \label{lem_up_tightness}
The family of invariant distributions $\left\{\mu^{\overline{\gamma}}\right\}$ satisfies the following tightness property: For every $a>0$, there exists a compact set $K_a \subset \mathbb{S}_+^M$ such that,
\begin{equation}
\limsup_{\overline{\gamma}\rightarrow\infty}\frac{1}{\overline{\gamma}}\ln\mathbb{\mu}^{\overline{\gamma}}\left(K_a^{C}\right)\leq -W(a)
\end{equation}
where
\begin{align}
W(a)=\min_{\mbox{\tiny$\begin{array}{c}
{(n_1,\cdots,n_{{\mbox{\scriptsize len}(\mathcal{R})}})\in \mathcal{P}_{{\mbox{\scriptsize len}(\mathcal{R})}} }\\
{\mathcal{R}:\pi
(\mathcal{R})=\lfloor a\rfloor}\end{array}$}} \sum_{i=1}^{{\mbox{\scriptsize len}(\mathcal{R})}}\mathbb{I}_{\jmath_{i}\neq 2^{N}-1}\overline{q}_{n_{i}}(\jmath_{i}).
\end{align}
\end{lemma}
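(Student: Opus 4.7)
My plan is to construct a compact set $K_a$ so that the event $\{\widetilde{P}(k)\in K_a^C\}$ forces at least $\lfloor a\rfloor$ non-centralized Riccati operators to appear in a recent window of the driving string, and then to bound the probability of such a pattern via assumption \textbf{(E.4)}.

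First, I would construct $K_a$ as follows. Initialize the auxiliary process at $\widetilde{P}(0)=P^{\ast}$ and invoke Proposition \ref{string_prop}(ii): there is a constant $\alpha_{P^{\ast}}$ such that for any string $\mathcal{R}$ from $P^{\ast}$,
$$\mathcal{N}(\mathcal{R}) \preceq f_{\widehat{\jmath}_{\pi(\mathcal{R})}}\circ\cdots\circ f_{\widehat{\jmath}_{1}}(\alpha_{P^{\ast}}I).$$
When $\pi(\mathcal{R})<\lfloor a\rfloor$, the right-hand side ranges over a finite family of matrices, indexed by at most $\lfloor a\rfloor-1$ non-centralized operators chosen from the finite set $\mathfrak{P}\setminus\{2^{N}-1\}$. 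Let $M_a$ be an upper bound (in the Loewner order) for this finite family, and set $K_a=\{X\in\mathbb{S}_+^M:X\preceq M_a\}$, which is a compact subset of $\mathbb{S}_+^M$. By construction, any string from $P^{\ast}$ with fewer than $\lfloor a\rfloor$ non-centralized operators yields a numerical value in $K_a$.

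Next, I would localize $\{\widetilde{P}(k)\in K_a^C\}$ to a bounded trailing window of the trajectory. Starting from $\widetilde{P}(0)=P^{\ast}$, Theorem \ref{main1} yields $\widetilde{P}(k)\Rightarrow \mu^{\overline{\gamma}}$, so (perturbing $M_a$ slightly to obtain a continuity set) $\mu^{\overline{\gamma}}(K_a^C)=\lim_{k\to\infty}\mathbb{P}(\widetilde{P}(k)\in K_a^C)$. The naive bound $\mathbb{P}(\widetilde{P}(k)\in K_a^C)\leq \mathbb{P}(\pi_k\geq\lfloor a\rfloor)$ is trivial for large $k$ since the total count $\pi_k$ diverges under \textbf{(E.3)}. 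To remedy this, I would invoke the contraction property of $f_{2^{N}-1}$ from Proposition \ref{unif_conv}(ii)--(iii): there exists $r^{\ast}$ such that $f_{2^{N}-1}^{r^{\ast}}$ is a strict contraction. A block-decomposition argument, re-applying Proposition \ref{string_prop}(ii) to sub-strings separated by contractive blocks of centralized operations, shows that whenever the most recent $r_a$ driving operators contain fewer than $\lfloor a\rfloor$ non-centralized entries, $\widetilde{P}(k)\in K_a$ regardless of earlier history. Hence $\{\widetilde{P}(k)\in K_a^C\}$ is contained in an event $E_k(a)$ depending only on the last $r_a$ driving operators.

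Finally, I would bound $\mathbb{P}(E_k(a))$ using \textbf{(E.4)}. Conditional on the Markov chain $\{\widetilde{z}(i)\}$ following a specific admissible path $(n_{r_a},\ldots,n_1)\in\mathcal{P}_{r_a}$, the operator indices $\jmath_i$ are (conditionally) independent with $\mathbb{P}(\jmath_i=\jmath)=q_{n_i}(\jmath)$. By \textbf{(E.4)}, for $\jmath\neq 2^{N}-1$,
$$q_{n_i}(\jmath)\leq \exp\bigl(-\overline{\gamma}(\overline{q}_{n_i}(\jmath)-o(1))\bigr),$$
while centralized factors are at most $1$. A union bound over the finitely many admissible string/path pairs realizing $\pi(\mathcal{R})=\lfloor a\rfloor$ then gives $\mathbb{P}(E_k(a))\leq C\cdot e^{-\overline{\gamma}(W(a)-o(1))}$, uniformly in $k$; taking $\tfrac{1}{\overline{\gamma}}\log$ and $\limsup$ recovers the claim. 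The main obstacle is the middle step: pinning down the window length $r_a$ and verifying that fewer than $\lfloor a\rfloor$ non-centralized operators in that window truly forces $\widetilde{P}(k)\in K_a$ irrespective of earlier history. This requires a careful interplay between the order-preserving property of the Riccati operators (Lemma \ref{lem_RDSprop}(ii)) and the strict contractivity of $f_{2^{N}-1}$, iterated through sub-string applications of Proposition \ref{string_prop}(ii); the rest is a fairly standard counting exercise.
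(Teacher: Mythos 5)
Your construction of $K_a$ via Proposition~\ref{string_prop}(ii) and your final counting step match the paper's proof: the paper picks $z\geq a$, takes $b$ bounding the norm of all compositions of at most $z$ non-centralized operators applied to $\alpha_{P^{\ast}}I$, sets $K_a=\{X:\|X\|\leq b\}$ and $F_b=\{X:\|X\|\geq b\}$, bounds $\mu^{\overline{\gamma}}(K_a^{C})\leq\liminf_{r}\mathbb{P}\left(P_{n}(r)\in F_b\right)$ by weak convergence exactly as you propose, and finishes with the same product/union bound over window configurations together with \textbf{(E.4)} to obtain $-W(z)\leq -W(a)$.

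The gap is the step you yourself flag as the main obstacle: the claim that a bounded trailing window containing fewer than $\lfloor a\rfloor$ non-centralized operators forces $\widetilde{P}(k)\in K_a$ irrespective of earlier history. You do not prove it, and as stated it is not quite right with your $K_a$: after a long centralized run the state is only close to $P^{\ast}$ (dominated by some $\alpha' I$ with $\alpha'>\alpha_{P^{\ast}}$ in general), so re-applying Proposition~\ref{string_prop}(ii) to the post-reset substring involves the constant $\alpha_{\alpha' I}$ rather than $\alpha_{P^{\ast}}$, and $M_a$ must be enlarged accordingly; the pigeonhole choice of the window length $r_a$ and the edge case in which the centralized run abuts time $k$ also need explicit treatment. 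The paper sidesteps all of this: it notes that $l(F_b)=\inf_{\mathcal{R}\in\mathcal{U}(F_b)}\pi(\mathcal{R})\geq z$ (by the same domination you used to build $K_a$) and then invokes the already-established Lemma~\ref{ub_lemma1} for the closed set $F_b$, which supplies a window length $r_{F_b}$ with $\pi\left(\mathcal{R}^{r_{F_b}}\right)\geq z$ for every sufficiently long string whose numerical value lies in $F_b$; that lemma was proved once and for all by a contradiction/limit-point argument based on the uniform convergence of $f_{2^N-1}^{r}$ to $P^{\ast}$. In short, the localization you sketch is precisely Lemma~\ref{ub_lemma1}; citing it (or completing your block-decomposition with the enlarged constant) closes the argument, but as written your proof is incomplete at its decisive step.
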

The proof is presented in Appendix \ref{proof_lem_up_tightness}.

Now we can complete the proof of the LD upper bound for arbitrary closed sets by using the upper bound on compact sets in Lemma~\ref{lem_up_compact} and the tightness result in Lemma~\ref{lem_up_tightness}.
\begin{lemma} \label{lem-upperb}
For a closed set $F\in \mathcal{B}(\mathbb{S}_+^{M})$, the following upper bound holds:
\begin{equation}
\limsup_{\overline{\gamma}\rightarrow\infty}\frac{1}{\overline{\gamma}}\ln\mathbb{\mu}^{\overline{\gamma}}\left(F\right)\leq
-\inf_{X\in F}\overline{I}_L(X).
\end{equation}
\end{lemma}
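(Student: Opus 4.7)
The plan is to derive the closed set bound from the compact set bound in Lemma~\ref{lem_up_compact} together with the exponential tightness estimate in Lemma~\ref{lem_up_tightness}, via the standard splitting $F = (F\cap K_a)\cup(F\cap K_a^{C})$. Fix a closed set $F\in\mathcal{B}(\mathbb{S}_{+}^{M})$ and a level $a>0$. Since $K_a$ is compact and $F$ is closed, $F\cap K_a$ is compact. Write
\begin{equation}
\mu^{\overline{\gamma}}(F)\leq \mu^{\overline{\gamma}}(F\cap K_a)+\mu^{\overline{\gamma}}(K_a^{C}).
\end{equation}
Applying Lemma~\ref{lem_up_compact} to the compact set $F\cap K_a$ and using the monotonicity of the infimum ($F\cap K_a\subset F$) gives
\begin{equation}
\limsup_{\overline{\gamma}\to\infty}\frac{1}{\overline{\gamma}}\ln\mu^{\overline{\gamma}}(F\cap K_a)\leq -\inf_{X\in F\cap K_a}\overline{I}_L(X)\leq -\inf_{X\in F}\overline{I}_L(X),
\end{equation}
while Lemma~\ref{lem_up_tightness} yields $\limsup_{\overline{\gamma}\to\infty}\frac{1}{\overline{\gamma}}\ln\mu^{\overline{\gamma}}(K_a^{C})\leq -W(a)$.

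Next I would invoke the elementary inequality $u+v\leq 2\max(u,v)$ for nonnegative $u,v$, which implies
\begin{equation}
\limsup_{\overline{\gamma}\to\infty}\frac{1}{\overline{\gamma}}\ln\mu^{\overline{\gamma}}(F)\leq \max\Bigl(-\inf_{X\in F}\overline{I}_L(X),\,-W(a)\Bigr).
\end{equation}
The conclusion then follows by sending $a\to\infty$, provided one can show $W(a)\to\infty$. The main step here uses Assumption~\textbf{(E.4)}: since $\overline{q}_{n}(\jmath)>0$ whenever $\jmath\neq 2^{N}-1$ and $n\in\{i_{1},\ldots,i_{|\mathfrak{P}_\jmath|}\}$, and any feasible string with $\pi(\mathcal{R})=\lfloor a\rfloor$ contributes at least $\lfloor a\rfloor$ such positive terms along any admissible path in $\mathcal{P}_{\mbox{\scriptsize len}(\mathcal{R})}$, one obtains $W(a)\geq \lfloor a\rfloor\cdot c_{0}$ for a strictly positive constant $c_{0}$ depending only on the finite set of nontrivial $\overline{q}_{n}(\jmath)$ values. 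Hence $W(a)\to\infty$, and for any $a$ large enough that $W(a)\geq \inf_{X\in F}\overline{I}_L(X)$ the maximum reduces to $-\inf_{X\in F}\overline{I}_L(X)$; if $\inf_{X\in F}\overline{I}_L(X)=\infty$ the same argument gives an arbitrarily negative upper bound, covering the case vacuously.

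The bulk of the proof is essentially bookkeeping: combining the two lemmas and taking the limit in $a$. The only subtle point I anticipate is verifying that $W(a)\to\infty$, in particular making sure that the minimization defining $W(a)$ is not forced onto paths where all relevant $\overline{q}_{n_i}(\jmath_i)$ become $+\infty$ (so that the bound $W(a)\geq \lfloor a\rfloor c_0$ is not vacuously loose for the sides we need). This is handled by noting that at each non-centralized step along a feasible path the index $n_i$ must lie in the support of $\jmath_i$ with positive probability (by Assumption~\textbf{(E.4)} and the construction of feasible strings), so the infimum over admissible $(n_r,\ldots,n_1)\in\mathcal{P}_r$ in the definitions of $\overline{w}(\mathcal{R})$ and $W(a)$ is attained at finite values. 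With this in hand the combination argument closes, yielding the claimed upper bound $\limsup_{\overline{\gamma}\to\infty}\frac{1}{\overline{\gamma}}\ln\mu^{\overline{\gamma}}(F)\leq -\inf_{X\in F}\overline{I}_L(X)$.
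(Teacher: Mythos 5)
Your proof is correct and follows essentially the same route as the paper: the splitting $\mu^{\overline{\gamma}}(F)\leq\mu^{\overline{\gamma}}(F\cap K_a)+\mu^{\overline{\gamma}}(K_a^{C})$, the compact-set bound of Lemma~\ref{lem_up_compact}, the tightness bound of Lemma~\ref{lem_up_tightness}, the max-of-limsups step (the paper cites Lemma~1.2.15 of \cite{DemboZeitouni}, you use $u+v\leq 2\max(u,v)$, which is equivalent here), and finally $a\to\infty$. Your only addition is spelling out why $W(a)\to\infty$ via $W(a)\geq\lfloor a\rfloor\min_{n,\jmath\neq 2^{N}-1}\overline{q}_{n}(\jmath)>0$, which the paper asserts without detail (and note that $\overline{q}$ values equal to $+\infty$ can only help, so your worry there is harmless).
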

\begin{proof}
Let $a >0$ be arbitrary. By the tightness estimate in Lemma~\ref{lem_up_tightness}, there exists a compact set $K_a \subset \mathbb{S}_+^M$ such that
\begin{equation} \label{up_final1}
\limsup_{\overline{\gamma}\rightarrow\infty}\frac{1}{\overline{\gamma}}\ln\mathbb{\mu}^{\overline{\gamma}}\left(K_a^{C}\right)\leq -W(a).
\end{equation}
The set $F\cap K_a$, as the intersection of a closed and a compact set, is compact. Then the LD upper bound in Lemma~\ref{lem_up_compact} holds, and we have
\begin{equation}\label{up_final2}
\limsup_{\overline{\gamma}\rightarrow\infty}\frac{1}{\overline{\gamma}}\ln\mathbb{\mu}^{\overline{\gamma}}\left(F\cap K_a\right)\leq
-\inf_{X\in F\cap K_a }\overline{I}_L(X).
\end{equation}
To estimate the probability $\mu^{\overline{\gamma}}(F)$, we use the following decomposition:
\begin{equation}
\mu^{\overline{\gamma}}(F)=\mu^{\overline{\gamma}}(F\cap K_a)+\mu^{\overline{\gamma}}(F\cap K_a^C)\leq \mu^{\overline{\gamma}}(F\cap K_a)+ \mu^{\overline{\gamma}}( K_a^C).
\end{equation}
From the results on the limits of real number sequences (see Lemma 1.2.15 of \cite{DemboZeitouni}), we have
\begin{align}
&\limsup_{\overline{\gamma}\rightarrow\infty}\frac{1}{\overline{\gamma}}\ln\mathbb{\mu}^{\overline{\gamma}}\left(F\right) \leq  \nonumber\\
&\max \left( \limsup_{\overline{\gamma}\rightarrow\infty}\frac{1}{\overline{\gamma}}\ln\mathbb{\mu}^{\overline{\gamma}}\left(F\cap K_a \right), \limsup_{\overline{\gamma}\rightarrow\infty}\frac{1}{\overline{\gamma}}\ln\mathbb{\mu}^{\overline{\gamma}}\left(K_a^C\right) \right).\nonumber
\end{align}

From (\ref{up_final1}) and (\ref{up_final2}), we have
\begin{align}
\limsup_{\overline{\gamma}\rightarrow\infty}\frac{1}{\overline{\gamma}}\ln\mathbb{\mu}^{\overline{\gamma}}\left(F\right) &\leq \max \left( -\inf_{X\in F\cap K_a }\overline{I}_L(X),-W(a)\right)\nonumber\\
&\leq  \max \left( -\inf_{X\in F }\overline{I}_L(X),-W(a)\right)\nonumber\\
&=-\min \left( \inf_{X\in F }\overline{I}_L(X),W(a)\right).\nonumber
\end{align}

Since the above inequality holds for an arbitrary $a>0$, taking the limit as $a\rightarrow \infty$ on both sides together with $W(a)\rightarrow \infty$, we have
\begin{align}
\limsup_{\overline{\gamma}\rightarrow\infty}\frac{1}{\overline{\gamma}}\ln\mathbb{\mu}^{\overline{\gamma}}\left(F\right) &\leq \max \left( -\inf_{X\in F\cap K_a }\overline{I}_L(X),-W(a)\right)\nonumber\\
&\leq  -\inf_{X\in F }\overline{I}_L(X).
\end{align}
\end{proof}

\section{Proof of Theorem~\ref{th:ldp}}\label{proof_th:ldp}
We are now ready to complete the proof of Theorem~\ref{th:ldp}.
\begin{proof}
Lemma~\ref{lem-lowerb} and Lemma~\ref{lem-upperb} have established that the family of $\left\{\mu^{\overline{\gamma}}\right\}$ satisfies the LD lower and upper bounds at scale $\overline{\gamma}$ with rate functions $\underline{I}_L$ and $\overline{I}_L$, respectively, as $\overline{\gamma}\rightarrow \infty$. To complete the proof of Theorem~\ref{th:ldp}
it suffices to show that $\overline{I}_L(\cdot)=\overline{I}(\cdot)$ and $\underline{I}_L(\cdot)=\underline{I}(\cdot)$, i.e., $\overline{I}(\cdot)$ and $\underline{I}(\cdot)$ are lower semicontinuous. We first prove  $\overline{I}_L(\cdot)=\overline{I}(\cdot)$, and it takes the same method to prove $\underline{I}_L(\cdot)=\underline{I}(\cdot)$.

If $\overline{I}_L(X)=\infty$, from Proposition~\ref{prop_rate_function} (iii), clearly $\overline{I}_L(X)=\overline{I}(X),~\forall X \in \mathbb{S}_+^M$. Then, we consider the case $\overline{I}_L(X)<\infty$. From the definition
\begin{equation}
\overline{I}_L(X)= \lim_{\varepsilon \rightarrow \infty} \inf_{Y\in B_{\varepsilon}(X)} \overline{I}(Y),
\end{equation}
we know the discrete quantity $ \inf_{Y\in B_{\varepsilon}(X)} \overline{I}(Y)$ is non-decreasing w.r.t. $\varepsilon$; then there exists $\varepsilon_0 >0$ such that
\begin{equation}
\overline{I}_L(X)= \inf_{Y\in B_{\varepsilon}(X)} \overline{I}(Y),~\forall \varepsilon \leq \varepsilon_0.
\end{equation}
The infimum above is achieved for every $\varepsilon _0 > 0$, and we conclude that there exists a sequence $\{X_n\}_{n\in \mathbb{N}}$ such that
\begin{equation}
X_n \in \overline{B_{\varepsilon_0}}(X), ~ \lim_{n\rightarrow \infty} X_n=X,~ \overline{I}(X_n)=\overline{I}_L(X).
\end{equation}
Recall the set of strings
\begin{equation}
\mathcal{U}( \overline{B_{\varepsilon_0}}(X) )= \{ \mathcal{R} \in \mathcal{S}^{P^*}|\mathcal{N}(\mathcal{R})\in \overline{B_{\varepsilon_0}}(X)  \}.
\end{equation}
Then we have
\begin{equation}
l' (\overline{B_{\varepsilon_0}}(X))= \inf_{Y\in \overline{B_{\varepsilon_0}}(X)} \overline{I}(Y) = \overline{I}_L(X).
\end{equation}

Since $\overline{B_{\varepsilon_0}}(X)$ is closed, by Lemma~\ref{ub_lemma1}, there exists $r_0 \in \mathbb{T}_+$ such that for $\mathcal{R} \in \mathcal{U}( \overline{B_{\varepsilon_0}}(X) )$ with len$(\mathcal{R})\geq r_0$,
\begin{equation}
  \overline{w}(\mathcal{R}^{r_0}) \geq l'(\overline{B_{\varepsilon_0}}(X))= \overline{I}_L(X).
\end{equation}

By the existence of $\{X_n\}$, there exists a sequence $\{\mathcal{R}_n \}$ of strings in $\mathcal{U}( \overline{B_{\varepsilon_0}}(X) )$ such that
\begin{equation}\label{eq_pfmain}
\mathcal{N}(\mathcal{R}_n)=X_n,~ \overline{w}(\mathcal{R}_n)=  \overline{I}_L(X).
\end{equation}

Without loss of generality, we assume that len$(\mathcal{R}_n)=r_0$ for all $n$. Indeed, if len$(\mathcal{R}_n) <r_0$, we can modify $\mathcal{R}_n$ by appending the requisite number of $f_{2^N-1}$ at the right end, which still satisfies (\ref{eq_pfmain}). On the other hand, if len$(\mathcal{R}_n) >r_0$, we note that $\mathcal{R}_n$ must be of the form
\begin{equation}
\mathcal{R}_n=\left( f_{\jmath_1},\cdots,f_{\jmath_{r_0}}, f_{2^N-1}^{\mbox{\scriptsize len} (\mathcal{R}_n)-r_0}, P^* \right)
\end{equation}
where the truncated string (defined in Definition~\ref{def_truncated}) $\mathcal{R}_n^{r_0}$ satisfies $\mathcal{N}(\mathcal{R}_n^{r_0})=X_n$ and $\overline{w}(\mathcal{R}_n^{r_0})=\overline{I}_L(X)$. Hence, if len$(\mathcal{R}_n) >r_0$, we may consider the truncated string $\mathcal{R}_n^{r_0}$ instead, which also satisfies (\ref{eq_pfmain}). We thus assume that the sequence $\{ \mathcal{R}_n \}$ with the properties in (\ref{eq_pfmain}) further satisfies len$(\mathcal{R}_n)=r_0$ for all $n$.

The number of distinct strings in the sequence $\{ \mathcal{R}_n \}$ is at most $(2^N-1)^{r_0}$; in fact, it should be less than $(2^N-1)^{r_0}$ due to the constraint $ \overline{w}(\mathcal{R}_n)=\overline{I}_L(X)$. Hence, at least one pattern is repeated infinitely often in the sequence $\{ \mathcal{R}_n \}$, i.e., there exists a string $ \mathcal{R}^* $ such that we have len$(\mathcal{R}^*)=r_0,~\overline{w}(\mathcal{R}^*)=  \overline{I}_L(X)$, and a subsequence $\{ \mathcal{R}_{n_k}\}_{k \in \mathbb{N}}$ of $\{ \mathcal{R}_{n}\}$ with $\mathcal{R}_{n_k}=\mathcal{R}^*$.

The corresponding subsequence $\{X_{n_k}\}$ of numerical values then satisfies
\begin{equation}
X_{n_k}= \mathcal{N} ( \mathcal{R}_{n_k})= \mathcal{N}(\mathcal{R}^*),~\forall k\in \mathbb{N},
\end{equation}
and hence we have
\begin{equation}
X=\lim_{k\rightarrow \infty} X_{n_k}=\mathcal{N}(\mathcal{R}^*).
\end{equation}

Therefore, we have the string $\mathcal{R}^* \in \mathcal{S}^{P^*}(X)$ and
\begin{equation}
\overline{I}(X)= \inf_{\mathcal{R} \in \mathcal{S}^{P^*}(X)} \overline{w}(\mathcal{R}) \leq \overline{w}(\mathcal{R}^*)=\overline{I}_L(X).
\end{equation}

With the fact that $\overline{I}(X) \geq \overline{I}_L(X)$, we have the final conclusion:
\begin{equation}
\overline{I}(X)=\overline{I}_L(X).
\end{equation}
\end{proof}
\section{Simulation Results}\label{simulation}
\begin{figure}
\begin{center}
\subfigure[]{
\includegraphics[width=0.4\textwidth]{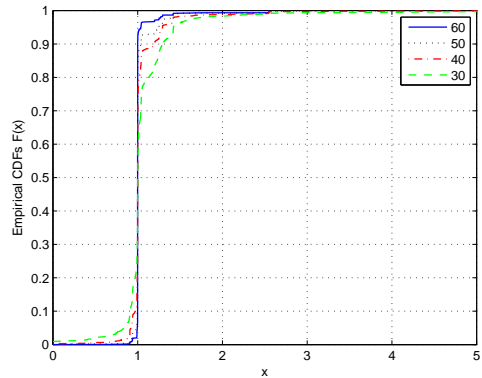}}\\
\subfigure[]{
\includegraphics[width=0.4\textwidth]{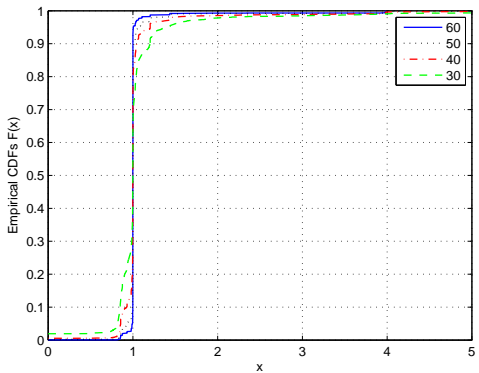}
}
\caption{(a): CDF of the normalized largest eigenvalue from $\mu^{\overline{\gamma}}$ for varying $\overline{\gamma}=30,40,50,60$. (b): CDF of the normalized trace from $\mu^{\overline{\gamma}}$ for varying $\overline{\gamma}=30,40,50,60$} \label{CDF_figure}
\end{center}
\end{figure}
\begin{figure}
\begin{center}
\subfigure[]{
\includegraphics[width=0.4\textwidth]{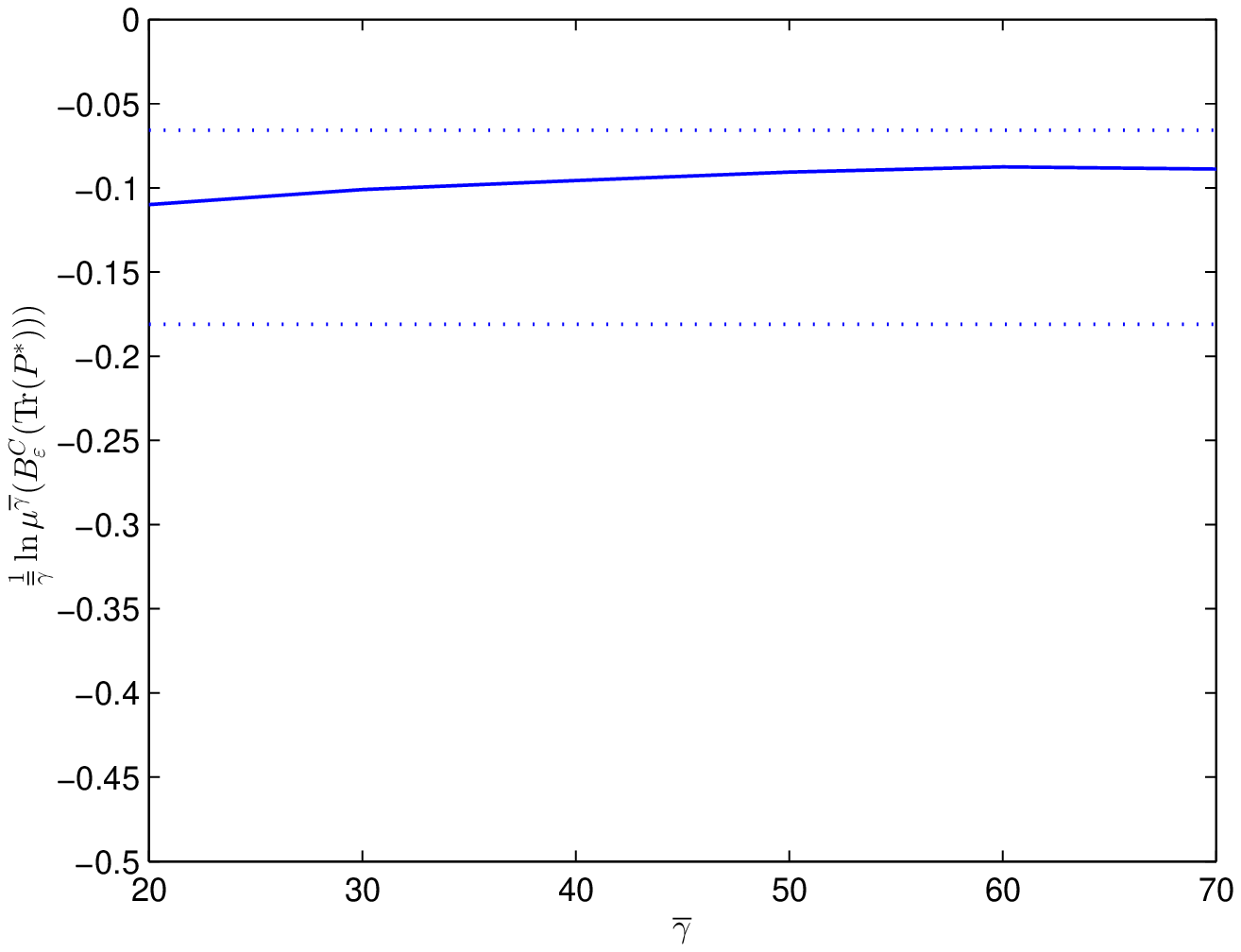}}\\
\subfigure[]{
\includegraphics[width=0.4\textwidth]{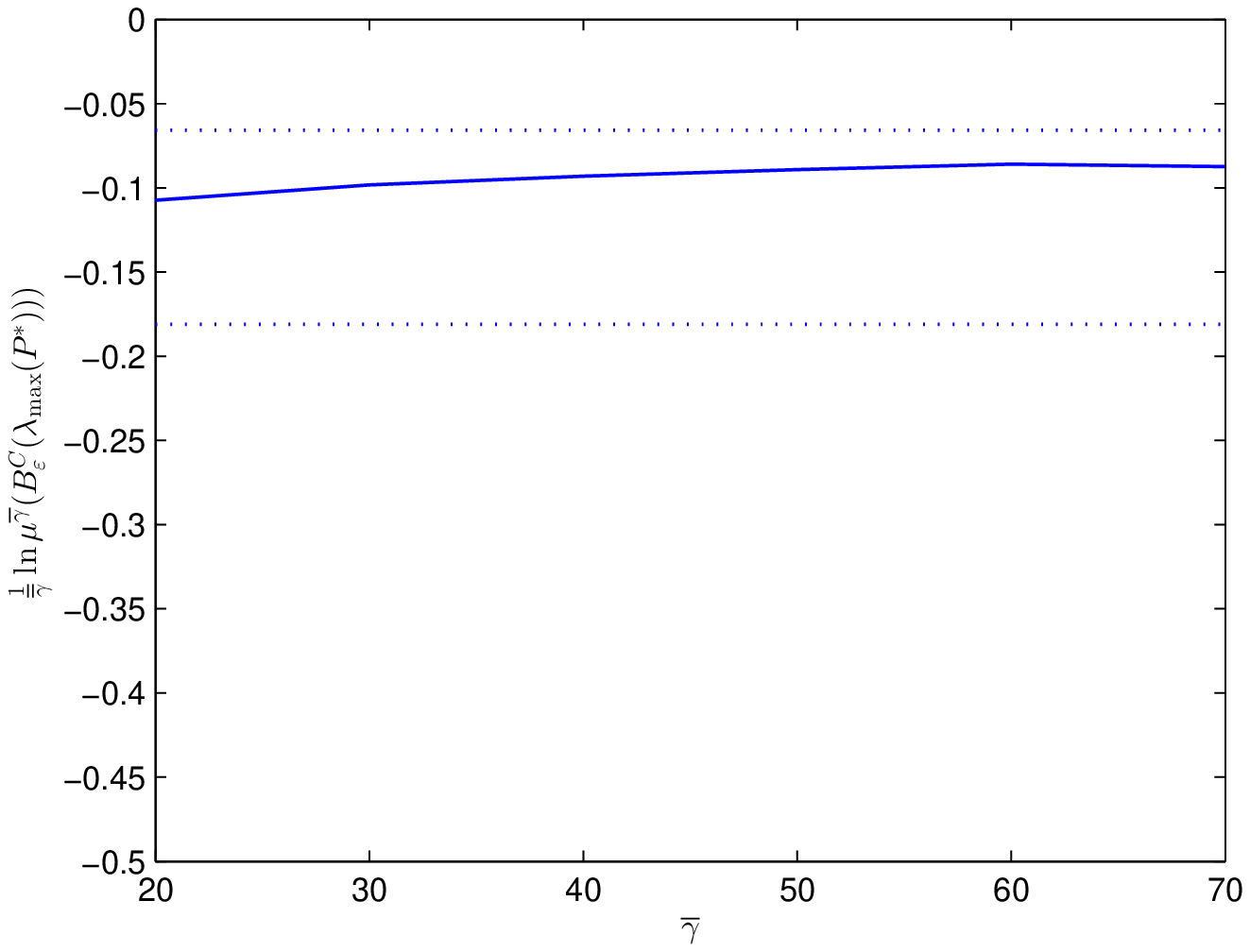}
}
\caption{(a): LD decay exponent for probability of rare event $B_\varepsilon^C(\mbox{Tr}(P^*))$ and the LD upper and lower bounds. (b): LD decay exponent for probability of rare event $B_{\varepsilon'}^C(\lambda_{\max}(P^*))$ and the LD upper and lower bounds} \label{Decay_figure}
\end{center}
\end{figure}
In this section, we simulate the M-GIKF to estimate a $10$-dimensional state-unknown system\footnote{We acknowledge that this is not a large system size; it just illustrates the concept.} with a network of $5$ sensors. The matrices $\mathcal{F},~\mathcal{C}_n$, and $\mathcal{Q}$ satisfy Assumptions S.1 and D.1. The simulation is based on the example of distributed observation dissemination protocol discussed in Section~\ref{ex-dist}, in which this protocol does not use knowledge of global topology and is the simplest random walk on the graph with uniform (unoptimized) neighbor selection. By tuning the link selection probabilities (using full knowledge of global topology), it could be possible to perform better. The protocol in Section~\ref{ex-dist} is just an example of possible protocols, while the theoretical analysis in our paper is protocol independent.

We study the behavior of $\mu^{\overline{\gamma}}$ for different values of $\overline{\gamma}$. We iterate the RRE $10^4$ times to ensure the error covariance sequence at a randomly selected sensor converged in distribution to $\mu^{\overline{\gamma}}$ as shown in Theorems~\ref{main1} and~\ref{main2}, where we simulate $5,000$ samples for each $\overline{\gamma}$. In order to graphically present the distribution for the covariance matrix, we focus on its largest eigenvalue and trace here. The resulting empirical Cumulative Distribution Functions (CDFs) of the normalized largest eigenvalue $\lambda_{\max}(\cdot)$ and the normalized trace$\mbox{ Tr}(\cdot)$ (which is the conditional mean-squared error) of the error covariance matrices are plotted in Fig.~\ref{CDF_figure}, where the x-axis is $\lambda_{\max}(\cdot)/\lambda_{\max}(P^*)$ and $\mbox{Tr}(\cdot)/\mbox{Tr}(P^*)$, respectively in Fig.~\ref{CDF_figure} (a) and Fig.~\ref{CDF_figure} (b). As $\overline{\gamma}$ increases, we see that the empirical measure $\mu^{\overline{\gamma}}$ converges in distribution to the Dirac measure $\delta_{P^*}$ of $P^*$.

Then we simulate the LD decay exponent of the rare event $B_\varepsilon^C(\mbox{Tr}(P^*))$ with $\varepsilon=\mbox{Tr}(P^*)/2$, and the LD decay exponent of the rare event $B_{\varepsilon'}^C(\lambda_{\max}(P^*))$ with $\varepsilon'=\lambda_{\max}(P^*)/2$. For each $\overline{\gamma}$, we estimate the LD decay exponents $\frac{1}{\overline{\gamma}}\ln\mathbb{\mu}^{\overline{\gamma}}(B_\varepsilon^C(\lambda_{\max}(P^*)))$ and $\frac{1}{\overline{\gamma}}\ln\mathbb{\mu}^{\overline{\gamma}}(B_\varepsilon^C(\mbox{Tr}(P^*)))$ by using the samples obtained above for calculating the empirical CDFs. Then we take effort to numerically calculate the LD lower and upper bounds. From Theorem~\ref{th:ldp}, the LD upper bound for the rare event $B_\varepsilon^C(\mbox{Tr}(P^*))$ can be obtained as the negative infimum of $\overline{I}(\cdot)$ over the set of rare events, and the LD lower bound for the rare event $B_\varepsilon^C(\mbox{Tr}(P^*))$ can be obtained as the negative infimum of $\underline{I}(\cdot)$ over the set of rare events. Recall \eqref{up-weight}, \eqref{up-rate}, and \eqref{th:ldp1}, we present the LD upper bound for the rare event $B_\varepsilon^C(\mbox{Tr}(P^*))$ as
\begin{align}\label{simu_1}
-\inf_{\mbox{\tiny Tr}(X)\in{\overline{B_\varepsilon^C(\mbox{\tiny Tr}(P^*))}}}&\inf_{\mathcal{R}\in\mathcal{S}^{P^{\ast}}(X)} \min_{(n_{\mbox{\tiny len}(\mathcal{R})},\cdots,n_{1})\in\mathcal{P}_{\mbox{\tiny len}(\mathcal{R})}}\nonumber\\
&\sum_{i=1}^{{\mbox{\tiny len}(\mathcal{R})}}\mathbb{I}_{\jmath_{i}\neq 2^{N}-1}\overline{q}_{n_{i}}(\jmath_{i}),~X \in \mathbb{S}_+^{M},
\end{align}
where $\overline{q}_{n_{i}}(\jmath_{i})$ is defined in \eqref{alpbeta}. Now we present one way to set $\alpha$ in \eqref{alpbeta}. Recall Section~\ref{ex-dist}, where $T_{i}$ is the hitting time starting from sensor $i$ to another particular sensor $n$ in the Markov chain with transition matrix $\overline{A^o}$. Then we have
\begin{equation}
P(T_i>L)=\sum_{n_1,\cdots,n_L\neq n}\overline{A^o}_{n_1n_2}\overline{A^o}_{n_2n_3}\cdots\overline{A^o}_{n_{L-1}n_L},
\end{equation}
and $\alpha$ can be selected as $\alpha=\max_{i}P(T_i>L)$.

For the LD lower bound of the rare event $B_\varepsilon^C(\mbox{Tr}(P^*))$, recall \eqref{up-weight1}, \eqref{up-rate}, and \eqref{th:ldp2}, we present the LD lower bound as
\begin{align}\label{simu_2}
-\inf_{\mbox{\tiny Tr}(X)\in{{B_\varepsilon^C(\mbox{\tiny Tr}(P^*))}^o}}&\inf_{\mathcal{R}\in\mathcal{S}^{P^{\ast}}(X)} \min_{(n_{\mbox{\tiny len}(\mathcal{R})},\cdots,n_{1})\in\mathcal{P}_{\mbox{\tiny len}(\mathcal{R})}}\nonumber\\
&\sum_{i=1}^{{\mbox{\tiny len}(\mathcal{R})}}\mathbb{I}_{\jmath_{i}\neq 2^{N}-1}\underline{q}_{n_{i}}(\jmath_{i}), ~X \in \mathbb{S}_+^{M},
\end{align}
where $\underline{q}_{n_{i}}(\jmath_{i})$ is defined in \eqref{alpbeta}. We could set $\beta$ in \eqref{alpbeta} as $\beta=\min_{i}P(T_i>L)$.

Then the problems of computing the LD upper and lower bounds could be converted to solving the optimization problems in \eqref{simu_1} and \eqref{simu_2} respectively. We could then apply some search method to numerically solve those problems to obtain the LD upper and lower bounds. The same analysis could be applied  for the case of $B_\varepsilon^C(\lambda_{\max}(P^*))$.

Fig.~\ref{Decay_figure} displays the estimated LD decay exponents of the rare events of $B_\varepsilon^C(\mbox{Tr}(P^*))$ and $B_{\varepsilon'}^C(\lambda_{\max}(P^*))$ for different values of $\overline{\gamma}$, and the corresponding LD upper and lower bounds of the decay exponents. The empirically estimated decay exponents in these two rare events perform quite similar, which is due to the fact that $\varepsilon$ and $\varepsilon'$ have the same relative factor $0.5$ for the maximum eigenvalue and the trace of $P^*$, respectively.

Finally, note that the convergence rate with respect to $\overline{\gamma}$ (i.e., the large deviation exponent) may be improved by considering a more sophisticated observation dissemination protocol. For instance, the neighbor selection probabilities in the observation dissemination protocol from Section~\ref{ex-dist} may be optimized for a given communication network structure. This could lead to a faster mixing Markov chain governing the observation dissemination and, hence, for the same rate $\overline{\gamma}$, a sensor could more likely receive the observations of more sensors in each epoch (also see Remark~\ref{rem:obs}), leading to faster convergence of $\{\mathbb{\mu}^{\overline{\gamma}}\}$ to $\delta_{P^{\ast}}$.
\section{Conclusions} \label{conclusion}
We have proposed a distributed Kalman filtering scheme, the Modified Gossip Interactive Kalman Filter (M-GIKF), where the filtered states are exchanged and the observations are propagated at a rate $\overline{\gamma}$ among the sensors over the network graph. We have shown that, for each $\overline{\gamma}>0$, the conditional estimation error covariance at a randomly selected sensor converges weakly (in distribution) to a unique invariant measure $\mu^{\overline{\gamma}}$ of an associated RRE. To prove this, we have interpreted the filtered states as stochastic particles and formulated the resulting random Riccati equation with Markov modulated switching as a random dynamical system. After establishing weak convergence to $\mu^{\overline{\gamma}}$, we have further characterized $\mu^{\overline{\gamma}}$ as $\overline{\gamma} \rightarrow \infty$, showing that $\mu^{\overline{\gamma}}$ satisfies the large deviation upper and lower bounds, providing a tradeoff between communication rate and estimation accuracy. In particular, we have shown that the distributed M-GIKF approaches the centralized performance exponentially fast in $\overline{\gamma}$, the communication rate parameter, in that the measure $\mathbb{\mu}^{\overline{\gamma}}$ converges to $\delta_{P^{\ast}}$ exponentially fast in probability as $\overline{\gamma}\rightarrow\infty$.

\appendices

\section{Proof of Theorem~\ref{thm_convDirac}} \label{proof_convDiarac}
First define the following class of sets:
\begin{equation}
C=\{ \mathcal{F} | \mathcal{F} ~\mbox{is~closed~and}~P^* \in \mathcal{F}\}.
\end{equation}
Then proving this theorem is equivalent to proving the following:
\begin{equation}
\lim_{\overline{\gamma} \rightarrow \infty} d_P(\mu^{\overline{\gamma}},\delta_{P^*}) =0
\end{equation}
where the Prohorov metric $d_P(\mu^{\overline{\gamma}},\delta_{P^*})=\inf \{ \varepsilon >0 ~| ~\mu^{\overline{\gamma}}(\mathcal{F}_\varepsilon) + \varepsilon \geq 1,~\forall \mathcal{F} \in C \}$ is defined in \cite{Jacod-Shiryaev}.

Consider $0 < \varepsilon < 1$ small enough. Then there exists a $\varepsilon_0 > 0$ such that, for every $\mathcal{F} \in C$, we have $B_{\varepsilon_0} (P^*) \subset \mathcal{F}_{\varepsilon} $. The numerical value of the string $\mathcal{R}=P^*$ belongs to $B_{\varepsilon_0}(P^*)$, and hence by (\ref{lb_4}), there exists an integer $r^*$ such that
\begin{align}
  \mu^{\overline{\gamma}}(B_{\varepsilon_0} (P^*))&\geq \prod_{k=1}^{r_0}q_n(\jmath_k)\prod_{k=r_{0}+1}^{r_{0}+r_{\varepsilon_1}}q_{n_k}(2^N-1) \nonumber\\
   &= \prod_{k=1}^{r^*}q_{n_k}(2^N-1),  ~~\mbox{where}~ r^*=r_{0}+r_{\varepsilon_1}.\nonumber
\end{align}
Thus, for all $\mathcal{F} \in C$, we have
\begin{equation}
\mu^{\overline{\gamma}} (\mathcal{F}_{\varepsilon})\geq \mu^{\overline{\gamma}}(B_{\varepsilon_0} (P^*)) \geq \prod_{k=1}^{r^*}q_{n_k}(2^N-1). 
\end{equation}

Since $q_{n_k}(2^N-1) \rightarrow 1$ as $\overline{\gamma} \rightarrow \infty$, we have for $\overline{\gamma} \rightarrow \infty$
\begin{equation}
\mu^{\overline{\gamma}}(\mathcal{F}_\varepsilon) + \varepsilon \geq \prod_{k=1}^{r^*}q_{n_k}(2^N-1)+ \varepsilon \geq 1.
\end{equation}

Then, following the definition of $d_P(\mu^{\overline{\gamma}},\delta_{P^*})$, when $\overline{\gamma} \rightarrow \infty$, we have
\begin{equation}
d_P(\mu^{\overline{\gamma}},\delta_{P^*}) \leq \varepsilon,~ \overline{\gamma} \rightarrow \infty.
\end{equation}
Hence,
\begin{equation}
\lim_{\overline{\gamma} \rightarrow \infty} d_P(\mu^{\overline{\gamma}},\delta_{P^*}) \leq \varepsilon.
\end{equation}
Since $\varepsilon >0$ is arbitrary, by considering the limit as $\varepsilon \rightarrow 0$, we conclude that $\lim_{\overline{\gamma} \rightarrow \infty} d_P(\mu^{\overline{\gamma}},\delta_{P^*})=0$.
\section{Proof of Lemma~\ref{lem-lowerb}} \label{proof_lem-lowerb}
Since the sequence $\{P_{n}(k)\}$ converges weakly (in distribution) to $\mu^{\bar{\gamma}}$, we have
\begin{equation} \label{lb_1}
    \limsup_{k\rightarrow \infty} \mathbb{P}\!\left(P_{n}(k)\in F\right) \!\leq\!
    \mu^{\overline{\gamma}}(F),\!~\forall ~\mbox{closed set space}~ F \!\subset\! \mathbb{S}_{+}^M.
\end{equation}

Consider a measurable set $\Gamma\in \mathcal{B}(\mathbb{S}_{+}^M)$. Note that if $\Gamma$ has an
empty interior $\Gamma^{\circ}$, the assertion in \eqref{eq_lowerb} holds trivially since the right-hand side becomes $-\infty$. We thus consider the non-trivial case in which $\Gamma^{\circ}\neq\emptyset$. Let $X \in \Gamma^\circ \cap \mathcal{D}_{\underline{I}}$, with $\mathcal{D}_{\underline{I}}$ as the
effective domain of $\underline{I}(\cdot)$, i.e., the set on which $\underline{I}(\cdot)$ is
finite. There exists a small enough $\varepsilon > 0$, such that the closed ball
$\overline{B}_{\varepsilon}(X) \in \Gamma^{\circ}$. Then, from (\ref{lb_1}), we have
\begin{equation}\label{lb_2}
    \mu^{\overline{\gamma}}(\Gamma^\circ)\geq\mu^{\overline{\gamma}}\left(\overline{B}_\varepsilon(X)\right)\geq\limsup_{t\rightarrow
    \infty} \mathbb{P}\left(P_{n}(k)\in \overline{B}_\varepsilon(X) \right).
\end{equation}

Now we calculate the right-hand side of (\ref{lb_2}). The set $\mathcal{S}^{P^*}(X)$ is non-empty, due to
the fact that $X\in\mathcal{D}_{\underline{I}}$ implying that $\underline{I}(X)$ is finite and $\mathcal{S}^{P^*}(X)$ is non-empty. Hence, for some $r_0 \in \mathbb{T}_+$ and
$\jmath_1,...,\jmath_{r_0}\in \mathfrak{P}$,~ we have a string
$\mathcal{R}=(f_{\jmath_1},...,f_{\jmath_{r_0}},P^*)\in \mathcal{S}^{P^*}(X)$. Define the function
$g$: $\mathbb{S}^M_+\mapsto\mathbb{S}^M_+$ by $g(Y)=f_{\jmath_1}\circ ... \circ
f_{\jmath_{r_0}}(Y)$. Since $g$ is continuous, there exists $\varepsilon_1 > 0$ such that
\begin{equation}
    \left\|g(Y)-g(P^*)\right\|\leq \varepsilon, \forall ~Y\in \overline{B}_{\varepsilon_1}(P^*).
\end{equation}

With Proposition~\ref{unif_conv} (ii), for $\varepsilon_1 >0$, there exists $r_{\varepsilon_1}$ such that
\begin{equation}
    \left\|f_{2^N-1}^r(Y)-P^*\right\|\leq \varepsilon_1, \forall~r\geq r_{\varepsilon_1},~Y\in \mathbb{S}^M_+.
\end{equation}
For any $r\in \mathbb{T}_+$ such that $r\geq r_0+r_{\varepsilon_1}$ and any string $\mathcal{R}_1 \in
\mathcal{S}^{P_0}_r$ of the form
\begin{equation}
    \mathcal{R}_1 = \left\{f_{\jmath_1},...,f_{\jmath_{r_0}},f_{2^N-1}^{r_{\varepsilon_1}},f_{i_1},...,f_{i_{r-r_0-r_{\varepsilon_1}}},
    P_0\right\}\nonumber
\end{equation}
where $f_{i_1},...,f_{i_{r-r_0-r_{\varepsilon_1}}} \in \mathfrak{P}$, it follows that
\begin{align}
    &\left\|\mathcal{N}(\mathcal{R}_1)-X\right\|=\left\|\mathcal{N}(\mathcal{R}_1)-\mathcal{N}(\mathcal{R})\right\|\nonumber\\
    &=\left\|g\left(f_{2^N-1}^{r_{\varepsilon_1}}(f_{i_1},...,f_{i_{r-r_0-r_{\varepsilon_1}}}(P_0))\right)-g(P^*)\right\|\leq \varepsilon,\nonumber
\end{align}
which is derived from the fact that
\begin{equation}
    \left\|f_{2^N-1}^ {r_{\varepsilon_1}}(f_{i_1},...,f_{i_{r-r_0-r_{\varepsilon_1}}}(P_0))-P^*\right\|\leq \varepsilon_1.\nonumber
\end{equation}
Therefore,
\begin{equation}
    \mathcal{N}(\mathcal{R}_1)\in \overline{B}_\varepsilon(X).\nonumber
\end{equation}
For $r\geq {r_0+r_{\varepsilon_1}}$, define the set of strings
\begin{align}
    \mathcal{R}_t=&\left\{\left.\left( f_{\jmath_1},...,f_{\jmath_{r_0}},f_{2^N-1}^{r_{\varepsilon_1}},f_{i_1},...,f_{i_{r-r_0-r_{\varepsilon_1}}},
    P_0 \right)\right| \right.\nonumber\\
    &~~\left. f_{i_1},...,f_{i_{r-r_0-r_{\varepsilon_1}}} \in \mathfrak{P} \right\}.
\end{align}
Then, it follows that $ \mathcal{N}(\mathcal{R}_2)\in \overline{B}_\varepsilon(X),
\forall~\mathcal{R}_2\in\mathcal{R}_t$. Thus, for $r\geq
r_0+r_{\varepsilon_1}$, we have
\begin{align}\label{lb_3}
    &\mathbb{P}\left(P_{n}(k)\in \overline{B}_\varepsilon(X)\right)\geq
    \mathbb{P}\left(P_{n}(k)\in \mathcal{N}(\mathcal{R}_t)\right)\nonumber\\
    &=\sum_{{i_1},...,{i_{r-r_0-r_{\varepsilon_1}}} \in \mathfrak{P}}
    \left[\prod_{k=1}^{r_0}q_{n_k}(\jmath_k)\right]
    \left[\prod_{k=r_{0}+1}^{r_{0}+r_{\varepsilon_1}}{q_{n_k}(2^N-1)}\right]\nonumber\\
    &~~~~\left[\prod_{k'=1}^{r-r_0-r_{\varepsilon_1}}q_{n_{k'}}(i_{k'})\right]\nonumber\\
    &=\prod_{k=1}^{r_0}q_n(\jmath_k)\prod_{k=r_{0}+1}^{r_{0}+r_{\varepsilon_1}}q_{n_k}(2^N-1).
\end{align}

From (\ref{lb_2}) and (\ref{lb_3}), there exists
\begin{equation}\label{lb_4}
  \mu^{\overline{\gamma}}(\Gamma^\circ)\geq
\prod_{k=1}^{r_0}q_n(\jmath_k)\prod_{k=r_{0}+1}^{r_{0}+r_{\varepsilon_1}}q_{n_k}(2^N-1)
\end{equation}
and hence
\begin{align}
    \ln \mu^{\overline{\gamma}}(\Gamma^\circ) & \geq \sum_{k=1}^{r_0} \mathbb{I}_{\jmath_k\neq {2^N-1}} \ln
    q_{n_k}(\jmath_k)+ \sum_{k=1}^{r_0} \mathbb{I}_{\jmath_k = {2^N-1}} \ln
    q_{n_k}(\jmath_k)\nonumber\\
&~~~+\sum_{k=r_{0}+1}^{r_{0}+r_{\varepsilon_1}}\ln q_{n_k}(2^N-1) .
\end{align}
Since $\lim_{\overline{\gamma} \rightarrow \infty} q_{n_k}(2^N-1) =1 $, i.e., the probability of each sensor obtaining the full set of observations through the observation dissemination protocol approaches 1 as the communication rate $\overline{\gamma}\rightarrow\infty$, we have
\begin{align}
    \liminf_{\overline{\gamma} \rightarrow \infty} \frac{\ln \mu^{\overline{\gamma}}(\Gamma^\circ)}{\overline{\gamma}} &\geq \liminf_{\overline{\gamma} \rightarrow \infty} \sum_{k=1}^{r_0} \mathbb{I}_{\jmath_k\neq
{2^N-1}} \frac{1}{\overline{\gamma}}\ln q_{n_k}(\jmath_k)\nonumber\\
&\geq \sum_{k=1}^{r_0} \mathbb{I}_{\jmath_k\neq {2^N-1}} \liminf_{\overline{\gamma} \rightarrow
\infty} \frac{1}{\overline{\gamma}}\ln q_{n_k}(\jmath_k) \nonumber\\
&\geq -\sum_{k=1}^{r_0}
\mathbb{I}_{\jmath_k\neq {2^N-1}} \underline{q}_{n_k}(\jmath_k)
\end{align}
where the last inequality follows from the fact that $\liminf_{\overline{\gamma} \rightarrow
\infty} \frac{1}{\overline{\gamma}}\ln q_{n_k}(\jmath_k)\geq  \underline{q}_{n_k}(\jmath_k)$.

Since the above holds for all $(n_{r_0},\cdots,n_{1})\in\mathcal{P}_{r_0}$, we have
\begin{align}
   \liminf_{\overline{\gamma} \rightarrow \infty} \frac{\ln
   \mu^{\overline{\gamma}}(\Gamma^\circ)}{\overline{\gamma}} &\!\geq \!\max_{(n_{r_0},\cdots,n_{1})\in\mathcal{P}_{r_0}} \!\left\{-\sum_{k=1}^{r_0}
\mathbb{I}_{\jmath_k\neq {2^N-1}} \underline{q}_{n_k}(\jmath_k) \right\}\nonumber\\
&=-\underline{w}(\mathcal{R})
\end{align}
with
$\underline{w}(\mathcal{R})=\min_{(n_{r_0},\cdots,n_{1})\in\mathcal{P}_{r_0}}\sum_{i=1}^{r_0}\mathbb{I}_{\jmath_{i}\neq
2^{N}-1}\underline{q}_{n_{i}}(\jmath_{i})$.

Given that the above holds for all $\mathcal{R} \in \mathcal{S}^{P^*}(X)$, we have
\begin{align}
    \liminf_{\overline{\gamma} \rightarrow \infty} \frac{\ln
   \mu^{\overline{\gamma}}(\Gamma^\circ)}{\overline{\gamma}} &\geq \sup_{\mathcal{R} \in
   \mathcal{S}^{P^*}(X)}{(-\underline{w}(\mathcal{R}))}\nonumber\\
   &=-\inf_{{\mathcal{R} \in
   \mathcal{S}^{P^*}(X)}}{\underline{w}(\mathcal{R})}=-\underline{I}(X).
\end{align}
Finally, from the fact that for $X \notin D_{\underline{I}} $, $\underline{I}(X)=\infty$, we have
\begin{equation}
\liminf_{\overline{\gamma} \rightarrow \infty} \frac{\ln
   \mu^{\overline{\gamma}}(\Gamma^\circ)}{\overline{\gamma}} \geq -\inf_{X \in \Gamma^\circ \cap
\mathcal{D}_{\underline{I}} } \underline{I}(X)= -\inf_{X \in \Gamma^\circ } \underline{I}(X).\nonumber
\end{equation}

Since $\Gamma^\circ$ is open, from Proposition~\ref{prop_rate_function} (iii), we have
\begin{equation}
-\inf_{X \in \Gamma^\circ } \underline{I}_L(X)=-\inf_{X \in \Gamma^\circ } \underline{I}(X).
\end{equation}
Thus, the proof is completed.

\section{Proof of Lemma~\ref{ub_lemma1}} \label{proof_ub_lemma1}
We first prove, if $l'(F)< \infty$, there
exists $r_F \in \mathbb{T}_+$ large enough, such that for all $\mathcal{R} \in \mathcal{U}(F)$
with len$(\mathcal{R}) \geq r_F$, we have $\overline{w}(\mathcal{R}^{r_F}) \geq l'(F)$. Then the proof for the case of $l(F)$ naturally follows.

The case $l'(F)=0$ is trivial, by choosing an arbitrary positive $r_F$.
Consider the case $l'(F)\geq \overline{q}$, where $\overline{q}=\min_{1\leq n\leq N, \jmath \in \mathfrak{P}} \overline{q}_n(\jmath)$. Using an inductive argument, it suffices to show that for every $\overline{q} \leq i \leq l'(F)$, there exists a positive $r_F^i \in \mathbb{T}_+$ such that, for $\mathcal{R} \in \mathcal{U}(F)$ with $\mbox{len}(\mathcal{R}) \geq  r_F^i$, we have
\begin{equation} \label{eqn1_up_lm1pf}
\overline{w}\left( \mathcal{R}^{r_F^i}\right)\geq i.
\end{equation}
First we consider the case $i=\overline{q}$. We assume on the contrary that there is no such $r_F^{\overline{q}} \in \mathbb{T}_+$ for which the above property holds. Since $ \mathcal{U}(F)$ is not empty, by Proposition~\ref{string_prop} (i), there exists $r_0 \in \mathbb{T}_+$ such that
\begin{equation} \label{eqn2_up_lm1pf}
\mathcal{S}_r^{P^*} \cap \mathcal{U}(F) \neq \emptyset, ~\forall r \geq r_0.
\end{equation}
Thus, the non-existence of $r_F^{\overline{q}}$ implies that, for every $r\geq r_0$, there exists a string $\mathcal{R}_r \in \mathcal{U}(F)$ with len$(\mathcal{R}_r )\geq r$, such that $\overline{w}(\mathcal{R}_r^r)=0$. Therefore, such $\mathcal{R}_r$ is of the form
\begin{equation}
\mathcal{R}_r= \left(f_{2^N-1}^r,f_{\jmath_1},\cdots,f_{\jmath_{\mbox{\scriptsize len}( \mathcal{R}_r)-r}}, P^*\right)
\end{equation}
where ${\jmath_1},\cdots,{\jmath_{\mbox{\scriptsize len}( \mathcal{R}_r)-r}} \in \mathfrak{P}$. Thus, by denoting
\begin{equation}
X_r=f_{\jmath_1} \circ \cdots \circ f_{\jmath_{\mbox{\scriptsize len}( \mathcal{R}_r)-r}} (P^*),
\end{equation}
we have $\mathcal{N}(\mathcal{R}_r)=f^r_{2^N-1}(X_r)$.
By Proposition~\ref{unif_conv} (ii), the uniform convergence of the Riccati iterates implies that, for an arbitrary $\varepsilon >0$, there exists $r_{\varepsilon}\geq M $, such that, for every $X\in \mathbb{S}_+^M$,
\begin{equation}
\left\|f_{2^{N}-1}^{r}\left(X\right)-P^{\ast}\right\|\leq\varepsilon,~r\geq r_{\varepsilon}
\end{equation}
where the constant $r_{\varepsilon}$ can be chosen independently of $X$. Then, by defining $r'_{\varepsilon}=\max(r_0,r_{\varepsilon})$, we have
\begin{equation}
\left\| \mathcal{N}(\mathcal{R}_r) - P^{\ast} \right\|= \left\|f_{2^{N}-1}^{r}\left(X_r\right)-P^{\ast}\right\|\leq\varepsilon,~r\geq r'_{\varepsilon}.
\end{equation}
Since $\varepsilon$ is arbitrary, the above result shows that the sequence $\{ \mathcal{N} (\mathcal{R}_r)\}_{r\geq r'_{\varepsilon}}$ of numerical results converges to $P^*$ as $r \rightarrow \infty$. By construction, the sequence $\{ \mathcal{N} (\mathcal{R}_r)\}_{r\geq r'_{\varepsilon}}$ belongs to the set $F$, and we conclude that $P^*$ is a limit point of the set $F$. Since $F$ is closed, we have $P^*\in F$, which implies
\begin{equation}
\left\{ \mathcal{R} \in \mathcal{S}^{P^*} | \mathcal{N} (\mathcal{R})=P^* \right\} \subset \mathcal{U}(F).
\end{equation}
Hence, specifically, $\left(f_{2^{N}-1}, P^*\right) \in \mathcal{U}(F)$. Thus the fact that $\overline{w} \left((f_{2^{N}-1}, P^*) \right)=0$ contradicts the hypothesis $l'(F) \geq \overline{q}$.

Therefore, we establish that, if $l'(F) \geq \overline{q}$, there exists $r_F^{\overline{q}}$ satisfying the property in (\ref{eqn1_up_lm1pf}) for $i=\overline{q}$. Note here that, if $l'(F)=\overline{q}$, this step has completed the proof of the lemma. In the general case, to establish (\ref{eqn1_up_lm1pf}) for all $\overline{q} \leq i \leq l'(F)$, we need the following additional steps.

Let us now assume $l'(F) \geq 2\overline{q}$. We further assume on the contrary that the claim in (\ref{eqn1_up_lm1pf}) does not hold for any $\overline{q} \leq i \leq l'(F)$. By the previous step, clearly the claim holds for $i=\overline{q}$. Then, let $k$, $\overline{q} \leq k < l'(F)$, be the largest number such that the claim in (\ref{eqn1_up_lm1pf}) holds for all $\overline{q} \leq i \leq k$, which implies that there exists no $r_F^{k+\overline{q}} \in \mathbb{T}_+$ satisfying the claim in (\ref{eqn1_up_lm1pf}) for $i=k+\overline{q}$. Since the claim holds for $i=k$, there exists $r_F^k \in \mathbb{T}_+$ such that, for all $\mathcal{R} \in \mathcal{U}(F)$ with len$(\mathcal{R}) \geq r_F^k$, we have $\overline{w}(\mathcal{R}^{r_F^k} ) \geq k$. The non-existence of $r_F^{k+\overline{q}}$ and (\ref{eqn2_up_lm1pf}) imply that, for every $r \geq r_0$, there exists a string $\mathcal{R}_r \in \mathcal{U}(F)$ with len$(\mathcal{R}_r) \geq r$ such that $\overline{w}(\mathcal{R}_r^r) < k+\overline{q}$.

Define $r'_0=\max(r_0,r_F^k)$, then by the existence of $r_F^k$ and $\overline{w}(\mathcal{R}_r^r) < k+\overline{q}$, we have $ \overline{w}(\mathcal{R}_r^r) = k$ for $r\geq r'_0 $. Therefore, for $r\geq r'_0 $, $\mathcal{R}_r$ is necessarily of the form
\begin{equation}
\mathcal{R}_r= \left(f_{\jmath_1},\cdots,f_{\jmath_{r_F^k}}, f_{2^N-1}^{r-r_F^k},f_{i_1},\cdots,f_{i_{\mbox{\scriptsize len}( \mathcal{R}_r)-r}}, P^*\right)\nonumber
\end{equation}
where ${\jmath_1},\cdots,{\jmath_{r_f^k}} \in \mathfrak{P}$ such that $\overline{w}(\mathcal{R}_r^{r_F^k})=k$ and ${i_1},\cdots,{i_{\mbox{\scriptsize len}( \mathcal{R}_r)-r}} \in \mathfrak{P}$.

Now consider the sequence $\{ \mathcal{R}_r\}_{r\geq r'_0}$. Define the set $\mathcal{J}$ as $\mathcal{J}=\{ \mathcal{R}_r, ~r\geq r'_0 \}$, and also define the set $\mathcal{J}_1$ as $\mathcal{J}_1=\{ \mathcal{R} \in \mathcal{S}^{P^*}_{r_F^k}|\overline{w}(\mathcal{R} )=k\}$. Consider the mapping $\Theta^{r_F^k}: \mathcal{J} \mapsto \mathcal{J}_1$ by
\begin{equation}
\Theta^{r_F^k}(\mathcal{R})=\mathcal{R}^{r_F^k},~\forall \mathcal{R} \in \mathcal{J}.
\end{equation}

Since the cardinality of the set $\mathcal{J}_1$ is finite and the set $\mathcal{J}$ is countably infinite, for a specific $\mathcal{R}' \in \mathcal{J}_1$, the set $\left(\Theta^{r_F^k}\right)^{-1}(\mathcal{R}')$ is countably infinite. This in turn implies that we can extract a subsequence $\{ \mathcal{R}_{r_m}\}_{m \geq 0}$ from the sequence $\{ \mathcal{R}_{r}\}_{r \geq r'_0}$, such that
\begin{equation}
\mathcal{R}_{r_m}^{r_F^k} = \mathcal{R}', ~\forall m \geq 0.
\end{equation}
In other words, if $\mathcal{R}'$ is represented by $ \mathcal{R}'= \left( f_{\jmath'_1},\cdots,f_{\jmath'_{r_F^k}}, P^* \right)$ for some fixed $ {\jmath'_1},\cdots,{\jmath'_{r_F^k}} \in \mathfrak{P}$, for each $m$ the string $\mathcal{R}_{r_m}$ is of the form
\begin{equation}
\mathcal{R}_{r_m}= \left(f_{\jmath'_1},\cdots,f_{\jmath'_{r_F^k}}, f_{2^N-1}^{r_m-r_F^k},f_{i_1},\cdots,f_{i_{\mbox{\scriptsize len}( \mathcal{R}_{r_m})-r_m}}, P^*\right)\nonumber
\end{equation}
where ${i_1},\cdots,{i_{\mbox{\scriptsize len}( \mathcal{R}_{r_m})-r_m}} \in \mathfrak{P}$ are arbitrary. We denote by
\begin{equation}
X_m=f_{i_1} \circ \cdots \circ f_{i_{\mbox{\scriptsize len}( \mathcal{R}_{r_m})-r_m}} (P^*),~\forall m,
\end{equation}
and we have
\begin{equation}
\mathcal{N}(\mathcal{R}_{r_m})=f_{\jmath'_1}\circ \cdots \circ f_{\jmath'_{r_F^k}} \left( f_{2^N-1}^{r_m-r_F^k} (X_m) \right).
\end{equation}
Since $r_m \rightarrow \infty$ as $m \rightarrow \infty$, by Proposition 14 (ii), we have
\begin{equation}
\lim_{m\rightarrow \infty} f_{2^N-1}^{r_m-r_F^k} (X_m) =P^*.
\end{equation}
Note that the function $f_{\jmath'_1}\circ \cdots \circ f_{\jmath'_{r_F^k}}: \mathbb{S}^M_+ \mapsto \mathbb{S}^M_+ $, being the finite composition of continuous functions, is continuous. We then have \begin{align}
 \lim_{m\rightarrow \infty} \mathcal{N}(\mathcal{R}_{r_m}) &= \lim_{m\rightarrow \infty} f_{\jmath'_1}\circ \cdots \circ f_{\jmath'_{r_F^k}} \left( f_{2^N-1}^{r_m-r_F^k} (X_m) \right) \nonumber\\
&=   f_{\jmath'_1}\circ \cdots \circ f_{\jmath'_{r_F^k}} \left( \lim_{m\rightarrow \infty} f_{2^N-1}^{r_m-r_F^k} (X_m) \right) \nonumber\\
&=  f_{\jmath'_1}\circ \cdots \circ f_{\jmath'_{r_F^k}} \left( P^* \right) \nonumber\\
&=  \mathcal{N} (\mathcal{R}').
\end{align}
Therefore, the sequence $\{ \mathcal{N}(\mathcal{R}_{r_m})  \}_{m\geq 0}$ in $F$ converges to $ \mathcal{N} (\mathcal{R}') $ as $m \rightarrow \infty$. Hence $\mathcal{N} (\mathcal{R}')$ is a limit point in $F$, and $\mathcal{N} (\mathcal{R}') \in F$ as $F$ is closed. This implies that $\mathcal{R}' \in \mathcal{U}_F$. Since $\overline{w}(\mathcal{R}') =k$ and $\mathcal{R}' \in \mathcal{U}_F$, this contradicts the hypothesis that $k < l'(F)$ and thus the claim in (\ref{eqn1_up_lm1pf}) holds for all $\overline{q} \leq i \leq l'(F)$.

To prove, if $l(F)< \infty$, there
exists $r_F \in \mathbb{T}_+$ large enough, such that for all $\mathcal{R} \in \mathcal{U}(F)$
with len$(\mathcal{R}) \geq r_F$, we have $\pi(\mathcal{R}^{r_F}) \geq l(F)$, the method is the same as above, where $l(F)$ becomes a non-negative integer. We choose $r_F$ as the maximum one in these two cases, then the Lemma is proved.


\section{Proof of Lemma~\ref{lem_up_compact}} \label{proof_lem_up_compact}
For $\varepsilon > 0$, define $K_{\varepsilon}$ as the $\varepsilon$-neighborhood of $K$ and
$\overline{K_{\epsilon}}$ as its $\varepsilon$-closure, i.e.,
\begin{equation}
K_{\varepsilon}=\left\{X\in \mathbb{S}_+^{M} | \inf_{Y\in K} \left\| X-Y\right\| <\varepsilon \right\}
\end{equation}
\begin{equation}
\overline{K_{\varepsilon}}=\left\{X\in \mathbb{S}_+^{M} | \inf_{Y\in K} \left\| X-Y\right\| \leq \varepsilon \right\}.
\end{equation}

Since $K_{\varepsilon}$ is open, by the weak convergence of
the sequence $\{P_{n}(r)\}$ to $\mu^{\overline{\gamma}}$, we have
\begin{equation}
    \liminf_{r \rightarrow \infty} \mathbb{P} \left( P_{n}(r) \in K_{\varepsilon}
    \right) \geq \mu^{\overline{\gamma}}(K_{\varepsilon}),
\end{equation}
which implies that
\begin{equation}\label{ub_1}
     \liminf_{r \rightarrow \infty} \mathbb{P} \left( P_{n}(r) \in
     \overline{K_{\varepsilon}}
    \right) \geq \mu^{\overline{\gamma}}(K).
\end{equation}
Now we calculate the left-hand side of (\ref{ub_1}). Since $\overline{K_{\varepsilon}}$ is closed, the results of
Lemma~\ref{ub_lemma1} apply. Recall the definition of $\mathcal{U}(F)$. Also, for every $r\in \mathbb{T}_+$ and the closed set $F$, we define
\begin{equation}
     \mathcal{U}^r(F)=\mathcal{U}(F)\cap \mathcal{S}_r^{P^*}.
\end{equation}
We consider first $l(K)<\infty$ and $l'(K)<\infty$ (i.e., $\mathcal{U}(K)$ is non-empty). We then have
\begin{equation}\label{ub_2}
     \mathbb{P} \left( P_{n}(r) \in \overline{K_{\varepsilon}} \right)=
     \mathbb{P} \left( P_{n}(r) \in \mathcal{N}\left(\mathcal{U}^r\left(\overline{K_{\varepsilon}}\right)\right) \right).
\end{equation}
Since $K \subset \overline{K_{\varepsilon}}$ and $l(K) < \infty$, we have $l(\overline{K_{\varepsilon}})
< \infty$. Thus, since $\overline{K_{\varepsilon}}$ is closed, Lemma~\ref{ub_lemma1} shows that there
exists $r_{\overline{K_{\varepsilon}}} \in \mathbb{T}_+$, such that, for any string $\mathcal{R} \in
\mathcal{U}(\overline{K_{\varepsilon}})$ with len$(\mathcal{R}) \geq r_{\overline{K_{\varepsilon}}}$, we
have $\pi\left(\mathcal{R}^{r_{\overline{K_{\varepsilon}}}}\right) \geq l(\overline{K_{\varepsilon}})$ and $\overline{w}\left(\mathcal{R}^{r_{\overline{K_{\varepsilon}}}}\right) \geq l'(\overline{K_{\varepsilon}})$. In other words, for all $r\geq r_{\overline{K_{\varepsilon}}}$, we have
\begin{equation}
\pi\left(\mathcal{R}^{r_{\overline{K_{\varepsilon}}}}\right) \geq l(\overline{K_{\varepsilon}}),~\overline{w}\left(\mathcal{R}^{r_{\overline{K_{\varepsilon}}}}\right) \geq l'(\overline{K_{\varepsilon}}),~\forall~\mathcal{R} \in \mathcal{U}^r(\overline{K_{\varepsilon}}).
\end{equation}
Now consider $r\geq r_{\overline{K_{\varepsilon}}}$ and define $\mathcal{J}_r^{P^*}$ as the set of
strings
\begin{equation}
\mathcal{J}_r^{P^*}=\left\{\mathcal{R}\in \mathcal{S}_r^{P^*}|\pi
\left(\mathcal{R}^{r_{\overline{K_{\varepsilon}}}}\right) \geq l(\overline{K_{\varepsilon}}),~\overline{w}\left(\mathcal{R}^{r_{\overline{K_{\varepsilon}}}}\right) \geq l'(\overline{K_{\varepsilon}}) \right\}.
\end{equation}
The set $\mathcal{J}_r^{P^*}$ consists of all strings $\mathcal{R}$ with length $r$ such that we have at least $l(\overline{K_{\varepsilon}})$ occurrences of non-$f_{2^N-1}$ and the value of $\overline{w}$ no less than $l'(\overline{K_{\varepsilon}})$ in the truncated string $\mathcal{R}^{r_{\overline{K_{\varepsilon}}}}$.

For $r\geq r_{\overline{K_{\varepsilon}}}$, it is obvious that the following holds:
\begin{equation}\label{ub_3}
\mathcal{U}^r(\overline{K_{\varepsilon}}) \subset \mathcal{J}_r^{P^*} \subset \mathcal{S}_r^{P^*}.
\end{equation}
Clearly, we have that, for $r\geq r_{\overline{K_{\varepsilon}}}$,
\begin{align}\label{up_4}
&\mathbb{P} \left( P_{n}(r) \in \mathcal{N}(\mathcal{J}_r^{P^*})\right)\nonumber\\
&=\sum_{\mathcal{R}\in \mathcal{J}_r^{P^*}} \prod_{k=1}^{r} q_{n_k}(\jmath_k)
\leq (2^N-1)^{\underline{l}(\overline{K_{\varepsilon}})} \left(\begin{array}{l}
r_{\overline{K_{\varepsilon}}}\\
\underline{l}(\overline{K_{\varepsilon}})
\end{array}\right) \nonumber\\
&~~~~~~\times \max_{\mbox{\tiny$\begin{array}{c}
{(n_1,\cdots,n_{r_{\overline{K_{\varepsilon}}}})\in \mathcal{P}_{r_{\overline{K_{\varepsilon}}}} }\\
{\mathcal{R}:\pi
(\mathcal{R}^{r_{\overline{K_{\varepsilon}}}})=\underline{l}(\overline{K_{\varepsilon}})}\end{array}$}} \prod_{k=1}^{r_{\overline{K_{\varepsilon}}}} q_{n_k}(\jmath_k|{\jmath_k \neq 2^N-1}),
\end{align}
where $\underline{l}(\overline{K_{\varepsilon}})=\min_{\mathcal{R}\in \mathcal{J}_r^{P^*}} \pi
\left(\mathcal{R}^{r_{\overline{K_{\varepsilon}}}}\right)$. Then, from (\ref{ub_2}) and (\ref{ub_3}), we have
\begin{align}
    &\mu^{\overline{\gamma}}(K) \leq \liminf_{r\rightarrow \infty} \mathbb{P} \left( P_{n}(r) \in \mathcal{N}\left(\mathcal{U}^r(\overline{K_{\varepsilon}})\right)\right) \nonumber\\
     &\leq \liminf_{r\rightarrow \infty}\mathbb{P} \left( P_{n}(r) \in \mathcal{N}(\mathcal{J}_r^{P^*})\right) \nonumber\\
     & \leq (2^N-1)^{\underline{l}(\overline{K_{\varepsilon}})}\left(\begin{array}{l} r_{\overline{K_{\varepsilon}}}\nonumber\\
\underline{l}(\overline{K_{\varepsilon}})
\end{array}\right)\nonumber\\
&~~\max_{\mbox{\tiny$\begin{array}{c}
{(n_1,\cdots,n_{r_{\overline{K_{\varepsilon}}}})\in \mathcal{P}_{r_{\overline{K_{\varepsilon}}}} }\\
{\mathcal{R}:\pi
(\mathcal{R}^{r_{\overline{K_{\varepsilon}}}})=\underline{l}(\overline{K_{\varepsilon}})}\end{array}$}} \prod_{k=1}^{r_{\overline{K_{\varepsilon}}}}
 q_{n_k}(\jmath_k|{\jmath_k \neq 2^N-1}).\nonumber
\end{align}

Taking the logarithm, dividing by $\overline{\gamma}$ on both sides, and taking the limits, we have
\begin{align}
    &\limsup_{\overline{\gamma}\rightarrow \infty} \frac{\ln
    \mu^{\overline{\gamma}}(K)}{\overline{\gamma}}\nonumber\\
    &\leq \limsup_{\overline{\gamma}\rightarrow
    \infty} \max_{\mbox{\tiny$\begin{array}{c}
{(n_1,\cdots,n_{r_{\overline{K_{\varepsilon}}}})\in \mathcal{P}_{r_{\overline{K_{\varepsilon}}}} }\\
{\mathcal{R}:\pi
(\mathcal{R}^{r_{\overline{K_{\varepsilon}}}})=\underline{l}(\overline{K_{\varepsilon}})}\end{array}$}} \sum_{k=1}^{r_{\overline{K_{\varepsilon}}}}\mathbb{I}_{\jmath_k \neq 2^N-1}\frac{1}{\overline{\gamma}} \ln q_{n_k}(\jmath_k) \nonumber\\
    &\leq \max_{\mbox{\tiny$\begin{array}{c}
{(n_1,\cdots,n_{r_{\overline{K_{\varepsilon}}}})\in \mathcal{P}_{r_{\overline{K_{\varepsilon}}}} }\\
{\mathcal{R}:\pi
(\mathcal{R}^{r_{\overline{K_{\varepsilon}}}})=\underline{l}(\overline{K_{\varepsilon}})}\end{array}$}} \sum_{k=1}^{r_{\overline{K_{\varepsilon}}}}\mathbb{I}_{\jmath_k \neq 2^N-1}
    \left( \limsup_{\overline{\gamma}\rightarrow
    \infty} \frac{1}{\overline{\gamma}}
    \ln q_{n_k}(\jmath_k)  \right)\nonumber\\
&\leq  - \min_{\mbox{\tiny$\begin{array}{c}
{(n_1,\cdots,n_{r_{\overline{K_{\varepsilon}}}})\in \mathcal{P}_{r_{\overline{K_{\varepsilon}}}} }\\
{\mathcal{R}:\pi
(\mathcal{R}^{r_{\overline{K_{\varepsilon}}}})=\underline{l}(\overline{K_{\varepsilon}})}\end{array}$}} \sum_{k=1}^{r_{\overline{K_{\varepsilon}}}}\mathbb{I}_{\jmath_k \neq 2^N-1}
\overline{q}_{n_k}(\jmath_k) \leq -l^{'}(\overline{K_{\varepsilon}}).\nonumber
\end{align}

Then, taking the limit as $\varepsilon \rightarrow 0$ on both sides leads to
\begin{equation} \label{up_last2}
\limsup_{\overline{\gamma}\rightarrow \infty} \frac{\ln
    \mu^{\overline{\gamma}}(K)}{\overline{\gamma}} \leq -\lim_{\varepsilon \rightarrow
    0}l'(\overline{K_{\varepsilon}}).
\end{equation}
From Proposition~\ref{prop_rate_function} (iii), we have
\begin{equation}
   l'(\overline{K_{\varepsilon}}) =\inf_{X \in {\overline{K_{\varepsilon}}} } \overline{I}(X)  \geq \inf_{X \in {\overline{K_{\varepsilon}}} } \overline{I}_L(X)
\end{equation}
where $\overline{I}(X)=\inf_{\mathcal{R}\in \mathcal{S}^{P^*}(X)}\overline{w}(\mathcal{R})$.

Again, taking the limit as $\varepsilon \rightarrow 0$ and from Proposition~\ref{prop_rate_function} (iv), we have
\begin{equation}\label{up_last1}
   \lim_{\varepsilon \rightarrow
    0}l'(\overline{K_{\varepsilon}}) \geq \lim_{\varepsilon \rightarrow
    0} \inf_{X \in {\overline{K_{\varepsilon}}} } \overline{I}_L(X)= \inf_{X \in {K }} \overline{I}_L (X).
\end{equation}
The lemma then follows from (\ref{up_last2}) and (\ref{up_last1}).

\section{Proof of Lemma~\ref{lem_up_tightness}} \label{proof_lem_up_tightness}
Let $a>0$ be arbitrary and choose $z \in \mathbb{N}$ such that $z\geq a$. From Proposition~\ref{string_prop} (ii), there exists
$\alpha_{P^{*}}\in\mathbb{R}_{+}$ depending on $P^{*}$ only, such
that
\begin{equation}
\label{concatS41}
f_{\hat{\jmath}_{\pi(\mathcal{R})}}\circ f_{\hat{\jmath}_{\pi(\mathcal{R})-1}}\cdots\circ f_{\hat{\jmath}_{1}}\left(\alpha_{P^{*}}I\right)\succeq\mathcal{N}\left(\mathcal{R}\right),~\forall \mathcal{R}\in \mathcal{S}^{P^*}.\nonumber
\end{equation}
 We define $b\in \mathbb{R}_+$ such that $\|f_{\hat{\jmath}_{z}}\circ f_{\hat{\jmath}_{z-1}}\cdots\circ f_{\hat{\jmath}_{1}}\left(\alpha_{P^{*}}I\right)\| <b$. Consider the compact set $K_a=\{X \in \mathbb{S}_+^M |\|X\|\leq b\}$, and also define the closed set $F_b=\{X \in \mathbb{S}_+^M |\|X\|\geq b\}$. From Lemma~\ref{ub_lemma1}, define the set $\mathcal{U}(F_b)$ as
\begin{equation}
\mathcal{U}(F_b)=\left\{\mathcal{R}\in \mathcal{S}^{P^*}|\mathcal{N}(\mathcal{R})\in F_b   \right\}.
\end{equation}
Then, we have the following inclusion:
\begin{equation}
\mathcal{U}(F_b)\subset \left\{\mathcal{R}\in \mathcal{S}^{P^*}|\pi(\mathcal{R})\geq z  \right\}.
\end{equation}
Hence, $l(F_b)=\inf_{\mathcal{R} \in \mathcal{U}(F_b)} \pi(\mathcal{R}) \geq z$.
Since $F_b$ is closed, by Lemma~\ref{ub_lemma1}, there exists $r_{F_b} \in \mathbb{T}_+$ such that
\begin{equation}
\pi(\mathcal{R}^{r_{F_b}}) \geq z,~\forall \mathcal{R}\in \mathcal{U}(F_b).
\end{equation}
To estimate the probability $\mu^{\overline{\gamma}}(K_a^C)$, we follow the method in Lemma~\ref{lem_up_compact}. First, we have the following by weak convergence:
\begin{equation}
\mu^{\overline{\gamma}}(K_a^C) \!\leq\! \liminf_{r \rightarrow \infty} \mathbb{P}\!\left( P_{n}(r) \!\in\! K_a^C \right)\! \leq\!  \liminf_{r \rightarrow \infty} \mathbb{P}\!\left( P_{n}(r)\!\in\! F_b \right).\nonumber
\end{equation}
For $r\in \mathbb{T}_+$, denote the set $\mathcal{J}_r^{P^*}=\mathcal{S}_r^{P^*} \cap \mathcal{U}(F_b).$ For $r\geq r_{F_b}$, similar to (\ref{up_4}), we have
 \begin{align}
&\mathbb{P} \left( {P_{n}(r)} \in F_b\right)=\sum_{\mathcal{R}\in \mathcal{J}_r^{P^*}} \prod_{k=1}^{r} q_{n_k}(\jmath_k)\nonumber\\
&\leq(2^N-1)^{\underline{l}(F_b)}\left(\begin{array}{l} r_{F_b}\nonumber\\
\underline{l}(F_b)
\end{array}\right)\nonumber\\
&~~\max_{\mbox{\tiny$\begin{array}{c}
{(n_1,\cdots,n_{r_{F_b}})\in \mathcal{P}_{r_{F_b}} }\\
{\mathcal{R}:\pi
(\mathcal{R}^{r_{F_b}})=\underline{l}(F_b)}\end{array}$}} \prod_{k=1}^{r_{F_b}}
 q_{n_k}(\jmath_k|{\jmath_k \neq 2^N-1})\nonumber\\
 &\leq(2^N-1)^{\underline{l}(F_b)}\left(\begin{array}{l} r_{F_b}\nonumber\\
\underline{l}(F_b)
\end{array}\right)\nonumber\\
&~~\max_{\mbox{\tiny$\begin{array}{c}
{(n_1,\cdots,n_{r_{z}})\in \mathcal{P}_{r_{z}} }\\
{\mathcal{R}:\pi
(\mathcal{R}^{r_{z}})=z}\end{array}$}} \prod_{k=1}^{r_{z}}
 q_{n_k}(\jmath_k|{\jmath_k \neq 2^N-1}).\nonumber
\end{align}
Arguments similar to those in Lemma~\ref{lem_up_compact} lead to
\begin{align}
&\mu^{\overline{\gamma}}(K_a^C) \leq(2^N-1)^{\underline{l}(F_b)}\left(\begin{array}{l} r_{F_b}\nonumber\\
\underline{l}(F_b)
\end{array}\right) \nonumber\\
&\max_{\mbox{\tiny$\begin{array}{c}
{(n_1,\cdots,n_{r_{z}})\in \mathcal{P}_{r_{z}} }\\
{\mathcal{R}:\pi
(\mathcal{R}^{r_{z}})=z}\end{array}$}} \prod_{k=1}^{r_{z}}
 q_{n_k}(\jmath_k|{\jmath_k \neq 2^N-1}),\nonumber
\end{align}
from which we obtain,
\begin{align}
 &\limsup_{\overline{\gamma}\rightarrow \infty} \frac{\ln\mu^{\overline{\gamma}}(K_a^C)}{\overline{\gamma}} \leq  \nonumber\\
    &-\min_{\mbox{\tiny$\begin{array}{c}
{(n_1,\cdots,n_{r_{z}})\in \mathcal{P}_{r_{z}} }\\
{\mathcal{R}:\pi
(\mathcal{R}^{r_{z}})=z}\end{array}$}} \sum_{i=1}^{r_z}\mathbb{I}_{\jmath_{i}\neq 2^{N}-1}\overline{q}_{n_{i}}(\jmath_{i})\leq -W(z),\nonumber
\end{align}
where $W(z)$ is defined as
\begin{align}
W(z)=\min_{\mbox{\tiny$\begin{array}{c}
{(n_1,\cdots,n_{{\mbox{\scriptsize len}(\mathcal{R})}})\in \mathcal{P}_{{\mbox{\scriptsize len}(\mathcal{R})}} }\\
{\mathcal{R}:\pi
(\mathcal{R})=z}\end{array}$}} \sum_{i=1}^{{\mbox{\scriptsize len}(\mathcal{R})}}\mathbb{I}_{\jmath_{i}\neq 2^{N}-1}\overline{q}_{n_{i}}(\jmath_{i}).
\end{align}
Obviously, $W(z)\geq W(a)$ follows from $z\geq \lfloor a \rfloor$. Then we have $\limsup_{\overline{\gamma}\rightarrow \infty} \frac{\ln
    \mu^{\overline{\gamma}}(K_a^C)}{\overline{\gamma}} \leq -W(a)$.

\bibliographystyle{IEEEtran}
\bibliography{IEEEabrv,CentralBib}
\begin{IEEEbiographynophoto}{Di Li}(S'13) received the B.Eng. degree in Automation Engineering and the M.S. degree in Information and Communication Engineering from Beijing University of Posts and Telecommunications, Beijing, China, in 2008 and 2011, respectively. He is currently pursuing a Ph.D. degree in Electrical and Computer
 Engineering at Texas A\&M University, College Station, TX. His research interests include statistical signal processing, distributed estimation and detection, and change-point detection.
\end{IEEEbiographynophoto}
\begin{IEEEbiographynophoto}{Soummya Kar} (S'05--M'10) received
 the B.Tech. degree in Electronics and Electrical
 Communication Engineering from the Indian Institute
 of Technology, Kharagpur, India, in May 2005
 and the Ph.D. degree in electrical and computer
 engineering from Carnegie Mellon University,
 Pittsburgh, PA, in 2010. From June 2010 to May 2011 he was with the EE Department at Princeton University as a Postdoctoral Research Associate.

 He is currently an Assistant Research Professor of ECE at Carnegie Mellon University.

 His research interests include performance
 analysis and inference in large-scale networked systems, adaptive stochastic
 systems, stochastic approximation, and large deviations.
\end{IEEEbiographynophoto}

\begin{IEEEbiographynophoto}{Jos\'e M.~F.~Moura}(S'71--M'75--SM'90--F'94) is the Philip L.~and Marsha Dowd University Professor at Carnegie Mellon University (CMU). He received the engenheiro electrot\'{e}cnico degree from Instituto Superior T\'ecnico (IST), Lisbon, Portugal, and the M.Sc., E.E., and D.Sc.~degrees in EECS from MIT, Cambridge, MA. He was on the faculty at IST and a visiting Professor at MIT and NYU. He is founding director of ICTI@CMU, a large education and research program between CMU and Portugal, www.cmuportugal.org. His research interests are on data science and include statistical, algebraic, and distributed signal processing on graphs. He has published over 470 papers, has eleven patents issued by the US Patent Office, and cofounded SpiralGen.

Dr.~Moura was elected 2016~IEEE Vice-President of Technical Activities. He served as IEEE Division~IX Director and IEEE Board Director and on several IEEE Boards. He was \emph{President} of the \emph{IEEE Signal Processing Society}(SPS), served as \emph{Editor in Chief} for the {\em IEEE Transactions in SP}, interim \emph{Editor in Chief} for the \emph{IEEE SP Letters}, and member of several Editorial Boards, including \emph{IEEE Proceedings}, \emph{IEEE SP Magazine}, and the ACM \emph{Transactions on Sensor Networks}.

Dr.~Moura is member of the \textit{US National Academy of Engineering}, corresponding member of the \textit{Academy of Sciences of Portugal}, \emph{Fellow} of the \emph{IEEE}, and \emph{Fellow} of the \textit{AAAS}. He received the IEEE Signal Processing Society \textit{Technical Achievement Award} and the IEEE Signal Processing Society \textit{Society Award}.
%
\end{IEEEbiographynophoto}
\begin{IEEEbiographynophoto}{H. Vincent Poor} (S'72--M'77--SM'82--F'87) received the Ph.D. degree in electrical engineering and computer science from Princeton University in 1977.  From 1977 until 1990, he was on the faculty of the University of Illinois at Urbana-Champaign. Since 1990 he has been on the faculty at Princeton, where he is the Dean of Engineering and Applied Science, and the Michael Henry Strater University Professor of Electrical Engineering. He has also held visiting appointments at several other institutions, most recently at Imperial College and Stanford. Dr. Poor's research interests are in the areas of information theory, stochastic analysis and statistical signal processing, and their applications in wireless networks and related fields. Among his publications in these areas is the recent book \emph{Mechanisms and Games for Dynamic Spectrum Allocation} (Cambridge University Press, 2014).

Dr. Poor is a member of the National Academy of Engineering and the National Academy of Sciences, and is a foreign member of Academia Europaea and the Royal Society. He is also a fellow of the American Academy of Arts and Sciences, the Royal Academy of Engineering (U. K.), and the Royal Society of Edinburgh.  In 1990, he served as President of the IEEE Information Theory Society, in 2004-07 as the Editor-in-Chief of these TRANSACTIONS, and in 2009 as General Co-chair of the IEEE International Symposium on Information Theory, held in Seoul, South Korea. He received a Guggenheim Fellowship in 2002 and the IEEE Education Medal in 2005. Recent recognition of his work includes the 2014 URSI Booker Gold Medal, and honorary doctorates from several universities in Asia and Europe.
\end{IEEEbiographynophoto}

\begin{IEEEbiographynophoto}{Shuguang Cui} (S'99-M'05-SM'12-F'14) received his Ph.D. in Electrical Engineering from Stanford University, California, USA, in 2005, M.Eng. in Electrical Engineering from McMaster University, Hamilton, Canada, in 2000, and B.Eng. in Radio Engineering with the highest distinction (ranked No.1 in the department) from Beijing University of Posts and Telecommunications, Beijing, China, in 1997. He has been working as an associate professor in Electrical and Computer Engineering at the Texas A\&M University, College Station, TX. His current research interests focus on data oriented large-scale information analysis and system design, including large-scale distributed estimation and detection, information theoretical approaches for large data set analysis, and complex cyber-physical system design.

His research papers have been highly cited; according to the data on 2/16/2014 from Web of Science, 8 of them had been ranked within the top 10 most highly cited papers (one of them ranked No.1 and three of them ranked No.2) among all published over the same periods in the corresponding journals. He was selected as the Thomson Reuters Highly Cited Researcher in 2014. He was the recipient of the {\textbf{IEEE Signal Processing Society 2012 Best Paper Award}}. He has been serving as the TPC co-chairs for many IEEE conferences. He has also been serving as the associate editors for IEEE Transactions on Signal Processing and IEEE Transactions on Wireless Communications. He is the elected member for IEEE Signal Processing Society SPCOM Technical Committee (2009~2015) and the elected Secretary for IEEE ComSoc Wireless Technical Committee. He is the member of the Steering Committee for the new IEEE Transactions on Big Data, in charge of identifying the first EiC. He was elected as an IEEE Fellow in 2013.
\end{IEEEbiographynophoto}
\end{document}